\renewcommand{\fnum@figure}{Fig. \thefigure}
\newtheorem{mydef}{Definition}
\newtheorem{lemma}{Lemma}
\newtheorem{theorem}{Theorem}
\newtheorem*{theorem*}{Theorem} 
\newtheorem{proposition}{Proposition}
\newtheorem{corollary}{Corollary}
\renewcommand{\ge}{\geqslant}
\renewcommand{\le}{\leqslant}
\DeclareMathOperator{\Unif}{Unif}
\newcommand{\Mod}[1]{\ (\mathrm{mod}\ #1)}
\title{Asymptotically Optimal Adversarial Strategies for the Probability Estimation Framework}
\author{%
Soumyadip Patra\thanks{ \texttt{spatra@uno.edu}}%
\and
Peter Bierhorst\thanks{ \texttt{plbierho@uno.edu}}%
}
\date{{\normalsize Department of Mathematics,\\ University of New Orleans,\\ New Orleans, Louisiana, USA\\[\baselineskip]Dated: \today}}
\begin{document}

\maketitle

\begin{abstract}
The Probability Estimation Framework involves direct estimation of the probability of occurrences of outcomes conditioned on measurement settings and side information. It is a powerful tool for certifying randomness in quantum non-locality experiments. In this paper, we present a self-contained proof of the asymptotic optimality of the method. Our approach refines earlier results to allow a better characterisation of optimal adversarial attacks on the protocol. We apply these results to the (2,2,2) Bell scenario, obtaining an analytic characterisation of the optimal adversarial attacks bound by no-signalling principles, while also demonstrating the asymptotic robustness of the PEF method to deviations from expected experimental behaviour. We also study extensions of the analysis to quantum-limited adversaries in the (2,2,2) Bell scenario and no-signalling adversaries in higher $(n,m,k)$ Bell scenarios.
\end{abstract}

\tableofcontents

\section{Introduction}

Randomness has proven to be a valuable resource for a multitude of tasks, be it computation or communication. In cryptography, access to reliable random bits is essential, since the security of various cryptographic primitives is known to be compromised if the incorporated randomness is of poor quality \cite{10.1109/FOCS.2004.44, 10.1007/978-3-662-44371-2_26, cryptoeprint:2014/623}. In the study of random network modelling, being able to sample random graphs uniformly and (reliably) at random is crucial \cite{Orsini2015}. And for some problems, randomised algorithms are known to vastly outperform their deterministic counterparts \cite{motwani_raghavan_1995}.

A distinction between two notions of randomness, that of \emph{process} and \emph{product}, is discussed in \cite{10.1093/oso/9780198788416.001.0001} (chapter~8). Although both notions are tightly connected, randomness of a process refers to its \emph{unpredictability}, while that of a product refers to a \emph{lack of pattern} in it. An unpredictable process will, with high probability, produce a sequence (a string of bits, say) that is patternless; on the other hand, a seemingly irregular string of bits might not be unpredictable and instead be a probabilistic mixture of pre-recorded information. While product randomness suffices for tasks like Monte Carlo simulations, sampling and those involving randomised algorithms, cryptographic applications involving an adversary necessitate process randomness.

Process randomness, while being non-existent in the strictest interpretation of any classical theory, is permissible in quantum mechanics; an important example of this is quantum non-locality as manifested in a \emph{Bell experiment}. Quintessentially, the set up of a Bell experiment constitutes an entangled quantum system shared between two spatially separated stations $\mathsf{A}$ and $\mathsf{B}$ receiving inputs $\mathsf{x}$ and $\mathsf{y}$, and recording outcomes $\mathsf{a}$ and $\mathsf{b}$, respectively. If after $n$ successive trials the observed correlations between the outcomes conditioned on the settings violate a \emph{Bell inequality} then it can be ruled out that the outcomes were pre-assigned by some probabilistic mixture of deterministic processes. Also, the outcomes are (unpredictably) random, not only to the respective users of the devices at the two stations, but also to an adversary, even to one having a complete understanding of the Bell experiment. This relationship between non-locality in quantum mechanics and its random nature is at the foundation of various device independent random number generation protocols.

Device independence is considered a gold standard in cryptographic tasks such as quantum random number generation and quantum key distribution, in which the respective users are not required to know or trust the inner machinery of their devices, thus treating them as mere black boxes to which they can provide inputs and record outcomes. The only assumption that the experimental setup must satisfy is that the measurement choices of the devices must be uncorrelated with their inner workings. This is the measurement independence assumption, which is ultimately untestable, but is tacitly assumed, arguably, in almost all scientific experiments. The no-signalling condition that the outcome recorded at each station is not influenced by the choice of measurement at the other station, holds throughout the experiment because of a space-like separation between the stations and the impossibility of superluminal signalling in accordance with the special theory of relativity. Furthermore, the adversary trying to simulate the observed statistics may be considered computationally unbounded, a standard that falls under the paradigm of information-theoretic security. Over the years, technological advancement has facilitated loophole-free Bell non-locality experiments which have not only provided experimental validation for ruling out a classical description of nature \cite{PhysRevLett.115.250401, PhysRevLett.115.250402, Hensen2015, PhysRevLett.119.010402}, but have also found practical applications in device independent quantum randomness generation and device independent quantum key distribution \cite{Bierhorst2018, Shalm2021, PhysRevLett.126.050503}.

The probability estimation framework is a broadly applicable framework for performing device independent quantum randomness generation (DIQRNG) upon a finite sequence of loophole-free Bell experiment data, and involves direct estimation of the amount of certifiable randomness by obtaining high confidence bounds on the conditional probability of the observed measurement outcomes conditioned on the measurement settings in the presence of classical side information \cite{PhysRevA.98.040304, PhysRevResearch.2.033465, Bierhorst_2020}. 
Advantageous primarily for its demonstrated applicability to Bell tests with small Bell violations and high efficiency for a finite number of trials, it also can accommodate changing experimental conditions and allows early stoppage upon meeting certain criteria. Also, it can be extended to randomness generation with quantum devices beyond the device independent scenario. 

The probability estimation framework for DIQRNG is provably secure against adversaries who do not possess entanglement with the sources. Security against more general adversaries, with quantum entanglement with the sources, is possible with the quantum estimation framework \cite{Knill2018}, for which the constructions of the probability estimation framework can often be translated to the quantum estimation framework (as was done in~\cite{Zhang2020}), so that progress with the former framework can often be used for the more general latter framework~\cite{Shalm2021}.

The asymptotic optimality of the probability estimation framework was discussed in \cite{PhysRevResearch.2.033465}. The specific result of asymptotic optimality is as follows:~given a sufficiently large number of trials sampling from a fixed behaviour (i.e., a set of quantum statistics), the amount of certified randomness per trial is arbitrarily close to a certain upper limit. Then \cite{PhysRevResearch.2.033465} argues, appealing to convex geometry and the asymptotic equipartition property (AEP), that an adversary can always implement a probabilistic mixture of strategies, independent and identically distributed across successive experimental trials, that generates observed statistics consistent with the fixed behaviour while not needing to generate more than that same upper limit of randomness per trial that is certified by the probability estimation framework. This is important in the sense that the framework certifies all the randomness conceded by the adversary in that particular attack, while also showing that there is no advantage to be gained for the adversary by resorting to (more sophisticated) memory attacks.

In this paper, we provide a full derivation of the asymptotic optimality of the probability estimation framework, filling in some steps omitted by \cite{PhysRevResearch.2.013016}, along the way obtaining a better characterisation of the adversary's optimal probabilistic mixture for generating the observed statistics. A better understanding of the optimal attack in the asymptotic regime will set a benchmark enabling the implementer of the protocol to defend against these attack modes. Making precise the arguments from convex geometry, we explicitly describe the optimal strategy that an adversary (restricted only by the no-signalling condition) can employ with the minimum required number of different strategies in convex mixture to simulate the observed statistics. Our improvement, with a more self-contained approach, upon the result in \cite{PhysRevResearch.2.033465} is to reduce the smallest number of required strategies by one. Specifically, the smallest number of possible strategies is one more than the dimension of the set of admissible distributions of a trial. (We assume the set of admissible probability distributions of a given trial to be closed and convex, where we can take the convex closure when this assumption is not met; then the dimension $\text{dim}(C)$ of a non-empty convex subset $C$ of $X$ is the dimension of the smallest affine subset containing $C$.) Our derivation elucidates how only the classical form of the asymptotic equipartition property is needed for the probability estimation framework, allowing a simplified treatment. We also considered the question of \emph{robustness} of the probability estimation framework, deriving a sufficient condition for a probability estimation factor (optimised at a particular distribution) to certify randomness at a positive rate at a \emph{statistically different} distribution.

We apply our results to the (2,2,2) Bell scenario (the scenario of two parties, two measurement settings, and two outcomes), obtaining an analytic characterisation of the optimal attack of an adversary (restricted only by the no-signalling condition) holding classical side information. We show that the optimal adversarial attack involves a decomposition of the observed statistics in terms of a single extremal no-signalling (super-quantum) correlation and eight local deterministic correlations. The proof of optimality relies upon the fact that equal mixtures of two extremal no-signalling non-local super-quantum correlations are expressible as an equal mixture of four local deterministic correlations. We show that this result does not generalise to higher scenarios such as the (3,2,2), (2,3,2) and (2,2,3) Bell scenarios, thereby indicating that the possibility of an optimal attack involving only a single extremal strategy is only assured in the minimal (2,2,2) Bell scenario. Furthermore, we considered the possibility of an adversary holding classical side information (and hence, restricted to probabilistic attack strategies), but trying to simulate the observed statistics using quantum-achievable probability distributions, while conceding as little randomness as possible. Assuming uniform settings distribution, numerical studies restricted to a two-dimensional slice of the set of quantum-achievable distributions provided some initial evidence that the optimal quantum-achievable attack strategy involves only one extremal quantum correlation, but we were not able to settle this and have phrased it as a conjecture.

The rest of the article is organised as follows: In Section~\ref{sec:PEF}, we review the probability estimation framework where Theorem~\ref{PEF_firstthm} formalises the central idea and Theorem~\ref{PEF_thm2} establishes a lower bound on the smooth conditional min-entropy of the sequence of outcomes conditioned on the settings and side-information. We also present a simplified proof of Lemma~\ref{lemma_conv_PEF}, an important result enabling the algorithm for executing the PEF method, as compared to the proofs in \cite{PhysRevA.98.040304, PhysRevResearch.2.033465}. In Section~\ref{section3}, we present our complete proof of asymptotic optimality, study the implications for finding an optimal adversarial attack strategy, and derive a result on robustness. In Section~\ref{s:application}, we apply our results to the (2,2,2) Bell scenario obtaining an analytic characterisation of the optimal attack strategy for an adversary restricted only by the no-signalling condition. The optimal attack comprises of a decomposition of the observed statistics in terms of a single Popescu-Rohrlich (PR) correlation and (up to) eight local deterministic correlations. We show that for a higher number of parties, settings, and/or outcomes, a crucial result from the (2,2,2) Bell scenario concerning equal mixtures of extremal non-local no-signalling correlations does not hold, and infer that the optimal attack may require more than one non-local distribution in general.
Returning to the (2,2,2) scenario, we discuss a conjecture that the optimal strategy to mimic the observed statistics by means of a probabilistic mixture of quantum-achievable correlations constitutes only a single extremal quantum correlation and (up to) eight local deterministic correlations.

\section{The Probability Estimation Framework}\label{sec:PEF}

The probability estimation method relies on the probability estimation factor (PEF), which is a function assigning a score to the results of a single trial of a quantum experiment, with higher scores corresponding to more randomness. The paradigmatic application is to a Bell non-locality experiment comprising multiple spatially separated parties providing inputs (measurement settings) to measuring devices and recording outputs (observed outcomes); an experimental trial's results then consist of both the choice of inputs and the recorded outputs for that trial. After many repeated trials the product of the PEFs from all the trials is used to estimate the probability of outcomes conditioned on the settings. 

For the examples considered in Section~\ref{s:application}, we will consider the canonical scenario of two measuring parties Alice and Bob each selecting respective binary measurement settings $X$ and $Y$ and recording respective binary outcomes $A$ and $B$, which we refer to as the (2,2,2) Bell scenario. For now we treat things in a general manner as is done in~\cite{PhysRevA.98.040304} and~\cite{Knill2018}, modelling the trial settings for all parties and outcomes for all parties with single random variables $Z$ and $C$, respectively, taking values from respective finite-cardinality sets $\mathcal Z$ and $\mathcal C$. When applied to the (2,2,2) Bell scenario, $C$ comprises the ordered pair $(A,B)$ and $Z$ comprises the ordered pair $(X,Y)$.

\begin{figure}[H]
    \centering
    \includegraphics[scale=0.66]{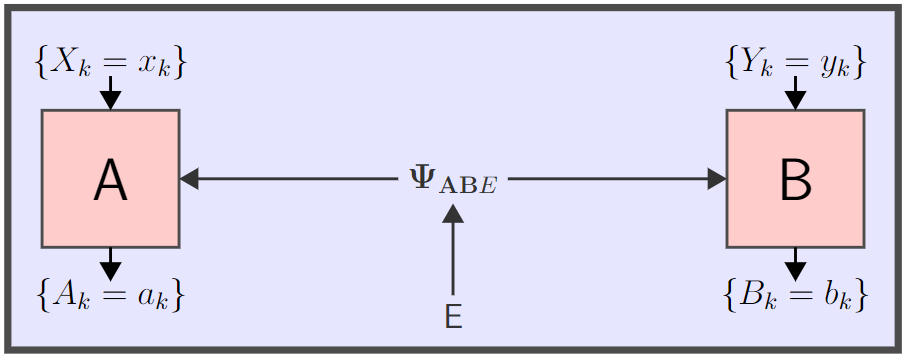}
    \caption{A schematic representation of the set-up for device independent randomness generation in a two-party experiment. The outer rectangular box represents a secure location. The adversary $\mathsf{E}$ has perfect knowledge of the processes inside the secure location but cannot tamper with them. The state $\Psi_{\mathbf{AB}E}$ represents the resource shared between the two parties. $X_{k},\,Y_{k}$ are the trial inputs and the $A_{k},\,B_{k}$ are the trial outcomes for the $k$th trial.}
    \label{fig:AlicenBob}
\end{figure}

The results of a sequence of $n$ time-ordered trials are represented by the sequences $\mathbf{C}=\{C_{i}\}_{i=1}^{n}, \mathbf{Z}=\{Z_{i}\}_{i=1}^{n}$; and so, $(\mathbf{C,Z})$ realises values $(\mathbf{c,z})\in\mathcal{C}^{n}\times\mathcal{Z}^{n}$, where $\mathcal{C}^{n},\mathcal{Z}^{n}$ are the $n$-fold Cartesian products of $\mathcal{C,Z}$. A PEF is then a real-valued function of $C$ and $Z$ satisfying certain conditions, while the product of PEFs from all trials will be a function of $\mathbf{C}$ and $\mathbf{Z}$. High values of the PEF product will correlate with low values of $\mathbb P(\mathbf{C}|\mathbf{Z})$, the conditional probability of the outcomes given the settings.

To define PEFs, we introduce the notion of a \textit{trial model}: A set $\Pi$ encompassing all joint probability distributions of settings and outcomes which are compatible with basic assumptions about the experiment. One important trial model that we consider is $\Pi_{\mathrm{Q}}$, consisting of joint distributions of $(C,Z)$ for which the conditional distribution of $C$ conditioned on $Z$ can be realised by a measurement on a quantum system. Here we introduce the convention, used throughout, of using lower case Greek letters with random variables as arguments to denote distributions, i.e., $\mu(C,Z)$ and $\mu(C\lvert Z)$ denote the joint distribution of $(C,Z)$ and the conditional distribution of $C$ given $Z$, respectively. Another important trial model is $\Pi_{\mathrm{NS}}$ (NS stands for ``no-signalling"), consisting of distributions for which probabilities of measurement outcomes at one location are independent of measurement settings at the other distant locations. (This is more clearly understood in considering the Alice-Bob example, where one of the no-signalling conditions is that $\sum_b\mu(A=a, B=b|X=x,Y=y)=\sum_b\mu(A=a, B=b|X=x,Y=y')$ for all $a,b,x$ and $y\ne y'$.) A third important trial model is the set $\Pi_{\mathrm{L}}$ of distributions for which the conditional distribution of outcomes conditioned on settings are \emph{local}, which means they can be expressed as convex mixtures of local deterministic behaviours. In the bipartite setting, the conditional distribution $\mu_{\mathrm{LD},\lambda}(A,B\lvert X,Y)$, also referred to as a behaviour, is local deterministic if $\mu_{\mathrm{LD},\lambda}(A=a,B=b\lvert X=x,Y=y)=[\![a=f(x,\lambda)]\!][\![b=g(y,\lambda)]\!]$ (where the notation $[\![\cdots]\!]$ represents the function that evaluates to $1$ if the condition within holds, $0$ otherwise). In words, the outcomes are functions of the local settings and the local hidden variable $\lambda$ which can be understood to be a list of outcomes for all possible settings. A formal definition involving more parties and an arbitrary (albeit same) number of outcomes and settings for each party can be found in~\eqref{e:def_LD}. The sets $\Pi_{\mathrm{L}},\Pi_{\mathrm{Q}}\text{ and }\Pi_{\mathrm{NS}}$ satisfy the following strict inclusions:
\begin{equation*}
\Pi_{\mathrm{L}}\subsetneq\Pi_{\mathrm{Q}}\subsetneq\Pi_{\mathrm{NS}}.
\end{equation*}
Certain distributions in $\Pi_{\mathrm{Q}}$ and $\Pi_{\mathrm{NS}}$ violate a Bell inequality and are known to contain randomness, they are contained in $\Pi_{\mathrm{Q}}\setminus\Pi_{\mathrm{L}}$ and $\Pi_{\mathrm{NS}}\setminus\Pi_{\mathrm{L}}$, respectively. It is precisely the inability to decompose such distributions into deterministic ones, as in $\Pi_{\mathrm{L}}$, that implies the presence of randomness. The objective of the PEF approach is to quantify the randomness contained in such distributions. As trial models specify the joint distribution $\mu(C,Z)$, and for the above examples of trial models we gave only the conditional distributions $\mu(C|Z)$, one must also specify the marginal distribution of the settings $\mu(Z)$. For the discussions of $\Pi_{\mathrm{Q}}$ and $\Pi_{\mathrm{NS}}$ in subsequent sections, any fixed distribution satisfying $\mu(Z=z)>0$ for all $z\in\mathcal{Z}$ is permitted. An example of a fixed settings distribution is the equiprobable distribution $\Unif(Z)$ defined as $\Unif(z)=1/\abs{\mathcal{Z}}$ for all $z\in\mathcal{Z}$. 

As a discrete probability distribution is effectively an ordered list of numbers in $[0,1]$ (the probabilities), trial models are always subsets of $\mathbb R^N$, where $N$ is fixed by the cardinality of $\mathcal C$ and $\mathcal Z$. This enables us to use a geometric approach to study these sets, which prove to be invaluable for some arguments.

We can now define PEFs. We use the notation $\mathbb{E}_{\mu}[\ldots]$ and $\mathbb{P}_{\mu}(\ldots)$ to denote expectation and probability, respectively, with respect to a distribution $\mu$; and for the sake of notational concision we sometimes omit commas in distributions or functions of more than one random variable, for instance, $\mu(CZ)$ and $f(CZ)$ must be understood to mean $\mu(C,Z)$ and $f(C,Z)$. 

\begin{mydef}[Probability Estimation Factor]\label{PEF_def}
A probability estimation factor (PEF) with power $\beta>0$ for the model of distributions $\Pi$ is a function $F\colon\mathcal{C}\times\mathcal{Z}\to\mathbb{R}^{+}$ of the random variables $(C,Z)$ such that for all $\sigma(CZ)\in \Pi$,
$\mathbb{E}_{\sigma}[F(CZ)\sigma(C\lvert Z)^{\beta}]\le 1$ holds.
\end{mydef}

In the expression above, $\sigma(C\lvert Z)$ denotes a random variable that is a function of the random variables $C$ and $Z$: $\sigma(C\lvert Z)$ is the random variable that assumes the standard conditional probability (according to $\sigma$) of $C$ taking the value $c$ conditioned on $Z$ taking the value $z$; it is assigned the value zero if the probability $\sigma(Z=z)$ is zero. The parameter $\beta$ can be any positive real value. We then note that the constant PEF $F(cz)=1$ for all $(c,z)\in\mathcal{C}\times\mathcal{Z}$ is a valid PEF for any choice of $\beta>0$. We will notice in the subsequent sections, however, that the parameter does have an effect on the method employed for choosing useful PEFs for the purpose of randomness certification; and in practice we choose the value of $\beta$ that corresponds to the maximum randomness certification.

Prior to defining a PEF we introduced the notion of a trial model. For the application of probability estimation to the outcomes of an experiment, which is a sequence of $n$ time-ordered trials, we introduce the notion of an \emph{experiment model}: It is a set $\Theta$ constraining the joint distribution of $\mathbf{C}$,$\mathbf{Z}$ and $E$, constructed as a chain of individual trial models $\Pi$; it consists of joint distributions $\mu(\mathbf{CZ}\lvert E=e)$ conditioned on the event $\{E=e\}$, where $E$ is the random variable denoting the adversary's side information and realising values $e$ from the finite set $\mathcal{E}$. It satisfies the following two assumptions:
\begin{align}\label{e:Expt_model_assumptions}
\mu(C_{i+1}Z_{i+1}\lvert \mathbf{C}_{\le i}=\mathbf{c}_{\le i},\mathbf{Z}_{\le i}=\mathbf{z}_{\le i},E=e) \in \Pi,\,\forall \,\mathbf{c}_{\le i}\in\mathcal{C}^{i},\,\mathbf{z}_{\le i}\in\mathcal{Z}^{i},\,e\in\mathcal{E},\nonumber \\
\mu(Z_{i+1},\mathbf{C}_{\le i}\mathbf{Z}_{\le i}\lvert E=e) = \mu(Z_{i+1}\lvert E=e)\mu(\mathbf{C}_{\le i}\mathbf{Z}_{\le i}\lvert E=e),\,\forall\,e\in\mathcal{E}.
\end{align}
In \eqref{e:Expt_model_assumptions}, $\mathbf{C}_{\le i},\mathbf{Z}_{\le i}$ denote the outcomes and measurement settings for the first $i\in[n]$ trials, where $[n]\coloneqq\{1,2,\ldots,n\}$, with $\mathbf{c}_{\le i},\mathbf{z}_{\le i}$ denoting their respective realisations. The random variables $C_{i+1},Z_{i+1}$ are the outcomes and settings for the $(i+1)$'th trial. The first condition in \eqref{e:Expt_model_assumptions} formalises the assumption that the (joint) probability of the $(i+1)$'th outcome and setting, conditioned on the outcomes and settings for the first $i$ trials and each realised value $E=e$ of the adversary's side information, belongs to the $(i+1)$'th trial model, i.e., it is compatible with the conditions dictated by the trial model. The second condition states that for each $E=e$ the setting for the \emph{next} trial is independent of the outcomes and settings of the \emph{past and present} trials. Our second condition is a stronger assumption than the corresponding assumption given in \cite{PhysRevA.98.040304}, which is as follows: The joint distribution $\mu$ of $\mathbf{CZ}E$ is such that $Z_{i+1}$ is independent of $\mathbf{C}_{\le i}$ conditionally on both $\mathbf{Z}_{\le i}$ and $E$. It is a straightforward exercise to check that our stronger assumption implies the one stated in~\cite{PhysRevA.98.040304}. While the weaker assumption is sufficient for the following result, we find the stronger assumption operationally clearer as an assumption that the future settings are independent of ``everything in the past" for each realisation of $e$.

For the rest of the paper we adopt the abbreviated notation of $\mu_{y}(X)$ for $\mu(X\lvert Y=y)$. The following theorem, appearing as Theorem $9$ in Appendix C in~\cite{PhysRevA.98.040304}, formalises the central idea behind the framework of probability estimation. We include a proof for this theorem in Appendix~\ref{a:PEFproof_lbscme} for completeness.

\begin{theorem}\label{PEF_firstthm}
Suppose $\mu\colon\mathcal{C}^{n}\times\mathcal{Z}^{n}\times\mathcal{E}\to[0,1]$ is a distribution of $\mathbf{CZ}E$ such that $\mu_{e}(\mathbf{CZ})\in\Theta$ for each $e\in\mathcal{E}$. Then for fixed $\beta, \epsilon > 0$
\begin{equation}\label{eq_PEF_def}
\mathbb{P}_{\mu_{e}}\left(\mu_{e}(\mathbf{C}\lvert\mathbf{Z})\ge \left(\epsilon\prod_{i=1}^{n}F_{i}(C_{i}Z_{i})\right)^{-1/\beta}\right)\le \epsilon
\end{equation}
holds for each $e\in\mathcal{E}$, where $F_{i}(C_{i}Z_{i})$ is the probability estimation factor for the $i$'th trial.
\end{theorem}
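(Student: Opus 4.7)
The plan is to apply Markov's inequality to the random variable $R:=\mu_{e}(\mathbf{C}\lvert\mathbf{Z})^{\beta}\prod_{i=1}^{n}F_{i}(C_{i}Z_{i})$. Raising both sides of the inequality inside the probability in~\eqref{eq_PEF_def} to the $\beta$-th power and rearranging shows that the event there is precisely $\{R\ge 1/\epsilon\}$. Markov's inequality then gives $\mathbb{P}_{\mu_{e}}(R\ge 1/\epsilon)\le\epsilon\,\mathbb{E}_{\mu_{e}}[R]$, so the theorem reduces to establishing the expectation bound $\mathbb{E}_{\mu_{e}}[R]\le 1$.

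To prepare for that bound, I would first decompose $\mu_{e}(\mathbf{C}\lvert\mathbf{Z})$ trial by trial. The second assumption in~\eqref{e:Expt_model_assumptions}---that $Z_{i+1}$ is independent of $(\mathbf{C}_{\le i},\mathbf{Z}_{\le i})$ given $E=e$---implies the settings marginal factors as $\mu_{e}(\mathbf{Z})=\prod_{i}\mu_{e}(Z_{i})$, and these factors cancel against the corresponding ones in the chain-rule expansion of $\mu_{e}(\mathbf{C},\mathbf{Z})$, yielding
\begin{equation*}
\mu_{e}(\mathbf{C}\lvert\mathbf{Z})=\prod_{i=1}^{n}\mu_{e}(C_{i}\lvert\mathbf{C}_{<i},\mathbf{Z}_{\le i}).
\end{equation*}

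The expectation bound then follows by induction on $n$ using the tower property. Conditional on $(\mathbf{C}_{<n},\mathbf{Z}_{<n})=(\mathbf{c}_{<n},\mathbf{z}_{<n})$, the joint distribution of $(C_{n},Z_{n})$ is $\sigma(C_{n}Z_{n}):=\mu_{e}(C_{n},Z_{n}\lvert\mathbf{c}_{<n},\mathbf{z}_{<n})$, which belongs to $\Pi$ by the first assumption in~\eqref{e:Expt_model_assumptions}. Moreover, $\sigma(C_{n}\lvert Z_{n})$ coincides with the random variable $\mu_{e}(C_{n}\lvert\mathbf{c}_{<n},\mathbf{Z}_{\le n})$ appearing in the $n$-th factor of $R$, so the PEF property of $F_{n}$ (Definition~\ref{PEF_def}) yields $\mathbb{E}_{\sigma}[F_{n}(C_{n}Z_{n})\sigma(C_{n}\lvert Z_{n})^{\beta}]\le 1$. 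Substituting back through the tower and iterating the same argument for $i=n-1,\ldots,1$ collapses $\mathbb{E}_{\mu_{e}}[R]$ to at most $1$.

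The main obstacle is essentially careful bookkeeping: at each inductive step, one must verify that the distribution appearing in the inner conditional expectation is precisely the element of $\Pi$ whose conditional-on-$Z$ marginal matches the factor of $R$ being peeled off, since only then can Definition~\ref{PEF_def} be invoked. Both halves of~\eqref{e:Expt_model_assumptions} are needed for this identification---the independence assumption to obtain the clean chain-rule product form, and the trial-model assumption to place $\sigma$ inside $\Pi$. A minor technicality is the handling of trajectories on which some conditioning probability vanishes, but the stated convention that $\mu_{e}(C\lvert Z)$ is assigned the value $0$ when $\mu_{e}(Z)=0$ ensures such events contribute nothing to the expectation.
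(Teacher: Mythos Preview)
Your proposal is correct and follows essentially the same route as the paper's proof: both reduce to showing $\mathbb{E}_{\mu_e}[R]\le 1$ via Markov's inequality, and both establish that bound by peeling off one trial at a time using the tower property together with the two assumptions in~\eqref{e:Expt_model_assumptions}. The paper phrases the iteration as a supermartingale argument---defining $Q_i=\mu_e(\mathbf{C}_{\le i}\lvert\mathbf{Z}_{\le i})^{\beta}\prod_{j\le i}F_j$ and showing $\mathbb{E}[Q_{i+1}\mid\mathbf{C}_{\le i}\mathbf{Z}_{\le i}]\le Q_i$---which is exactly your inductive peeling step written in different language.
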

\textit{Proof}. See Appendix~\ref{a:PEFproof_lbscme}. 

The inequality~\eqref{eq_PEF_def} in Theorem~\ref{PEF_firstthm} can be understood, intuitively, as follows: When the trial-wise product $\prod_{i=1}^{n}F_{i}(C_{i}Z_{i})$ of the PEFs is large and so for fixed $\epsilon,\beta>0$, the quantity $(\epsilon\prod_{i=1}^{n}F_{i}(C_{i}Z_{i}))^{-1/\beta}$ is small, for each $e\in\mathcal{E}$ there is a very small probability (denoted by the outer probability $\mathbb{P}_{\mu_{e}}(\cdot)$) that the conditional probability of the sequence of outcomes $\mathbf{C}$ conditioned on the sequence of settings $\mathbf{Z}$ (denoted by $\mu_{e}(\mathbf{C}\lvert\mathbf{Z})$) is more than a small value.   

For information-theoretic purposes, it is useful to translate the bound in \eqref{eq_PEF_def} into a statement about min-entropy with respect to an adversary. An adversary's goal is to predict $C$. Conditioned on a particular realisation of the settings sequence $\mathbf{z}\in\mathcal{Z}^{n}$ and side information $e\in\mathcal{E}$, one can measure the ``predictability'' of the sequence of outcomes $\mathbf{C}$ with the following maximum probability: 
\begin{equation*}
\max_{\mathbf{c}\in\mathcal{C}^{n}}\mu(\mathbf{c}\lvert\mathbf{z}e).
\end{equation*}
It quantifies the best guess of the adversary. The $\mathbf{z}e$-conditional min-entropy of $\mathbf{C}$, corresponding to that particular realisation $\mathbf{z}e\in\mathcal{Z}^{n}\times\mathcal{E}$, is the following negative logarithm:
\begin{equation*}
    \mathbb{H}_{\infty,\mu}(\mathbf{C}\lvert\mathbf{z}e) \coloneqq -\log_{2}\left(\max_{\mathbf{c}\in\mathcal{C}^{n}}\mu(\mathbf{c}\lvert\mathbf{z}e)\right).
\end{equation*}
The subscript $\mu$ in the notation $\mathbb{H}_{\infty,\mu}(\cdots)$ refers to the distribution $\mu(\mathbf{CZ}E)$. The average $\mathbf{Z}E$-conditional min-entropy is then defined as follows:
\begin{equation*}
\mathbb{H}_{\infty,\mu}^{\mathsf{avg}}(\mathbf{C}\lvert\mathbf{Z}E) \coloneqq -\log_{2}\bigg[\sum_{\mathbf{z}e\in\mathcal{Z}^{n}\times\mathcal{E}}\Big(\max_{\mathbf{c}\in\mathcal{C}^{n}}\mu(\mathbf{c}\lvert\mathbf{z}e)\Big)\mu(\mathbf{z}e)\bigg].
\end{equation*}
But, information-theoretic security of cryptographic protocols take into account a more realistic measure of average $\mathbf{Z}E$-conditional min-entropy which involves a smoothing-parameter $\epsilon$, a type of error bound, and is known as the $\epsilon$-smooth average $\mathbf{Z}E$-conditional min-entropy. This quantity is useful for our scenario in which the probability distribution is not known exactly and its characteristics can only be inferred from observed data, which introduces the possibility of error. It is defined as follows.

\begin{mydef}[Smooth Average Conditional Min-Entropy]\label{def:smooth_avg_cond_minentropy} For a distribution $\mu\colon\mathcal{C}^{n}\times\mathcal{Z}^{n}\times\mathcal{E}\to[0,1]$ of $\mathbf{C,Z},E$ define the set $\mathcal{B}^{\epsilon}(\mu)$ of distributions of $\mathbf{C,Z},E$ as 
\begin{subequations}
\begin{equation}\label{eq_scme1}
\mathcal{B}^{\epsilon}(\mu)\coloneqq \{\sigma\colon\mathcal{C}^{n}\times\mathcal{Z}^{n}\times\mathcal{E}\to[0,1]\mid d_\mathrm{TV}(\sigma,\mu)\le\epsilon\},
\end{equation}
where $\epsilon\in(0,1)$ and $d_{\mathrm{TV}}(\sigma,\mu)$ is the total variation distance between $\sigma$ and $\mu$ defined as
\begin{equation}\label{e:d_TV}
    d_{\mathrm{TV}}(\sigma,\mu) \coloneqq \frac{1}{2}\sum_{\mathbf{cz}e\in\mathcal{C}^{n}\times\mathcal{Z}^{n}\times\mathcal{E}}\abs{\mu(\mathbf{cz}e)-\sigma(\mathbf{cz}e)}.
\end{equation}
The $\epsilon$-smooth average $\mathbf{Z}E$-conditional min-entropy is then defined as follows.
\begin{equation}\label{eq_scme2}
    \mathbb H_{\infty,\mu}^{\mathsf{avg},\epsilon}(\mathbf{C}\lvert\mathbf{Z}E) \coloneqq \max_{\sigma\in\mathcal{B}^{\epsilon}(\mu)}\Bigg[-\log_{2}\bigg[\sum_{\mathbf{z}e\in\mathcal{Z}^{n}\times\mathcal{E}}\Big(\max_{\mathbf{c}\in\mathcal{C}^{n}}\sigma(\mathbf{c}\lvert\mathbf{z}e)\Big)\sigma(\mathbf{z}e)\bigg]\Bigg].
\end{equation}
\end{subequations}
\end{mydef}

The lower bound obtained on this quantity goes as one of the inputs to extractor functions in randomness extraction, whose purpose is to convert random functions with uneven distributions into shorter, close to uniformly distributed bit strings. We note that alternative definitions of $\epsilon$-smooth conditional min-entropy can be used, for instance, the $\epsilon$-smooth \textit{worst-case} conditional min-entropy of \cite{10.1007/11593447_11}. A known result from the literature, proven in Proposition~\ref{l:avg_wst_scme} in Appendix \ref{a:scme}, justifies our usage of the $\epsilon$-smooth average conditional min-entropy without having to concern with the stricter $\epsilon$-smooth worst-case conditional min-entropy (defined in~\eqref{e:smoothwst}): specifically, the two quantities converge to one another in the asymptotic limit.

The result obtained from Theorem~\ref{PEF_firstthm} can be translated into a result on smooth average conditional min-entropy formalised in Theorem~\ref{PEF_thm2} below. This theorem appears as Theorem 1 in~\cite{PhysRevA.98.040304}. We include a proof for this theorem in Appendix~\ref{a:PEFproof_lbscme} for completeness. In the notation of $\epsilon$-smooth average $\mathbf{Z}E$-conditional min-entropy in~\eqref{eq_lb_min_ent}, the semicolon followed by $\mathsf{S}$ denotes that this information-quantity is assessed with respect to the distribution $\mu$ after conditioning on the occurrence of the event $\mathsf{S}$ defined in the statement of Theorem~\ref{PEF_thm2}. It pertains to an abort criterion. The protocol succeeds only if the product of the trial-wise PEFs exceeds some threshold value, otherwise it is aborted. So we want to establish the lower bound for smooth conditional min-entropy conditioned on the event that the protocol succeeds, because it is precisely this scenario in which we extract randomness. Since a completely predictable local distribution can always have a chance of passing the protocol, however minuscule (on the order of $(3/4)^n$, where the number of trials $n$ often goes up to millions)---and $\mu(\mathbf{c}|\mathbf{z})$ will equal 1 in this case---it is necessary to assume a small but positive lower bound on the probability of not aborting to derive a useful min-entropy bound. This can be thought of as another type of error parameter. The assumed lower bound for the probability of success of the protocol is $\kappa$.

\begin{theorem}\label{PEF_thm2}
Let $\mu$ be a distribution $\mu\colon\mathcal{C}^{n}\times\mathcal{Z}^{n}\times\mathcal{E}\to[0,1]$ of $\mathbf{C,Z},E$ such that for each $e\in\mathcal{E}$,
the following holds for every $\epsilon \in (0,1)$:
\begin{equation}\label{eq_PEF_def2}
\mathbb{P}_{\mu_{e}}\left(\mu_{e}(\mathbf{C}\lvert\mathbf{Z})\le\left(\epsilon\prod_{i=1}^{n}F_{i}\right)^{-1/\beta}\right)\ge 1-\epsilon,
\end{equation}
where $F_{i}$ is a PEF with power $\beta$ for the $i$'th trial. For a fixed choice of $\epsilon \in (0,1)$ and $p\ge \abs{\mathcal{C}}^{-n}$, define the event $\mathsf{S}\coloneqq \left\{\left(\epsilon\prod_{i=1}^{n}F_{i}\right)^{-1/\beta}\le p\right\}$. Then if $\kappa$ satisfies $0<\kappa \le \mathbb{P}_{\mu}(\mathsf{S})$, the following holds:
\begin{equation}\label{eq_lb_min_ent}
\mathbb H_{\infty,\mu}^{\mathsf{avg},\epsilon/\kappa}(\mathbf{C}\lvert\mathbf{Z}E;\mathsf{S})\ge \log_{2}(\kappa) - \log_{2}(p)
\end{equation}
\end{theorem}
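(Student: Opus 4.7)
The plan is to construct a sub-normalised $\sigma\in\mathcal{B}^{\epsilon/\kappa}(\mu(\cdot\lvert\mathsf{S}))$ whose maximum predictability is bounded by $p/\kappa$, so that the definition~\eqref{eq_scme2} directly delivers the claimed entropy bound. The central observation is that on the success event $\mathsf{S}$, the defining inequality $(\epsilon\prod_i F_i)^{-1/\beta}\le p$ converts the probabilistic guarantee~\eqref{eq_PEF_def2} into a deterministic ceiling on $\mu_e(\mathbf{C}\lvert\mathbf{Z})$. To make this precise, I would introduce the bad set $\mathsf{B}\coloneqq\{(\mathbf{c},\mathbf{z},e):\mu_e(\mathbf{c}\lvert\mathbf{z})>p\}$ and observe that $\mathsf{S}\cap\mathsf{B}\subseteq\{\mu_e(\mathbf{C}\lvert\mathbf{Z})>(\epsilon\prod_i F_i)^{-1/\beta}\}$, so that applying~\eqref{eq_PEF_def2} for each $e$ and averaging over $\mu(E)$ yields $\mathbb{P}_\mu(\mathsf{S}\cap\mathsf{B})\le\epsilon$. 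Dividing by $\mathbb{P}_\mu(\mathsf{S})\ge\kappa$ gives $\mathbb{P}_\tau(\mathsf{B})\le\epsilon/\kappa$ for $\tau\coloneqq\mu(\cdot\lvert\mathsf{S})$.

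Next I would define $\sigma(\mathbf{c},\mathbf{z},e)\coloneqq\tau(\mathbf{c},\mathbf{z},e)\,[\![\mu_e(\mathbf{c}\lvert\mathbf{z})\le p]\!]$ by truncating $\tau$ onto the complement of $\mathsf{B}$. This is a $[0,1]$-valued function with $d_{\mathrm{TV}}(\sigma,\tau)=\tfrac{1}{2}\mathbb{P}_\tau(\mathsf{B})\le\epsilon/(2\kappa)$, so $\sigma\in\mathcal{B}^{\epsilon/\kappa}(\tau)$. For any $(\mathbf{c},\mathbf{z},e)$ with $\sigma(\mathbf{c},\mathbf{z},e)>0$ one has $\mu_e(\mathbf{c}\lvert\mathbf{z})\le p$, hence $\sigma(\mathbf{c},\mathbf{z},e)=\mu_e(\mathbf{c}\lvert\mathbf{z})\mu(\mathbf{z},e)/\mathbb{P}_\mu(\mathsf{S})\le p\,\mu(\mathbf{z},e)/\kappa$. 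Maximising over $\mathbf{c}$ and summing over $(\mathbf{z},e)$ gives $\sum_{\mathbf{z}e}\max_{\mathbf{c}}\sigma(\mathbf{c},\mathbf{z},e)\le(p/\kappa)\sum_{\mathbf{z}e}\mu(\mathbf{z},e)=p/\kappa$, so that by~\eqref{eq_scme2} we conclude $\mathbb{H}^{\mathsf{avg},\epsilon/\kappa}_{\infty,\mu}(\mathbf{C}\lvert\mathbf{Z}E;\mathsf{S})\ge-\log_2(p/\kappa)=\log_2\kappa-\log_2 p$.

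The main obstacle is tracking the two-layer conditioning cleanly: the guarantee of Theorem~\ref{PEF_firstthm} lives under $\mu_e$, whereas the smooth min-entropy in the statement is measured under $\mu(\cdot\lvert\mathsf{S})$. The passage from an $\epsilon$-bound under $\mu$ to an $\epsilon/\kappa$-bound under $\tau$ is precisely what the smoothing parameter $\epsilon/\kappa$ in the conclusion accommodates, and tying these together rigorously requires checking that $\mathsf{B}$ is measurable jointly in $(\mathbf{c},\mathbf{z},e)$ and that the marginalisation over $E$ behaves as expected. A minor subtlety is the (sub-)normalisation of $\sigma$: the definition~\eqref{eq_scme1} admits $[0,1]$-valued functions, which suffices here; if one insists on a genuine probability distribution, the missing mass of at most $\epsilon/\kappa$ can be redistributed onto an already-maximal configuration at a single $(\mathbf{z}_0,e_0)$, changing the sum of conditional maxima by at most an additive $\epsilon/\kappa$ and thus not affecting the leading-order bound.
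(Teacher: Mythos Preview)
Your approach is essentially the same idea as the paper's: restrict the $\mathsf S$-conditioned distribution to the ``good'' set where $\mu_e(\mathbf c\lvert\mathbf z)\le p$, note that the mass removed is at most $\epsilon/\kappa$, and read off the max-probability bound. The paper formalises this via the event $\mathsf R\coloneqq\{\mu_e(\mathbf C\lvert\mathbf Z)\le(\epsilon\prod_i F_i)^{-1/\beta}\}$ and the sub-probability $f(\mathbf{cz})=\mu_e(\mathbf{cz})[\![\mathsf S\cap\mathsf R]\!]/\mathbb P_{\mu_e}(\mathsf S)$, which plays the same role as your truncation to $\mathsf B^c$. Your bookkeeping of the $\epsilon\to\epsilon/\kappa$ passage is correct.

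The one substantive divergence is in how the truncated object is turned into an admissible $\sigma$. You leave $\sigma$ sub-normalised and appeal to the literal wording of Definition~\ref{def:smooth_avg_cond_minentropy} (``$\sigma\colon\cdots\to[0,1]$''). Under that reading your argument is complete and indeed shorter --- and, tellingly, it never uses the hypothesis $p\ge|\mathcal C|^{-n}$. The paper, by contrast, reads $\mathcal B^\epsilon(\mu)$ as a ball of \emph{normalised} distributions and therefore has to promote the truncated sub-probability to a genuine distribution without inflating $\max_{\mathbf c}\sigma(\mathbf c\lvert\mathbf z e)$. That is what Lemma~\ref{lemma2_appendix} does: it fills in the missing mass while keeping each value below the ceiling $p\,\mu_e(\mathbf z)/\mu_e(\mathbf z,\mathsf S)$, and this step is exactly where $p\ge|\mathcal C|^{-n}$ enters. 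Your proposed fix --- dumping the deficit onto a single $(\mathbf z_0,e_0)$ --- would raise the guessing sum from $p/\kappa$ to $(p+\epsilon)/\kappa$, yielding only $\log_2\kappa-\log_2(p+\epsilon)$, which is strictly weaker than the stated bound; ``leading-order'' is not enough here. So if one insists on normalised smoothing, your sketch has a gap that is closed precisely by the paper's Lemma~\ref{lemma2_appendix} construction.
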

\textit{Proof}. See Appendix~\ref{a:PEFproof_lbscme}.

Under the same conditions of Theorem~\ref{PEF_thm2}, the main result~\eqref{eq_lb_min_ent} admits a minor reformulation as follows. 
This is the formulation that aligns with the statement of Theorem 1 in~\cite{PhysRevA.98.040304}:
\begin{corollary}\label{coro1}
Let $\mu\colon\mathcal{C}^{n}\times\mathcal{Z}^{n}\times\mathcal{E}\to[0,1]$ be a distribution of $\mathbf{CZ}E$ and $F$ be a PEF with power $\beta$ such that~\eqref{eq_PEF_def2} holds for each $e\in\mathcal{E}$. For a fixed choice of $\epsilon\in (0,1)$, $p\ge |\mathcal C|^{-n}$, and positive $\kappa \le \mathbb P_\mu (\mathsf S)$ where $\mathsf S = \left\{(\epsilon \prod_{i=1}^n F_i )^{-1/\beta}\le p\right\}$, we have
\begin{equation}\label{e:lb_min_ent_2}
    \mathbb{H}_{\infty,\mu}^{\mathsf{avg},\epsilon}(\mathbf{C}\lvert\mathbf{Z}E;\mathsf{S}) \ge \left(1+\frac{1}{\beta}\right)\log_{2}(\kappa) - \log_{2}(p).
\end{equation}
\end{corollary}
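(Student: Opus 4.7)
My plan is to derive Corollary~\ref{coro1} from Theorem~\ref{PEF_thm2} by a reparametrization: I will apply the theorem to the same distribution $\mu$ and PEFs $\{F_i\}$, but with shifted values of the error parameter and the threshold, chosen so that the \emph{event} $\mathsf{S}$ appearing in the theorem coincides with the event $\mathsf S$ in the corollary, while the smoothing parameter produced by the theorem becomes $\epsilon$ rather than $\epsilon/\kappa$.

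Concretely, I would invoke Theorem~\ref{PEF_thm2} with the substitutions $\epsilon \mapsto \epsilon' \coloneqq \kappa\epsilon$, $p \mapsto p' \coloneqq p\,\kappa^{-1/\beta}$, $\kappa \mapsto \kappa$. First I would check that the hypotheses of the theorem are met with these substitutions:~$\epsilon'\in(0,1)$ since $\kappa\le 1$ and $\epsilon\in(0,1)$, and $p'\ge|\mathcal C|^{-n}$ since $\kappa^{-1/\beta}\ge 1$ and $p\ge|\mathcal C|^{-n}$. Moreover, the inequality~\eqref{eq_PEF_def2} is assumed to hold for \emph{every} $\epsilon\in(0,1)$, so in particular for $\epsilon'$. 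Next, I would verify that the theorem's abort event coincides with the corollary's:
\begin{equation*}
\Bigl\{\bigl(\epsilon'\textstyle\prod_{i=1}^n F_i\bigr)^{-1/\beta}\le p'\Bigr\}
=\Bigl\{\kappa^{-1/\beta}\bigl(\epsilon\textstyle\prod_{i=1}^n F_i\bigr)^{-1/\beta}\le p\,\kappa^{-1/\beta}\Bigr\}
=\mathsf S,
\end{equation*}
so $\mathbb P_\mu(\mathsf S)\ge\kappa$ by hypothesis of the corollary, as required.

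With the hypotheses validated, Theorem~\ref{PEF_thm2} applied with these substituted parameters yields
\begin{equation*}
\mathbb H_{\infty,\mu}^{\mathsf{avg},\epsilon'/\kappa}(\mathbf C\lvert \mathbf Z E;\mathsf S)\;\ge\;\log_2(\kappa)-\log_2(p').
\end{equation*}
Substituting back $\epsilon'/\kappa=\epsilon$ and $\log_2(p')=\log_2(p)-\tfrac{1}{\beta}\log_2(\kappa)$ gives exactly
\begin{equation*}
\mathbb H_{\infty,\mu}^{\mathsf{avg},\epsilon}(\mathbf C\lvert \mathbf Z E;\mathsf S)\;\ge\;\Bigl(1+\tfrac{1}{\beta}\Bigr)\log_2(\kappa)-\log_2(p),
\end{equation*}
which is~\eqref{e:lb_min_ent_2}. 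There is no real obstacle here; the entire content is an algebraic reparametrization, and the only thing to be careful about is that the substituted $p'$ still satisfies $p'\ge|\mathcal C|^{-n}$ and that the substituted $\epsilon'$ remains in $(0,1)$, both of which follow from $\kappa\in(0,1]$. The corollary can be viewed as a trade-off: one sacrifices a factor of $\kappa^{1/\beta}$ in the predictability bound in exchange for improving the smoothing parameter from $\epsilon/\kappa$ to $\epsilon$, which is the formulation used in~\cite{PhysRevA.98.040304}.
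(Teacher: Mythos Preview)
Your proposal is correct and follows essentially the same approach as the paper: both apply Theorem~\ref{PEF_thm2} with the reparametrization $\epsilon'=\kappa\epsilon$, $p'=p\,\kappa^{-1/\beta}$, $\kappa'=\kappa$, verify that $\epsilon'\in(0,1)$ and $p'\ge|\mathcal C|^{-n}$, and observe that the event $\mathsf S'$ coincides with $\mathsf S$. Your write-up is simply more explicit in spelling out the algebraic substitution at the end.
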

\begin{proof} Use Theorem~\ref{PEF_thm2} with $\epsilon'=\kappa\epsilon$, $p'=p/\kappa^{1/\beta}$, and $\kappa'=\kappa$, noting that since $0 < \kappa \le 1$ and $\beta>0$ hold, we have $\epsilon'\in(0,1)$ and $p'\ge |\mathcal C|^{-n}$ as required for invoking the theorem. Then notice the corresponding event $\mathsf S' = \left\{(\epsilon' \prod_{i=1}^n F_i )^{-1/\beta}\le p'\right\}$ aligns with the event $\mathsf S$. 
\end{proof}

The above results hold when we consider distributions $\mu\colon\mathcal{C}^{n}\times\mathcal{Z}^{n}\times\mathcal{E}^{n}\to[0,1]$ of $\mathbf{CZE}$, i.e., where the side information is structured as a sequence of random variables. The proof remains the same with the exception that we condition on an arbitrary sequence of realisation $\mathbf{e}\in\mathcal{E}^{n}$ of $\mathbf{E}$. We consider this scenario in Section~\ref{section3} where we define an IID attack from the adversary.

Theorem \ref{PEF_firstthm} does not indicate how to find PEFs. One way to find useful PEFs is to first notice that the success criterion of the protocol is the event $\mathsf{S}$ that the inequality $(\epsilon\prod_{i=1}^{n}F_{i})^{-1/\beta}\le p$ holds, which can be equivalently expressed as
\begin{equation}\label{e:eq_expr_log2PEFsum}
\sum_{i=1}^{n}\log_{2}(F_{i})/\beta + \log_{2}(\epsilon)/\beta \ge -\log_{2}(p),
\end{equation}
where $\epsilon,\beta\text{ and }p$ are pre-determined quantities  to be chosen in advance of running the protocol. Then considering an anticipated trial distribution $\rho(CZ)$ based on observed results and calibrations from previous trials, in the limit of sufficiently large $n$ the difference between the term on the left hand side of \eqref{e:eq_expr_log2PEFsum} (which consists of the trial-wise sum of (base-2) logarithm of PEFs) and $n\mathbb{E}_{\rho}[\log_{2}(F(CZ))/\beta]$ will be either greater or less than zero with roughly equal probability. This follows from the Central Limit Theorem if the distribution remains roughly stable from trial to trial. Since it is desirable to have the largest value of $-\log_2(p)$ possible, one can then perform the following constrained maximisation using any convex programming software owing to the concavity of the objective function and the linearity of the constraints. 
\begin{align}\label{eq_PEF_opt}
\text{Maximise: }& \mathbb{E}_{\rho}[(n\log_{2}(F(CZ)) + \log_{2}(\epsilon))/\beta]\nonumber\\
\text{Subject to: }& \mathbb{E}_{\nu}[F(CZ)\nu(C\lvert Z)^{\beta}]\le 1,\text{ for all } \nu(CZ)\in\Pi,\nonumber\\
& F(cz)\ge 0,\text{ for all } (c,z)\in\mathcal{C}\times\mathcal{Z}
\end{align} 

Since $n,\epsilon\text{ and }\beta$ are fixed, it is sufficient to maximise $\mathbb{E}_{\rho}[\log_{2}(F(CZ))]$ subject to the same constraints. In practice, one can consider a range of values of $\beta$ and perform the constrained maximisation with the objective $\mathbb{E}_{\rho}[\log_{2}(F(CZ))]$, then plug in the maximum value in the expression $\mathbb{E}_{\rho}[(n\log_{2}(F(CZ)) + \log_{2}(\epsilon))/\beta]$ and obtain a plot with respect to the considered range of $\beta$ values (see, for example, Figure 2 in~\cite{Bierhorst_2020}; a similar pattern is observed in Figure~\ref{fig:beta_n_relation} in Section~\ref{s:application}).

The following lemma (from~\cite{PhysRevA.98.040304}, see Lemma 15)---for which we provide a more direct proof---enables us to restrict the satisfiability constraints of the optimisation routine in \eqref{eq_PEF_opt} to the extremal distributions of the model $\Pi$ under the condition that the model is convex and closed. So, the first line of constraints in \eqref{eq_PEF_opt} can be replaced with $\mathbb{E}_{\nu}[F(CZ)\nu(C\lvert Z)^{\beta}]\le 1,\,\forall\nu(CZ)\in\Pi_{\text{extr}}$, where $\Pi_{\text{extr}}$ is the set of extremal distributions of $\Pi$. If the model $\Pi$ is not convex and closed, we take its convex closure. In words, the lemma states that if $F(CZ)$ is a PEF with power $\beta>0$ for the distributions $\sigma_{1}(CZ)\text{ and }\sigma_{2}(CZ)$, then it is a PEF with the same power for all distributions that can be expressed as a convex combination of $\sigma_{1}$ and $\sigma_{2}$.

\begin{lemma}\label{lemma_conv_PEF}
For distributions $\sigma_{i}(CZ)\in\Pi$ satisfying $\mathbb{E}_{\sigma_{i}}[F(CZ)\sigma_{i}(C\lvert Z)^{\beta}]\le 1$, for $i=1,2$, if $\sigma(CZ)\in\Pi$ is expressible as $\sigma(CZ)=\lambda\sigma_{1}(CZ)+(1-\lambda)\sigma_{2}(CZ)$ for $\lambda\in[0,1]$, then it satisfies $\mathbb{E}_{\sigma}[F(CZ)\sigma(C\lvert Z)^{\beta}]\le 1$.
\end{lemma}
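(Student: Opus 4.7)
The key identity to use is that the expectation defining the PEF condition can be rewritten as
\begin{equation*}
\mathbb{E}_{\sigma}[F(CZ)\sigma(C\lvert Z)^{\beta}] = \sum_{c,z} \sigma(z)\,F(c,z)\,\sigma(c\lvert z)^{\beta+1},
\end{equation*}
using $\sigma(c,z)=\sigma(c\lvert z)\sigma(z)$ (with the convention that terms with $\sigma(z)=0$ vanish). The plan is to bound $\sigma(c\lvert z)^{\beta+1}$ pointwise by a convex combination of $\sigma_i(c\lvert z)^{\beta+1}$, after which the proof reduces to summing the assumption for $\sigma_1$ and $\sigma_2$.

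First, I would verify that, for every $z$ with $\sigma(z)>0$, the conditional $\sigma(c\lvert z)$ is itself a convex combination of $\sigma_1(c\lvert z)$ and $\sigma_2(c\lvert z)$. Indeed, from $\sigma(c,z)=\lambda\sigma_1(c,z)+(1-\lambda)\sigma_2(c,z)$ one gets
\begin{equation*}
\sigma(c\lvert z) = w_1(z)\,\sigma_1(c\lvert z) + w_2(z)\,\sigma_2(c\lvert z), \qquad w_i(z) \coloneqq \frac{\lambda_i\,\sigma_i(z)}{\sigma(z)},
\end{equation*}
where $\lambda_1=\lambda$, $\lambda_2=1-\lambda$, and $w_1(z)+w_2(z)=1$ with both weights non-negative (any $\sigma_i(z)$ exceeding $\sigma(z)/\lambda_i$ is impossible, and the case $\sigma_i(z)=0$ is handled by the paper's convention making the corresponding conditional zero).

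Next, I would apply Jensen's inequality to the map $x\mapsto x^{\beta+1}$, which is convex on $[0,\infty)$ since $\beta>0$, obtaining
\begin{equation*}
\sigma(c\lvert z)^{\beta+1} \le w_1(z)\,\sigma_1(c\lvert z)^{\beta+1} + w_2(z)\,\sigma_2(c\lvert z)^{\beta+1}.
\end{equation*}
Multiplying both sides by $\sigma(z)$ causes the $\sigma(z)$ in the denominator of $w_i(z)$ to cancel, producing
\begin{equation*}
\sigma(z)\,\sigma(c\lvert z)^{\beta+1} \le \lambda\,\sigma_1(z)\,\sigma_1(c\lvert z)^{\beta+1} + (1-\lambda)\,\sigma_2(z)\,\sigma_2(c\lvert z)^{\beta+1}.
\end{equation*}

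Finally, multiplying by the non-negative quantity $F(c,z)$ and summing over $(c,z)\in\mathcal{C}\times\mathcal{Z}$, the right-hand side becomes $\lambda\,\mathbb{E}_{\sigma_1}[F(CZ)\sigma_1(C\lvert Z)^{\beta}]+(1-\lambda)\,\mathbb{E}_{\sigma_2}[F(CZ)\sigma_2(C\lvert Z)^{\beta}]\le \lambda+(1-\lambda)=1$, which yields the lemma. The only real subtlety (and thus the main thing to get right) is the bookkeeping for $z$ with $\sigma(z)=0$ so that the weights $w_i(z)$ are well-defined wherever they appear; this amounts to observing that such $z$ contribute nothing to either side of the final inequality under the stated conditional-probability convention.
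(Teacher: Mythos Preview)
Your proposal is correct and follows essentially the same approach as the paper's proof: both rewrite the expectation so that the key object is $\sigma(z)\sigma(c\lvert z)^{1+\beta}$, decompose $\sigma(c\lvert z)$ as a convex combination with weights $\lambda_i\sigma_i(z)/\sigma(z)$, apply convexity of $x\mapsto x^{1+\beta}$, multiply through by $\sigma(z)$ to recover the $\lambda_i\sigma_i(z)$ factors, and then sum against $F$. The paper's version spells out the degenerate cases $\sigma_i(z)=0$ a bit more explicitly, but your parenthetical remark covers the same ground.
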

\begin{proof}
For $z$ such that $\sigma_{1}(z),\sigma_{2}(z)>0$, we have $\sigma(z)>0$ as well, and from $\sigma(CZ)=\lambda\sigma_{1}(CZ)+(1-\lambda)\sigma_{2}(CZ)$, straightforward algebra shows that $\sigma(c\lvert z)=\delta\sigma_{1}(c\lvert z)+(1-\delta)\sigma_{2}(c\lvert z)$ for any $(c,z)\in\mathcal{C}\times\mathcal{Z}$, where $\delta=\lambda\sigma_{1}(z)/\sigma(z)\in[0,1]$. 
Since for $\alpha > 1$, $x^\alpha$ is convex for $x\ge 0$, we can write 
\begin{align}
    \sigma(c\lvert z)^{1+\beta} &\le \delta\sigma(c\lvert z)^{1+\beta} + (1-\delta)\sigma_{2}(c\lvert z)^{1+\beta}\nonumber \\
    \Rightarrow \sigma(c\lvert z)^{1+\beta}\sigma(z) &\le \lambda\sigma_{1}(c\lvert z)^{1+\beta}\sigma_{1}(z) + (1-\lambda)\sigma_{2}(c\lvert z)^{1+\beta}\sigma_{2}(z).\label{e:convexp}
\end{align}
Turning to cases where $\sigma_{1}(z)$ and/or $\sigma_{2}( z)$ may equal zero, we can also demonstrate \eqref{e:convexp} under the convention of taking $\sigma_i(c\lvert z)$ to be zero when $\sigma_{i}( z)=0$. Then the inequality holds as an equality when $\sigma_{1}(z)=\sigma_{2}(z)=0$ (which implies $\sigma (z)=0$ as well); for $0=\sigma_{2}(z)<\sigma_{1}(z)$
one can verify \eqref{e:convexp} after noting $\sigma(cz)=\lambda \sigma_1(cz)$ and $\sigma(z)=\lambda \sigma_1(z)$, and the $0=\sigma_{1}(z)<\sigma_{2}(z)$ case follows symmetrically. Now multiplying both sides of \eqref{e:convexp} by $F(cz)$ and summing over $(c,z)\in\mathcal{C}\times\mathcal{Z}$ gives
\begin{align*}
    \sum_{c,z}F(cz)\sigma(c\lvert z)^{1+\beta}\sigma(z)
    &\le \lambda\sum_{c,z}F(cz)\sigma_{1}(c\lvert z)^{1+\beta}\sigma_{1}(z) + (1-\lambda)\sum_{c,z}F(cz)\sigma_{2}(c\lvert z)^{1+\beta}\sigma_{2}(z)\\
\Rightarrow \mathbb{E}_{\sigma}[F(CZ)\sigma(C\lvert Z)^{\beta}] &\le \lambda\mathbb{E}_{\sigma_{1}}[F(CZ)\sigma_{1}(C\lvert Z)^{\beta}] + (1-\lambda)\mathbb{E}_{\sigma_{2}}[F(CZ)\sigma_{2}(C\lvert Z)^{\beta}]\\
&\le \lambda + (1-\lambda) = 1.
\end{align*}
\end{proof}

We remark that the result of Lemma 1 can also be obtained through specialisation of known quantum results to classical distributions; however, this requires a more technical argument with additional machinery. To elaborate, the proof for lemma \ref{lemma_conv_PEF} involves showing the joint convexity of $\sigma(C\lvert Z)^{1+\beta}\sigma(Z)$ which can be seen as a special case of the joint convexity of sandwiched R\'enyi powers. To be more specific, it arises as a special case of the joint convexity of $e^{\beta D_{1+\beta}(\sigma\lvert\lvert\omega)}$ for $\beta > 0$ when the distribution $\omega(CZ)$  is taken to be $\omega(cz)=\sigma(z)/\abs{\mathcal{C}},\,\forall (c,z)\in\mathcal{C}\times\mathcal{Z}$. Notice that $D_{1+\beta}(\sigma\lvert\lvert\omega)$ is the (classical) R\'enyi divergence of order $(1+\beta)\in(1,\infty)$ of $\sigma(CZ)$ with respect to $\omega(CZ)$. The functional $e^{D_{1+\beta}(\sigma\lvert\lvert\omega)}$ can also be seen as a specialisation (to classical states) of the same functional, defined in terms of (quantum) density states $\sigma$ and $\omega$, whose joint convexity was proven in proposition 3 of \cite{2013JMP....54l2201F} with an extended technical argument. 

\section{Asymptotic Performance}\label{section3}

The results of the previous section give us a method for certifying randomness. In this section, we assess the asymptotic performance of the method. Our figure of merit is the amount of randomness certified per trial, as measured by the average conditional min-entropy divided by the number of trials $n$. 
We will see in this section that the PEF method is asymptotically optimal, in the following sense: given a fixed observed distribution, the PEF method can asymptotically certify an amount of per-trial conditional min-entropy that is equal to the actual per-trial conditional min-entropy generated by an adversary replicating the observed distribution with as little randomness as possible.

To elaborate on this, consider that the adversary's goal is to minimise the following quantity:
$$
\frac{1}{n} \mathbb H^{\mathsf{avg}}_{\infty,\mu}(\mathbf{C}|\mathbf{Z}E).
$$
We assume that the adversary has complete knowledge of the distribution $\mu$, and can have access to not just the realised value of $E$, but also the realised value of $\mathbf Z$ in guessing $\mathbf C$. This access to $\mathbf{Z}$ aligns with the paradigm, as discussed in \cite{Bierhorst2018}, of ``using public (settings) randomness to generate private (outcome) randomness". The adversary is constrained, however, in that the statistics when marginalised over $E$ must appear to be consistent with an expected observed trial distribution $\rho(CZ)$ for the protocol to not abort. Technically, all that is necessary for the protocol to pass is that the observed product of the PEFs must exceed some threshold value chosen by the experimenter---which could be possible with high probability with many different distributions $\mu$---but as the experimenter's threshold value will likely be chosen based on a full behaviour that they expect to observe, we study attacks that match the expected observed trial distribution exactly. We will find attacks meeting this criterion that are asymptotically optimal for minimising the conditional min-entropy.

Given an expected observed distribution, how can the adversary generate observed statistics consistent with it while yielding as little randomness as possible? She can employ a strategy of preparing multiple different states to be measured that will yield different distributions, each one consistent with the trial model $\Pi$, whose convex mixture is equal to the observed distribution. If she has an auxiliary random variable $E$ realising values from the finite-cardinality set $\mathcal{E}$ and recording which state was prepared on which trial, she can predict better the outcome conditioned on her side information $E=e$, in conjunction with the settings $Z$. Indeed, some of her $e$-conditional distributions could be deterministic, in which case she does not yield any randomness to Alice and Bob on a trial where $E$ takes that value.\footnote[2]{A deterministic distribution must be understood as the product of a fixed settings distribution and a deterministic behaviour (conditional distribution of the outcomes conditioned on settings).} But if the overall observed statistics are non-local, then she is forced to prepare at least some states that contain randomness even conditioned on $e$; this, in essence, is because the information that she possesses with $E$ is a local hidden variable.

\subsection{I.I.D.~Attacks} Given a convex decomposition of the observed distribution, the adversary's simplest form of an attack is to select $e$ from some finite-cardinality set $\mathcal E$ in an i.i.d manner on each trial according to the distribution that recovers the observed distribution $\rho(CZ)$. A more general attack would allow her to use memory of earlier trials, but we will see later that, asymptotically, this does not yield meaningful improvement. 

Operationally, we do not like to think of the adversary accessing the devices in between trials to provide a choice of $e_i$ for each trial. Instead, one can imagine her randomly sampling from the distribution of $\mathbf E$ for all trials, coming up with a choice $\mathbf e$ that encodes all the choices of $e_i$ for each trial, and then supplying this choice to the measured system, in advance, to determine its behaviour in each trial. She keeps a record of $\mathbf e$ to help her predict $C$ later. Through this sampling process there is a small chance that she will sample an atypical ``bad" $\mathbf e$ that results in statistics deviating from the observed distribution, but the probability that her $\mathbf e$ is typical is asymptotically high. Our figure of merit for the adversary now is: 
$$
\frac{1}{n} \mathbb H^{\mathsf{avg}}_{\infty,\mu}(\mathbf{C}|\mathbf{Z}\mathbf{E}),
$$
which she wants to minimise with a distribution that, marginalised over $\mathbf {E}$, is consistent with i.i.d sampling from an expected observed distribution $\rho$. We formally define the set of distributions $\omega\colon\mathcal{C}\times\mathcal{Z}\times\mathcal{E}\to[0,1]$ of $C,Z,E$ mimicking $\rho$ through such a convex decomposition as follows, where $e$ is shorthand for the event $\{E=e\}$: 
\begin{equation}\label{def:strategy}
    \Sigma_{\mathrm E}^{\rho}\coloneqq \Big\{\omega (CZE) \colon \omega(CZ\lvert e)\in\Pi\,\forall e\in\mathcal{E},\sum_{e\in\mathcal{E}}\omega(CZ|e)\omega(e)=\rho(CZ)\Big\}.
\end{equation}
Then an IID attack can be defined as follows.
\begin{mydef}[IID Attack]\label{def:IID_attack} Given a distribution $\omega(CZE)\in\Sigma_{\mathrm E}^{\rho}$, we define an IID attack (with $\omega$) to be the distribution $\phi$ consisting of $n$ independent and identical realisations of random variables $C_{i},Z_{i},E_{i}$ distributed according to $\omega$; i.e., the joint distribution of the sequence of random variables $\mathbf{C,Z,E}$ is $\phi\colon\mathcal{C}^{n}\times\mathcal{Z}^{n}\times\mathcal{E}^{n}\to[0,1]$ such that $\phi(\mathbf{CZE})=\prod_{i=1}^{n}\omega(C_{i}Z_{i}E_{i})$.
\end{mydef}
As mentioned earlier, the adversary randomly samples from the distribution of $\mathbf{E}$ which represents their knowledge of all trials; $\mathbf{e}\equiv(e_{1},e_{2},\ldots,e_{n})\in\mathcal{E}^{n}$ encodes the individual choices $e_{i}$ for trial $i\in\{1,2,\ldots,n\}$. The IID attack model satisfies the two assumptions of the experiment model discussed earlier (see~\eqref{e:Expt_model_assumptions} and the short discussion that follows immediately). Namely, the (joint) probability of the $(i+1)$'th trial outcome and input setting, conditioned on each realisation of the outcomes and settings for the first $i$ trials and each realisation $\mathbf{e}\in\mathcal{E}^{n}$ of the side information, satisfies the conditions of the trial model; and conditioned on each $\mathbf{e}\in\mathcal{E}^{n}$, the settings for the $(i+1)$'th trial are (unconditionally) independent of the outcomes and settings of the past and present trials (i.e., the first $i$ trials). This is formally stated and proved in Lemma~\ref{l:iidattack_lemma} below.

\begin{lemma}\label{l:iidattack_lemma}
The IID attack strategy as defined in Definition~\ref{def:IID_attack} satisfies the following conditions.
\begin{align}
    \phi(C_{i+1}Z_{i+1}\lvert\mathbf{c}_{\le i}\mathbf{z}_{\le i}\mathbf{e}) &\in \Pi,\,\forall\,\mathbf{c}_{\le i}\in\mathcal{C}^{i},\,\mathbf{z}_{\le i}\in\mathcal{Z}^{i},\,\mathbf{e}\in\mathcal{E}^{n}\label{e:lemma3_r1} \\
    \phi(Z_{i+1}\mathbf{C}_{\le i}\mathbf{Z}_{\le i}\lvert\mathbf{e}) &= \phi(Z_{i+1}\lvert\mathbf{e})\phi(\mathbf{C}_{\le i}\mathbf{Z}_{\le i}\lvert\mathbf{e}),\,\forall\,\mathbf{e}\in\mathcal{E}^{n}\label{e:lemma3_r2}
\end{align}
\end{lemma}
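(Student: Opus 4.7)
The plan is to exploit the product structure $\phi(\mathbf{CZE})=\prod_{j=1}^n\omega(C_jZ_jE_j)$ of the IID attack, which makes both claims essentially bookkeeping exercises in manipulating conditional probabilities. Because the $n$ trial triples $(C_j,Z_j,E_j)$ are independent under $\phi$, conditioning on the outcomes/settings from trials $1,\dotsc,i$ together with the full side-information vector $\mathbf{e}$ will collapse to conditioning only on the single component $e_{i+1}$ of $\mathbf{e}$ for the $(i+1)$-th trial.

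First, for condition~\eqref{e:lemma3_r1}, I would write out $\phi(C_{i+1}Z_{i+1},\mathbf{c}_{\le i},\mathbf{z}_{\le i},\mathbf{e})$ as $\omega(C_{i+1}Z_{i+1}e_{i+1})\prod_{j\le i}\omega(c_jz_je_j)\prod_{j>i+1}\omega(e_j)$, where the last product arises from marginalising $C_j,Z_j$ for $j>i+1$. Dividing by $\phi(\mathbf{c}_{\le i}\mathbf{z}_{\le i}\mathbf{e})=\prod_{j\le i}\omega(c_jz_je_j)\prod_{j>i}\omega(e_j)$, everything cancels except a single factor $\omega(C_{i+1}Z_{i+1}e_{i+1})/\omega(e_{i+1})=\omega(C_{i+1}Z_{i+1}\lvert e_{i+1})$. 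Since $\omega\in\Sigma_{\mathrm E}^{\rho}$, by definition \eqref{def:strategy} we have $\omega(CZ\lvert e)\in\Pi$ for every $e\in\mathcal{E}$, so the resulting distribution lies in $\Pi$, giving \eqref{e:lemma3_r1}. A small care point is what to do when $\phi(\mathbf{c}_{\le i}\mathbf{z}_{\le i}\mathbf{e})=0$; by the standard convention (also used in Definition~\ref{PEF_def}) the conditional is declared to be zero, which I would fold in without much comment.

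For condition~\eqref{e:lemma3_r2}, I would proceed similarly: $\phi(Z_{i+1},\mathbf{C}_{\le i}\mathbf{Z}_{\le i},\mathbf{e})$ factors (after marginalising the outcomes of the yet-unrealised future trials and the outcome $C_{i+1}$) as $[\omega(Z_{i+1}e_{i+1})]\cdot[\prod_{j\le i}\omega(C_jZ_je_j)]\cdot[\prod_{j>i+1}\omega(e_j)]$. Dividing by $\phi(\mathbf{e})=\prod_{j=1}^n\omega(e_j)$ and matching factors shows that the joint $\mathbf{e}$-conditional distribution splits into the product of $\phi(Z_{i+1}\lvert\mathbf{e})=\omega(Z_{i+1}\lvert e_{i+1})$ and $\phi(\mathbf{C}_{\le i}\mathbf{Z}_{\le i}\lvert\mathbf{e})=\prod_{j\le i}\omega(C_jZ_j\lvert e_j)$, which is exactly \eqref{e:lemma3_r2}.

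There isn't really a hard step here; the whole lemma is a direct verification that the defining features of an experiment model \eqref{e:Expt_model_assumptions} are preserved when the adversary restricts herself to the IID product distribution. The only minor subtlety worth handling explicitly is the treatment of zero-probability conditioning events and the fact that $\mathbf{e}$ is a full length-$n$ vector rather than just its first $i+1$ coordinates; the independence across trials under $\phi$ is what makes the ``future'' components $e_{j}$ for $j>i+1$ drop out harmlessly.
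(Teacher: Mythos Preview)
Your proposal is correct and follows essentially the same approach as the paper: both arguments exploit the product structure $\phi(\mathbf{CZE})=\prod_j\omega(C_jZ_jE_j)$ to reduce the conditional distribution of the $(i{+}1)$-th trial to $\omega(C_{i+1}Z_{i+1}\lvert e_{i+1})\in\Pi$, and then marginalise to obtain the independence in~\eqref{e:lemma3_r2}. The only cosmetic difference is that the paper conditions on $\mathbf{e}$ at the outset (working with $\phi(\mathbf{CZ}\lvert\mathbf{e})=\prod_j\omega(C_jZ_j\lvert e_j)$), whereas you carry the full joint and let the factors $\omega(e_j)$ for $j>i+1$ cancel explicitly; the computations are otherwise identical.
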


\begin{proof}
Consider the distribution $\phi(\mathbf{CZ}\lvert\mathbf{e})$ conditioned on a realisation $\mathbf{E}=\mathbf{e}$, where $\phi(\mathbf{CZE})=\prod_{i=1}^{n}\omega(C_{i}Z_{i}E_{i})$. Notice that $\phi(\mathbf{CZ}\lvert\mathbf{e})=\prod_{i=1}^{n}\omega(C_{i}Z_{i}\lvert e_{i})$. Marginalising over the random variables $C_{i+2},C_{i+3},\ldots,C_{n},Z_{i+2},Z_{i+3},\ldots,Z_{n}$ we get:
\begin{equation}\label{e:lemma3_eq1}
    \phi(C_{i+1}Z_{i+1}\mathbf{C}_{\le i}\mathbf{Z}_{\le i}\lvert\mathbf{e}) = \prod_{j=1}^{i+1}\omega(C_{j}Z_{j}\lvert e_{j})
\end{equation}
Corresponding to a particular realisation $\mathbf{c}_{\le i}\in\mathcal{C}^{i},\,\mathbf{z}_{\le i}\in\mathcal{Z}^{i}$, we then have $\phi(C_{i+1}Z_{i+1}\mathbf{c}_{\le i}\mathbf{z}_{\le i}\lvert\mathbf{e})=\omega(C_{i+1}Z_{i+1}\lvert e_{i+1})\prod_{j=1}^{i}\omega(c_{j}z_{j}\lvert e_{j})$; and since $\phi(\mathbf{c}_{\le i}\mathbf{z}_{\le i}\lvert\mathbf{e})=\prod_{j=1}^{i}\omega(c_{j}z_{j}\lvert e_{j})$, we have
\begin{equation}\label{e:lemma3_eq2}
    \frac{\phi(C_{i+1}Z_{i+1}\mathbf{c}_{\le i}\mathbf{z}_{\le i}\lvert\mathbf{e})}{\phi(\mathbf{c}_{\le i}\mathbf{z}_{\le i}\lvert\mathbf{e})} = \phi(C_{i+1}Z_{i+1}\lvert\mathbf{c}_{\le i}\mathbf{z}_{\le i}\mathbf{e}) = \omega(C_{i+1}Z_{i+1}\lvert e_{i+1}).
\end{equation}
$\omega(C_{i+1}Z_{i+1}\lvert e_{i+1})$ belongs to the set $\Pi$ for all values of $e_{i+1}\in\mathcal{E}$ (by construction of the set $\Sigma_{E}$, see~\eqref{def:strategy}). Since~\eqref{e:lemma3_eq2} is true for all realisations $\mathbf{c}_{\le i}\in\mathcal{C}^{i},\,\mathbf{z}_{\le i}\in\mathcal{Z}^{i},\,\mathbf{e}\in\mathcal{E}^{n}$ we conclude~\eqref{e:lemma3_r1} holds. Next, marginalising~\eqref{e:lemma3_eq1} over $C_{i+1}$ we have:
\begin{equation}\label{e:lemma_eq3}
    \phi(Z_{i+1}\mathbf{C}_{\le i}\mathbf{Z}_{\le i}\lvert\mathbf{e}) = \omega(Z_{i+1}\lvert e_{i+1})\prod_{j=1}^{i}\omega(C_{j}Z_{j}\lvert e_{j})=\phi(Z_{i+1}\lvert\mathbf{e})\phi(\mathbf{C}_{\le i}\mathbf{Z}_{\le i}\lvert\mathbf{e})
\end{equation}
In~\eqref{e:lemma_eq3}, $\omega(Z_{i+1}\lvert e_{i+1})=\phi(Z_{i+1}\lvert\mathbf{e})$ can be observed by marginalising~\eqref{e:lemma3_eq1} over the random variables $C_{1},\ldots,C_{i},C_{i+1},Z_{1},\ldots,Z_{i}$ and $\phi(\mathbf{C}_{\le i}\mathbf{Z}_{\le i}\lvert\mathbf{e})=\prod_{j=1}^{i}\omega(C_{j}Z_{j}\lvert e_{j})$ (from marginalising \eqref{e:lemma3_eq1} over $C_{i+1},Z_{i+1}$);~\eqref{e:lemma_eq3} is true for all $\mathbf{e}\in\mathcal{E}^{n}$, hence we conclude~\eqref{e:lemma3_r2}.
\end{proof}

Next, the adversary would like to implement an attack that ``generates as little randomness as possible". One measure of the randomness is the conditional Shannon entropy of the outcomes $C$ conditioned on the inputs $Z$ and the side information $E$.

\begin{mydef}[Conditional Shannon Entropy]
For a distribution $\mu\colon\mathcal{C}\times\mathcal{Z}\times\mathcal{E}\to[0,1]$ of $C,Z,E$ the conditional Shannon entropy of the outcomes $C$ conditioned on the settings $Z$ and the side information $E$ is defined as
\begin{equation}\label{e:con_Sh_entropy}
    \mathbb{H}_{\mu}(C\lvert ZE) = -\sum_{cze}\log_{2}\mu(c\lvert ze)\mu(cze)
    = \mathbb{E}_{\mu}[-\log_{2}\mu(C\lvert ZE)].
\end{equation}
\end{mydef}
The Greek letter $\mu$ in the subscript of $\mathbb{H}_{\mu}(\cdot\lvert\cdot)$ refers to the distribution $\mu(CZE)$ with respect to which the conditional Shannon entropy is defined.

Theorem \ref{thm_AEP} below shows that $H_\omega(C|ZE)$ is an asymptotic upper bound on the per-trial conditional min-entropy that the adversary generates with an IID attack employing a trial distribution $\omega$ that is consistent with the observed distribution $\rho$. This result was discussed but not demonstrated explicitly in~\cite{PhysRevResearch.2.033465}. The proof of Theorem~\ref{thm_AEP} involves one of the fundamental technical tools from information theory, the (classical) Asymptotic Equipartition Property (AEP), or equivalently the notion of typical sequences which has the weak law of large numbers at its core. 

Suppose $\mu$, the distribution of all trials, is obtained as $n$ i.i.d. copies of a single-trial distribution $\omega$. Then for $\epsilon_{\mathrm{a}}\in(0,1)$, $\delta\ge 0$ there exists $N(\epsilon_{\mathrm{a}},\delta)$ such that $n\ge N(\epsilon_{\mathrm{a}},\delta)$ ensures $\mathbb{E}_{\mu(\mathbf{ZE})}[\mathbb{P}_{\mu(\mathbf{C}\lvert\mathbf{ZE})}(\mu(\mathbf{C}\lvert\mathbf{ZE})\ge\gamma)]\ge 1-\epsilon_{\mathrm{a}}$, where $\gamma = 2^{-nH_{\omega}(C\lvert ZE)-n\delta}$ and $H_{\omega}(C\lvert ZE)$ is the conditional Shannon entropy. We refer to this as the AEP condition; it holds by a conditional form of the classical AEP (see, for instance, Section 14.6 in~\cite{wilde_2013}). The set $B^{\epsilon_{\mathrm{s}}}(\mu)$ of distributions of $\mathbf{C,Z,E}$ that are within a $\mathrm{TV}$-distance of $\epsilon_{\mathrm s}$ from $\mu$ and the sets $\mathcal{A}_{\mathbf{ze}}$ are as defined below:
    \begin{align}
    \mathcal{B}^{\epsilon_{\mathrm{s}}}(\mu) &\coloneqq \{\sigma\colon\mathcal{C}^{n}\times\mathcal{Z}^{n}\times\mathcal{E}^{n}\to[0,1] \mid d_\mathrm{TV}(\mu,\sigma)\le \epsilon_{\mathrm{s}}\}\label{eq15},\\
    \mathcal{A}_{\mathbf{ze}} &\coloneqq \{\mathbf{c}\in\mathcal{C}^{n}\mid \mu(\mathbf{c}\lvert\mathbf{ze})\ge \gamma\},\label{eq16}
    \end{align}
where $\mathcal{A}_{\mathbf{ze}}$ is defined for any $\mathbf{ze}$ for which $\mu(\mathbf{ze})>0$. Note that the case $\epsilon_{\mathrm s}=0$ reduces to a bound on the standard (non-smooth) average conditional min-entropy. We now state the result as follows.

\begin{theorem}\label{thm_AEP}
Let $\mu$ be an IID attack with $\omega$. For $\epsilon_{\mathrm s}\ge 0$, $\epsilon_{\mathrm a},\delta >0$ and $\epsilon_{\mathrm a}+2\epsilon_{\mathrm s}<1$, there exists $N(\epsilon_{\mathrm a},\epsilon_{\mathrm s},\delta)$ such that for $n\ge N(\epsilon_{\mathrm a},\epsilon_{\mathrm s},\delta)$
\begin{equation}\label{thm_AEP_ineq}
    \frac{1}{n}\mathbb H_{\infty,\mu}^{\mathsf{avg},\epsilon_{\mathrm s}}(\mathbf{C}\lvert\mathbf{ZE}) \le \mathbb H_{\mu}(C\lvert ZE) + \frac{1}{n}\log_{2}\frac{1}{1-\epsilon_{\mathrm a}-2\epsilon_{\mathrm s}} + \delta.
\end{equation}
\end{theorem}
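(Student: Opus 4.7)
The plan is to identify, via the classical conditional AEP quoted just before the theorem, a typical set $T$ on which $\mu$ concentrates, bound the cardinality of its $\mathbf{ze}$-slices $\mathcal A_{\mathbf{ze}}$, and then convert these facts into a lower bound on the guessing probability $\sum_{\mathbf{ze}}\max_{\mathbf c}\sigma(\mathbf{cze})$ that holds for every $\sigma\in\mathcal B^{\epsilon_{\mathrm s}}(\mu)$. Taking $-\log_2$ of that bound and maximising over $\sigma$ will then yield the claimed upper bound on $\mathbb H^{\mathsf{avg},\epsilon_{\mathrm s}}_{\infty,\mu}(\mathbf C\lvert\mathbf{ZE})$.

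Concretely, I would first set $\gamma = 2^{-n\mathbb H_\omega(C\lvert ZE)-n\delta}$ and read the AEP condition as the statement that $\mu(T)\ge 1-\epsilon_{\mathrm a}$ for all $n\ge N(\epsilon_{\mathrm a},\delta)$, where $T = \{(\mathbf c,\mathbf z,\mathbf e):\mu(\mathbf c\lvert\mathbf{ze})\ge\gamma\}$ so that its slices are exactly the sets $\mathcal A_{\mathbf{ze}}$ defined in \eqref{eq16}. The crucial combinatorial observation is the slice-wise cardinality bound $|\mathcal A_{\mathbf{ze}}|\le 1/\gamma$, which follows because every element of $\mathcal A_{\mathbf{ze}}$ carries conditional $\mu$-mass at least $\gamma$ while the total is at most $1$. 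Combining this with the elementary ``max dominates average'' inequality gives, for every $\sigma\in\mathcal B^{\epsilon_{\mathrm s}}(\mu)$ and every $\mathbf{ze}$ with $\mu(\mathbf{ze})>0$,
\begin{equation*}
\max_{\mathbf c}\sigma(\mathbf{cze})\ge \frac{1}{|\mathcal A_{\mathbf{ze}}|}\sum_{\mathbf c\in\mathcal A_{\mathbf{ze}}}\sigma(\mathbf{cze})\ge \gamma\,\sigma\!\left(\mathcal A_{\mathbf{ze}}\times\{\mathbf{ze}\}\right),
\end{equation*}
and summing over $\mathbf{ze}$ (contributions from $\mathbf{ze}$ outside $\supp \mu$ are non-negative and only help) yields $\sum_{\mathbf{ze}}\max_{\mathbf c}\sigma(\mathbf{cze})\ge \gamma\,\sigma(T)$.

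To finish, I would use that $d_{\mathrm{TV}}(\sigma,\mu)\le\epsilon_{\mathrm s}$ implies $|\sigma(T)-\mu(T)|\le\epsilon_{\mathrm s}$, and hence $\sigma(T)\ge 1-\epsilon_{\mathrm a}-\epsilon_{\mathrm s}\ge 1-\epsilon_{\mathrm a}-2\epsilon_{\mathrm s}$, giving the bound $\sum_{\mathbf{ze}}\max_{\mathbf c}\sigma(\mathbf{cze})\ge\gamma(1-\epsilon_{\mathrm a}-2\epsilon_{\mathrm s})$. Taking $-\log_2$, maximising over $\sigma$, dividing by $n$, and recalling $\mathbb H_\mu(C\lvert ZE)=\mathbb H_\omega(C\lvert ZE)$ (since $\omega$ is the single-trial marginal of the IID $\mu$) delivers \eqref{thm_AEP_ineq}. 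The $\epsilon_{\mathrm s}=0$ case just reduces to applying the bound with $\sigma=\mu$.

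The main obstacle I anticipate is not algebraic but bookkeeping: stating the cited conditional form of the classical AEP precisely enough to place the lower bound on $\mu(\mathbf C\lvert\mathbf{ZE})$ itself (rather than on the joint mass), choosing the threshold $n\ge N(\epsilon_{\mathrm a},\epsilon_{\mathrm s},\delta)$ so that the smoothing parameter $\epsilon_{\mathrm s}$ does not interfere with the AEP output, and handling the nuisance of $\sigma$ possibly putting mass on $\mathbf{ze}$ outside $\supp \mu$. The conceptual heart of the argument, however, is the slice-wise cardinality bound $|\mathcal A_{\mathbf{ze}}|\le 1/\gamma$, which is precisely the device that converts the statement ``$\mu$ concentrates on a set with small $\mathbf{ze}$-slices'' into a guessing-probability and hence min-entropy bound.
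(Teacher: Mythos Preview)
Your proposal is correct and takes essentially the same approach as the paper: both locate the AEP-typical set, bound its $\mathbf{ze}$-slice cardinalities by $1/\gamma$, and convert this into the guessing-probability bound $\bar M_\sigma\ge\gamma\,\sigma(T)$ for every $\sigma\in\mathcal B^{\epsilon_{\mathrm s}}(\mu)$. The only cosmetic difference is that you invoke the event characterization of total variation to obtain $\sigma(T)\ge 1-\epsilon_{\mathrm a}-\epsilon_{\mathrm s}$ directly (then weaken it), whereas the paper routes through the $\ell_1$ norm and the triangle inequality to reach $1-\epsilon_{\mathrm a}-2\epsilon_{\mathrm s}$.
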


\begin{proof}
Throughout, we follow the convention that $\sigma(\mathbf{c}\lvert\mathbf{ze})=0$ for all $\mathbf{c}\in\mathcal{C}^{n}$ for any $\mathbf{ze}\in\mathcal{Z}^{n}\times\mathcal{E}^{n}$ with $\sigma(\mathbf{ze})=0$. We begin with the inequality $d_\mathrm{TV}(\sigma,\mu)\le\epsilon_{\mathrm{s}}$ that any $\sigma\in\mathcal{B}^{\epsilon_{\mathrm s}}(\mu)$ must satisfy and proceed as follows:

\begin{align}
    2\epsilon_{\mathrm s} &\ge \norm{\mu-\sigma}_{1}=\sum_{\mathbf{cze}\in\mathcal{C}^{n}\times\mathcal{Z}^{n}\times\mathcal{E}^{n}}\abs{\mu(\mathbf{cze})-\sigma(\mathbf{cze})}\nonumber\\
    &\ge \sum_
{\mathbf{ze}:\mu(\mathbf{ze})>0}\,\sum_{\mathbf{c} \in \mathcal{A}_{\mathbf{ze}}}\abs{\mu(\mathbf{cze})-\sigma(\mathbf{cze})} \label{thm3_1}\\
    &\ge \abs{\sum_
{\mathbf{ze}:\mu(\mathbf{ze})>0}\,\sum_{\mathbf{c} \in \mathcal{A}_{\mathbf{ze}}}\left(\mu(\mathbf{cze})-\sigma(\mathbf{cze})\right)}\label{thm3_2} \\
    &= \abs{\mathbb{E}_{\mu(\mathbf{ZE})}[\mathbb{P}_{\mu(\mathbf{C}\lvert\mathbf{ZE})}(\mu(\mathbf{C}\lvert\mathbf{ZE})\ge\gamma)] - \sum_
{\mathbf{ze}:\mu(\mathbf{ze})>0}\,\sum_{\mathbf{c} \in \mathcal{A}_{\mathbf{ze}}}\sigma(\mathbf{cze})}\nonumber\\
&\ge \mathbb{E}_{\mu(\mathbf{ZE})}[\mathbb{P}_{\mu(\mathbf{C}\lvert\mathbf{ZE})}(\mu(\mathbf{C}\lvert\mathbf{ZE})\ge\gamma)] - \sum_
{\mathbf{ze}:\mu(\mathbf{ze})>0}\,\sum_{\mathbf{c} \in \mathcal{A}_{\mathbf{ze}}}\sigma(\mathbf{cze}).\nonumber
\end{align}
The inequality in~\eqref{thm3_1} follows as a result of the sum containing fewer terms; the inequality in~\eqref{thm3_2} follows from the triangle inequality. Now from the AEP condition mentioned above we have the following:
\begin{equation}\label{thm3_3}
    \sum_
{\mathbf{ze}:\mu(\mathbf{ze})>0}\,\sum_{\mathbf{c} \in \mathcal{A}_{\mathbf{ze}}}\sigma(\mathbf{cze}) \ge \mathbb{E}_{\mu(\mathbf{ZE})}[\mathbb{P}_{\mu(\mathbf{C}\lvert\mathbf{ZE})}(\mu(\mathbf{C}\lvert\mathbf{ZE})\ge\gamma)] - 2\epsilon_{\mathrm s}\ge 1-\epsilon_{\mathrm a}-2\epsilon_{\mathrm s}.
\end{equation}
For any $\sigma\in\mathcal{B}^{\epsilon_{\mathrm s}}(\mu)$, we define $M_{\mathbf{ze}}^{\sigma}$ for any $\mathbf{ze}\in\mathcal{Z}^{n}\times\mathcal{E}^{n}$ as $M_{\mathbf{ze}}^{\sigma} \coloneqq \max_{\mathbf{c}\in\mathcal{C}^{n}}\sigma(\mathbf{c}\lvert\mathbf{ze})$. The average conditional maximum probability is then expressed as $\bar{M}_{\sigma}\coloneqq\sum_{\mathbf{ze}}M_{\mathbf{ze}}^{\sigma}\sigma(\mathbf{ze})$. Now, because $1\le \sum_{c\in \mathcal{A}_{\mathbf{ze}}}\mu(\mathbf{c}|\mathbf{ze})\le \gamma |\mathcal{A}_{\mathbf{ze}}|$, we have $\abs{\mathcal{A}_{\mathbf{ze}}}\le 1/\gamma$ for each $\mathbf{ze}$, and we can write:
\begin{align}\label{thm3_4}
     \sum_
{\mathbf{ze}:\mu(\mathbf{ze})>0}\,\sum_{\mathbf{c}\in\mathcal{A}_{\mathbf{ze}}}\sigma(\mathbf{cze}) &=  \sum_
{\mathbf{ze}:\mu(\mathbf{ze})>0}\,\sum_{\mathbf{c}\in\mathcal{A}_{\mathbf{ze}}}\sigma(\mathbf{c}\lvert\mathbf{ze})\sigma(\mathbf{ze})\le \sum_
{\mathbf{ze}:\mu(\mathbf{ze})>0}\,\sum_{\mathbf{c}\in\mathcal{A}_{\mathbf{ze}}}M_{\mathbf{ze}}^{\sigma}\sigma(\mathbf{ze})\nonumber\\
    &=  \sum_
{\mathbf{ze}:\mu(\mathbf{ze})>0}\abs{\mathcal{A}_{\mathbf{ze}}}M_{\mathbf{ze}}^{\sigma}\sigma(\mathbf{ze})\le\frac{1}{\gamma}\sum_{\mathbf{ze}}M_{\mathbf{ze}}^{\sigma}\sigma(\mathbf{ze})=\frac{\bar{M}_{\sigma}}{\gamma}.
\end{align}
Using~\eqref{thm3_3} and~\eqref{thm3_4} we obtain $\bar{M}_{\sigma}\ge\gamma(1-\epsilon_{\mathrm a}-2\epsilon_{\mathrm s})$ from which~\eqref{thm_AEP_ineq} follows using the definition of smooth average conditional min-entropy.
\end{proof}

Having shown that the per-trial min-entropy generated by an IID attack is asymptotically bounded by the conditional Shannon entropy, we give the following definition of an \emph{optimal} attack.

\begin{mydef}[Optimal IID Attack]\label{def:optimal_IID_attack}
The distribution $\mu(\mathbf{CZE})$ of the sequence of random variables $\mathbf{C,Z,E}$ is an optimal IID attack strategy if $\mu$ is obtained through an IID attack based on a single-trial distribution $\omega$ whose conditional Shannon entropy achieves the infimum defined below:
\begin{equation}\label{e:def_IIDattack}
    h_{\min}(\rho) \coloneqq \inf_{\omega(CZE)\in\Sigma_{\mathrm{E}}^{\rho}}\mathbb H_{\omega}(C\lvert ZE)
\end{equation}
\end{mydef}

Additional motivation for naming the attack of Definition~\ref{def:optimal_IID_attack} \textit{optimal} is provided by later results in this section, which show that the adversary must generate \textit{at least} $h_{\min}(\rho)$ of per-trial conditional min-entropy asymptotically with any attack that replicates the observed distribution $\rho$. 

In the theorem that follows, we formalise the claim that the infimum in \eqref{e:def_IIDattack} is achieved. This theorem corresponds to Theorem 43 in \cite{PhysRevResearch.2.013016}; in comparison, the  comprehensive proof provided here explicitly works out more of the steps. Crucially, this explicit approach also allowed us to provide an improvement upon the result of Theorem 43 in~\cite{PhysRevResearch.2.013016}, decreasing the required value of $ \abs{\mathcal E}$ by one, thereby better characterising the adversary's optimal attack. Results in Section~\ref{s:robustandoptimal} will illustrate that no further improvement, i.e., a decrease in $\abs{\mathcal E}$, is possible.

\begin{theorem}\label{thm_hmin_achieved} Suppose $\Pi$ is closed and equal to the convex hull of its extreme points. Then there is a distribution $\mu(CZE)\in\Sigma_{\mathrm E}^{\rho}$ with $\abs{\mathcal{E}}=1+\dim \Pi$ such that 
$\mathbb H_{\mu}(C\lvert ZE)=h_{\min}(\rho)$.
\end{theorem}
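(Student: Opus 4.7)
\noindent\textit{Proof plan.} The natural approach is a geometric reformulation. Define
\begin{equation*}
\mathcal{G} \coloneqq \{(\sigma(CZ),\, \mathbb H_{\sigma}(C\lvert Z)) : \sigma\in\Pi\}\subset \mathbb{R}^{\abs{\mathcal{C}}\abs{\mathcal{Z}}+1},
\end{equation*}
so that any $\omega\in \Sigma_{\mathrm E}^{\rho}$ corresponds to a convex combination of points of $\mathcal{G}$ whose first block of coordinates is $\rho(CZ)$ and whose last coordinate equals $\mathbb H_{\omega}(C\lvert ZE)$. This gives
\begin{equation*}
h_{\min}(\rho) = \inf\{h : (\rho, h) \in \operatorname{conv}(\mathcal{G})\},
\end{equation*}
and the task reduces to realising this infimum with a convex combination using at most $1+\dim\Pi$ points of $\mathcal{G}$. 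To show the infimum is attained, note that $\Pi$ is closed and bounded (as a subset of the probability simplex on $\mathcal{C}\times\mathcal{Z}$), hence compact, and $\sigma\mapsto\mathbb H_\sigma(C\lvert Z)=\mathbb H_\sigma(CZ)-\mathbb H_\sigma(Z)$ is continuous under the standard convention $0\log 0=0$; thus $\mathcal{G}$ is compact, and by Carath\'eodory's theorem so is $\operatorname{conv}(\mathcal{G})$. Hence $\{h:(\rho,h)\in\operatorname{conv}(\mathcal{G})\}$ is closed and bounded below, and the infimum is attained. Carath\'eodory applied inside $\operatorname{aff}(\mathcal{G})$, which has dimension at most $\dim\Pi+1$, already yields a minimising convex combination using at most $\dim\Pi+2$ terms.

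The crucial reduction from $\dim\Pi+2$ to $\dim\Pi+1$ terms is by a supporting-hyperplane argument. In the generic case $\dim\operatorname{aff}(\mathcal{G})=\dim\Pi+1$, the point $(\rho, h_{\min}(\rho))$ must lie on the relative boundary of $\operatorname{conv}(\mathcal{G})$ inside $\operatorname{aff}(\mathcal{G})$: otherwise we could perturb in a direction decreasing the $h$-coordinate while remaining inside $\operatorname{conv}(\mathcal{G})$, contradicting minimality of $h_{\min}(\rho)$. A supporting affine hyperplane $L$ of $\operatorname{aff}(\mathcal{G})$ through this boundary point places $\operatorname{conv}(\mathcal{G})$ entirely on one closed side, so any convex combination of points of $\mathcal{G}$ summing to a point of $L$ must itself consist only of points lying on $L$. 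Therefore $(\rho,h_{\min}(\rho))\in\operatorname{conv}(\mathcal{G}\cap L)$, and since $L$ is an affine subspace of dimension $\dim\Pi$, Carath\'eodory inside $L$ produces a convex combination of at most $1+\dim\Pi$ points of $\mathcal{G}$. Reading off the corresponding $\sigma_e\in\Pi$ and weights $p_e$ then yields the desired $\mu(CZE)\in\Sigma_{\mathrm E}^\rho$ with $\abs{\mathcal{E}}=1+\dim\Pi$, $\mu(CZ\lvert e)=\sigma_e$, $\mu(e)=p_e$, and $\mathbb H_\mu(C\lvert ZE)=h_{\min}(\rho)$; any unused indices are padded with zero weight. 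The degenerate case $\dim\operatorname{aff}(\mathcal{G})=\dim\Pi$ occurs precisely when $\mathbb H_\sigma(C\lvert Z)$ is affine on $\Pi$, in which case every decomposition gives $h_{\min}(\rho)=\mathbb H_\rho(C\lvert Z)$ and the trivial single-term decomposition $\sigma_1=\rho$ suffices (again padded to the required cardinality).

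The main obstacle is the boundary step: care is required to work with the \emph{relative} boundary inside $\operatorname{aff}(\mathcal{G})$ rather than the topological boundary in the ambient $\mathbb{R}^{\abs{\mathcal{C}}\abs{\mathcal{Z}}+1}$, to verify that a decrease of $h$ really is a feasible direction within $\operatorname{aff}(\mathcal{G})$, and to handle the degenerate affine-$\mathbb H$ case separately. Once these are settled, the improvement by one over Theorem~43 of~\cite{PhysRevResearch.2.013016} comes precisely from applying Carath\'eodory inside the $(\dim\Pi)$-dimensional supporting hyperplane $L$ instead of the $(\dim\Pi+1)$-dimensional $\operatorname{aff}(\mathcal{G})$; the subsequent results in Section~\ref{s:robustandoptimal}, which exhibit instances where $1+\dim\Pi$ is achieved with equality, indicate that no further reduction is possible.
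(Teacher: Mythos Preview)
Your proof is correct and follows the same geometric strategy as the paper---lift to one extra entropy coordinate, locate the minimiser on the (relative) boundary of the convex hull, apply a supporting hyperplane, and use Carath\'eodory inside it. The only notable differences are presentational: you take $\mathcal{G}$ over all of $\Pi$ (giving compactness of $\operatorname{conv}(\mathcal{G})$ directly and bypassing the paper's closure-of-$\Pi_{\mathrm{extr}}$ argument) and work with the relative boundary inside $\operatorname{aff}(\mathcal{G})$, whereas the paper first builds an explicit affine chart $g:\mathbb{R}^N\to\mathbb{R}^{\dim\Pi}$ (which it then reuses for Theorem~\ref{t:est}).
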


\noindent \textit{Proof.} See Appendix \ref{a:convex}.

\medskip

Theorem~\ref{thm_hmin_achieved}, in conjunction with the bound in Theorem~\ref{thm_AEP}, sets a benchmark for how well the adversary can do with an IID attack that replicates the observed distribution $\rho(CZ)$. Specifically, the adversary's goal is to minimise the amount of per trial conditional min-entropy, and this shows there exists a strategy to replicate the observed statistics while conceding no more min-entropy per trial than $h_{\min} (\rho)$, asymptotically.

\subsection{Optimal PEFs}We now show that PEFs can asymptotically certify a min-entropy of $h_{\min} (\rho)$ per trial from an observed distribution $\rho$ . This is notable since it shows that an IID attack can be asymptotically optimal: since the PEF method certifies the presence of $h_{\min}(\rho)$ min-entropy per trial against \textit{any} attack, this means no attack can generate observed statistics consistent with $\rho$ while conceding a smaller amount of randomness. This furthermore demonstrates that there is nothing to be gained (asymptotically) by the adversary employing a more sophisticated memory-based attack, since the PEF method allows for the possibility of memory attacks. Conversely, the below results  show that the PEF method is asymptotically optimal: no (correct) method can certify more min-entropy per trial from $\rho$ than the amount that is present in an explicit attack.

To formalise and prove these claims, we use the following technical tool, called an ``entropy estimator" as in \cite{PhysRevResearch.2.033465}. 

\begin{mydef}[Entropy Estimator]\label{def:EntropyEstimator}
An entropy estimator of the model $\Pi$ is a function $K(CZ)$ of the random variables $C,Z$ such that $\mathbb{E}_{\sigma}[K(CZ)]\le\mathbb{E}_{\sigma}[-\log_{2}(\sigma(C\lvert Z))]$ holds for all $\sigma(CZ)\in\Pi$.
\end{mydef}
Given an entropy estimator $K(CZ)$, we say that its \emph{entropy estimate} at a distribution $\sigma(CZ)$ is $\mathbb{E}_{\sigma}[K(CZ)]$. We will see below that an entropy estimator can be used to construct PEFs certifying per-trial min-entropy arbitrarily close to its entropy estimate, underlying the significance of the following result:

\begin{theorem}\label{t:est}
Suppose $\Pi$ satisfies the conditions of Theorem~\ref{thm_hmin_achieved} and $\rho$ is in the interior of $\Pi$. Then there exists an entropy estimator whose entropy estimate at $\rho$ is equal to $h_{\min}(\rho)$.
\end{theorem}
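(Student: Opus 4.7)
The plan is to recognize $h_{\min}$ as the lower convex envelope of the conditional Shannon entropy on $\Pi$, and to construct the entropy estimator as a supporting affine minorant to this envelope at the interior point $\rho$.

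First I would translate the target into a statement about affine functions. An entropy estimator $K$ corresponds to the affine functional $L_K(\sigma):=\mathbb E_\sigma[K(CZ)]$ on distributions over $\mathcal C\times\mathcal Z$, and the condition of Definition~\ref{def:EntropyEstimator} is precisely $L_K\le H$ on $\Pi$, where $H(\sigma):=\mathbb H_\sigma(C\lvert Z)$. Conversely, using $\sum_{cz}\sigma(cz)=1$, any affine function $\sigma\mapsto\sum_{cz}a_{cz}\sigma(cz)+b$ on distributions can be uniformly rewritten as $\mathbb E_\sigma[K]$ with $K(cz):=a_{cz}+b$. So the theorem reduces to producing an affine $L\le H$ on $\Pi$ with $L(\rho)=h_{\min}(\rho)$.

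Second I would identify $h_{\min}$ with the lower convex envelope $H^{**}$ of $H$ on $\Pi$. For any $\omega\in\Sigma_{\mathrm E}^{\rho}$, expanding~\eqref{e:con_Sh_entropy} gives $\mathbb H_\omega(C\lvert ZE)=\sum_e\omega(e)\,H(\omega(CZ\lvert e))$, exhibiting the attack's conditional Shannon entropy as a convex combination of values of $H$ at points of $\Pi$ whose mixture is $\rho$. Conversely, any finite convex decomposition $\rho=\sum_i\lambda_i\sigma_i$ in $\Pi$ arises from a suitable $\omega\in\Sigma_{\mathrm E}^{\rho}$ provided $\abs{\mathcal E}$ is large enough (Theorem~\ref{thm_hmin_achieved} shows $1+\dim\Pi$ already suffices). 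Hence $h_{\min}(\rho)=\inf\{\sum_i\lambda_i H(\sigma_i):\rho=\sum_i\lambda_i\sigma_i,\sigma_i\in\Pi\}$, which is the standard definition of $H^{**}(\rho)$; equivalently, since $H$ is continuous on the compact set $\Pi$, it is the pointwise supremum over affine minorants of $H$ on $\Pi$.

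Third I would use convex analysis to produce the supporting minorant. Since $H$ is continuous and bounded on the compact set $\Pi$, the envelope $h_{\min}=H^{**}$ is a finite convex function on $\Pi$. At the interior point $\rho$, any finite convex function admits a subgradient $g\in\partial h_{\min}(\rho)$; the affine function $L(\sigma):=h_{\min}(\rho)+\langle g,\sigma-\rho\rangle$ then satisfies $L(\rho)=h_{\min}(\rho)$ and $L(\sigma)\le h_{\min}(\sigma)\le H(\sigma)$ for all $\sigma\in\Pi$. Translating $L$ back via the first step, setting $K(cz):=g_{cz}+h_{\min}(\rho)-\langle g,\rho\rangle$ yields the required entropy estimator. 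The main obstacle will be this subgradient step: it relies on $\rho$ being in the relative interior of $\Pi$'s affine hull and on $h_{\min}$ being finite on a neighborhood of $\rho$ therein — both consequences of the compactness of $\Pi$ and the boundedness of $H$. If $\Pi$ is not full-dimensional in $\mathbb R^{\abs{\mathcal C}\abs{\mathcal Z}}$, the subgradient must be read as a linear functional on the affine hull and then extended (non-uniquely but harmlessly) to the ambient distribution space, which does not disturb any of the inequalities on $\Pi$.
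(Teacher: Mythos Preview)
Your argument is correct and rests on the same core idea as the paper's: produce an affine minorant of $H(\sigma)=\mathbb H_\sigma(C\lvert Z)$ on $\Pi$ that touches the convex envelope $h_{\min}$ at the interior point $\rho$. The paper, however, carries this out geometrically rather than through the subgradient/biconjugate formalism: it continues from the proof of Theorem~\ref{thm_hmin_achieved}, working with the lifted set $\Xi=\mathrm{conv}\{(g(\sigma),H_\sigma(C\lvert Z)):\sigma\in\Pi_{\mathrm{extr}}\}\subset\mathbb R^{m+1}$, shows $(g(\rho),h_{\min}(\rho))$ lies on $\partial\Xi$, invokes the supporting hyperplane theorem, and then uses the interior assumption on $\rho$ to argue the hyperplane is non-vertical (its $(m{+}1)$th normal component is nonzero), so that it is the graph of an affine function which unwinds to the desired $K$. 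Your route is more concise and self-contained, appealing directly to standard convex-analysis facts (convex envelope equals the sup of affine minorants for continuous $H$ on compact $\Pi$; subgradients exist at relative-interior points); the paper's route is more explicit and constructive, avoiding black-box invocations and making transparent exactly where the interior hypothesis enters. Both handle the possible non-full-dimensionality of $\Pi$ the same way in spirit: the paper via the explicit reparametrisation $g$ to $\mathbb R^m$, you via working in the affine hull and extending the subgradient harmlessly to the ambient space.
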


\noindent \textit{Proof.} See Appendix~\ref{a:convex}.

\medskip

\noindent The assumption that $\rho$ is in the interior of $\Pi$ will generally hold if $\rho$ is estimated from real data, as the boundary of $\Pi$ is a measure zero set. If the assumption is removed, a weaker version of the theorem can still be obtained, which is discussed in the proof in Appendix~\ref{a:convex}.

The entropy estimator $K(CZ)$ whose existence is guaranteed by the above theorem can be used to show the existence of a family of PEFs that can get arbitrarily close to certifying $h_{\min}(\rho)$ amount of per-trial min-entropy. However, for a precise formulation of this claim we need a way to measure the asymptotic rate of min-entropy using PEFs. Recall from~\eqref{e:lb_min_ent_2} that we can lower-bound the per-trial min-entropy certified by a PEF as: 
\begin{equation}\label{e:pertrialPEF}
\frac{1}{n}\mathbb H_{\infty,\mu}^{\mathsf{avg},\epsilon}(\mathbf{C}\lvert\mathbf{ZE};\mathsf S)\ge \frac{1}{n}\left(1+\frac{1}{\beta}\right)\log_{2}(\kappa)-\frac{1}{n}\log_{2}(p).
\end{equation}
As in~\cite{PhysRevResearch.2.013016}, we ignore the $\log_2(\kappa)$ term in the asymptotic regime, as the completeness parameter $\kappa$ can be thought of as a ``reasonable" lower bound on the probability that the protocol does not abort, a type of error parameter that one might try to decrease somewhat for longer experiments but not at the exponential decay rate required to make this term asymptotically significant. Focusing then on the $-(1/n)\log_2(p)$ term, recall that success of the protocol is determined by the occurrence of the event $\mathsf S \coloneqq \Big\{\Big(\epsilon\prod_{i=1}^{n}F_{i}\Big)^{-1/\beta}\le p\Big\}$, the inequality in which can be expressed equivalently as:
\begin{equation*}
    \frac{1}{n\beta}\sum_{i=1}^{n}\log_{2}(F_{i}) + \frac{1}{n\beta}\log_{2}(\epsilon) \ge \textcolor{blue}{-}\frac{1}{n}\log_{2}(p). 
\end{equation*}
The expression on the left hand side of the above inequality is the negative base-2 logarithm of the upper bound on $\mu_{\mathbf{e}}(\mathbf{C}\lvert\mathbf{Z})$ for each $\mathbf{e}\in\mathcal{E}^{n}$ (refer to~\eqref{eq_PEF_def} and  the comments following Corollary \ref{coro1}) and so is a rough measure of the amount of randomness, up to an error probability of $\epsilon$, present in the outcome data. More concretely, since $p$ will be chosen to make $-(1/n)\log(p)$ as large as reasonably possible to optimise min-entropy certified by \eqref{e:pertrialPEF}, the anticipated value of the left hand side quantity can be used as a measure of certifiable randomness. For a stable experiment (i.e., one with each trial having the same distribution $\sigma$ belonging to the same model $\Pi$), the quantity $(1/n)\sum_{i=1}^{n}\log_{2}(F_{i})/\beta$ approaches $\mathbb{E}_{\sigma}[\log_{2}(F(CZ))]/\beta$ in the limit $n\to\infty$, while the term $(1/n\beta)\log_2(\epsilon)$ goes to zero for any fixed value of $\beta$ and $\epsilon$. Hence we introduce the following quantity as a measure of per-trial min-entropy certified by a PEF.

\begin{mydef}[Log-Prob Rate] The log-prob rate of a PEF $F(CZ)$ with power ${\beta}$ at a distribution $\rho(CZ)$ is defined as $\mathcal{O}_{\rho}(F;\beta)=\mathbb{E}_{\rho}[\log_{2}(F(CZ))]/\beta$.
\end{mydef}

We say that a PEF certifies randomness at a distribution $\rho$ if the quantity $\mathcal{O}_{\rho}(F;\beta)$ is positive. We note that  this definition is consistent with our expectation that only non-local distributions allow the certification of randomness, as the log-prob rate for a local distribution is a non-positive number, i.e., $\mathcal{O}_{\sigma_{\mathrm{L}}}(F;\beta)\le 0$: A local behaviour is a convex mixture of (finitely many) local deterministic behaviours $\sigma_{\mathrm{LD}}(C\lvert Z)$. Hence, with a fixed settings distribution $\pi(z)>0$, the defining condition $\mathbb{E}_{\sigma}[F(CZ)\sigma(C\lvert Z)^{\beta}]\le 1$ of a PEF for a distribution defined as $\sigma(cz)=\sigma_{\mathrm{LD}}(c\lvert z)\pi(z)$, for all $c,z$, is equivalently expressed as $\mathbb{E}_{\sigma}[F(CZ)]\le 1$, since $\sigma_{\mathrm{LD}}(c\lvert z)$ is either 0 or 1 for all $c,z$. Due to the concavity of $\log$ function, we then have $\mathbb{E}_{\sigma}[\log_{2}(F(CZ))]\le\log_{2}(\mathbb{E}_{\sigma}[F(CZ)])\le 0$ using Jensen's inequality. Hence, no device-independent randomness can be certified at a local-realistic distribution.

\begin{theorem}\label{thm:EntEst_PEF}
Given an entropy estimator $K(CZ)$ and an observed distribution $\rho(CZ)$, for any $\epsilon \in (0,1/2)$ there is a PEF whose log-prob rate at $\rho$ is greater than $\mathbb{E}_{\rho}[K(CZ)]-\epsilon$.
\end{theorem}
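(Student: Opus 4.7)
The plan is to construct the required PEF as a suitably normalised exponential of the entropy estimator. For $\beta>0$, I would set
\[
F_{\beta}(CZ) := \frac{2^{\beta K(CZ)}}{M(\beta)}, \qquad M(\beta) := \sup_{\sigma \in \Pi}\mathbb{E}_{\sigma}\bigl[2^{\beta K(CZ)}\sigma(C|Z)^{\beta}\bigr].
\]
The supremum defining $M(\beta)$ is finite (hence $F_{\beta}$ is well-defined and strictly positive) because $\Pi$ is closed and bounded in a finite-dimensional space---so compact---and the integrand is continuous on $\Pi$ under the convention $\sigma(c|z)=0$ when $\sigma(z)=0$ adopted in Lemma~\ref{lemma_conv_PEF}. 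By construction $\mathbb{E}_{\sigma}[F_{\beta}(CZ)\sigma(C|Z)^{\beta}] \le 1$ for every $\sigma \in \Pi$, so $F_{\beta}$ is a PEF with power $\beta$. Its log-prob rate at $\rho$ simplifies to
\[
\mathcal{O}_{\rho}(F_{\beta};\beta) = \mathbb{E}_{\rho}[K(CZ)] - \frac{\log_{2}M(\beta)}{\beta},
\]
so the theorem reduces to showing that $\log_{2}M(\beta)/\beta<\epsilon$ for some sufficiently small $\beta>0$.

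To establish this, I would run a Taylor-expansion argument to show $M(\beta) \le 1+O(\beta^{2})$ as $\beta\to 0^{+}$. Setting $\Phi_{\sigma}(\beta) := \mathbb{E}_{\sigma}[2^{\beta K(CZ)}\sigma(C|Z)^{\beta}]$ and using $\sigma(cz)\sigma(c|z)^{\beta} = \sigma(z)\sigma(c|z)^{1+\beta}$, direct differentiation under the sum gives $\Phi_{\sigma}(0)=1$ and
\[
\Phi_{\sigma}'(0) = (\ln 2)\bigl(\mathbb{E}_{\sigma}[K(CZ)] - \mathbb{H}_{\sigma}(C|Z)\bigr)\le 0,
\]
where the inequality is exactly the defining property of an entropy estimator (Definition~\ref{def:EntropyEstimator}). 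Taylor's theorem with Lagrange remainder then yields $\Phi_{\sigma}(\beta) \le 1 + (\beta^{2}/2)\sup_{\beta'\in[0,\beta]}\Phi_{\sigma}''(\beta')$. Provided I can produce a uniform bound $\Phi_{\sigma}''(\beta')\le B$ over $\sigma\in\Pi$ and $\beta'\in[0,\beta_{0}]$ for some fixed $\beta_{0}>0$, taking the supremum over $\sigma$ and using $\log_{2}(1+x)\le x/\ln 2$ gives $\log_{2}M(\beta)/\beta\le B\beta/(2\ln 2)$, and the desired bound follows by choosing $\beta<2\epsilon\ln 2/B$.

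The main obstacle is this uniform second-derivative bound. An explicit calculation gives
\[
\Phi_{\sigma}''(\beta') = \sum_{cz}\sigma(z)\,\sigma(c|z)^{1+\beta'}\,2^{\beta' K(cz)}\bigl(\ln 2\cdot K(cz) + \ln\sigma(c|z)\bigr)^{2}.
\]
Since $\mathcal{C}\times\mathcal{Z}$ is finite, $K$ is bounded by some $K_{\max}$, so $2^{\beta' K(cz)}\le 2^{K_{\max}}$ for $\beta'\in[0,1]$. Expanding the square via $(a+b)^{2}\le 2a^{2}+2b^{2}$, the $K^{2}$-type contribution is controlled using $\sum_{c}\sigma(c|z)^{1+\beta'}\le 1$, while the $(\ln\sigma(c|z))^{2}$-type contribution reduces to the elementary fact that $x^{1+\beta'}(\ln x)^{2}$ is uniformly bounded on $[0,1]$ (its maximum, attained at $x=e^{-2/(1+\beta')}$, equals $4e^{-2}/(1+\beta')^{2}\le 4e^{-2}$). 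The cross term is bounded similarly. Combining these gives a finite constant $B$ independent of $\sigma\in\Pi$ and $\beta'\in[0,1]$, completing the argument.
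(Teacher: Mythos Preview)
Your proof is correct and follows essentially the same approach as the paper: both construct the PEF as an exponential of the entropy estimator, Taylor-expand the PEF constraint functional around $\beta=0$, observe that the entropy-estimator condition forces the first derivative to be nonpositive, and then establish a uniform-in-$\sigma$ bound on the second derivative via the boundedness of $x\mapsto x(\ln x)^{2}$ on $[0,1]$. The only cosmetic difference is that the paper absorbs the $\epsilon$ by shifting $K$ to $K-\epsilon$ in the exponent (so the first derivative is $\le -\epsilon\ln 2$ and no normalisation is needed), whereas you leave $K$ unshifted and instead divide by the supremum $M(\beta)$; the resulting analysis is the same.
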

\noindent Our proof follows the general approach of Theorem 41 in \cite{PhysRevResearch.2.033465}, though we are able to shorten the argument. 
\begin{proof}
Given an entropy estimator $K(CZ)$ and $\epsilon \in (0,1/2)$ from the statement of the theorem, for any  $\gamma> 0$ we can define a function 
\begin{equation}\label{e:Fdef}
F(CZ) = 2^{(K(CZ)-\epsilon)\gamma}
\end{equation}
We will show that there exists a (small) positive value of $\gamma$ for which $F(CZ)$ is a PEF with power $\beta=\gamma$; the asymptotic log-prob rate of this PEF at $\rho$ will then be $\mathbb E_\rho [\log_{2}(F(CZ))]/\beta = \mathbb E_\rho[K(CZ)] - \epsilon$
as desired. So our task is to find a value of $\gamma$ such that the following inequality holds for all $\sigma \in \Pi$: 
$$
\mathbb E_\sigma [F(CZ) \sigma (C|Z)^\gamma] \le 1
$$
We study the left side of the above expression as a function of $\gamma$; specifically, define a function 
\begin{equation*}
f_\sigma (\gamma) = \mathbb{E}_{\sigma}[F(CZ) \sigma(C|Z)^{\gamma}]
=\sum_{c,z:\sigma(cz)> 0} \left[2^{K(cz)-\epsilon} \sigma(c|z)\right]^\gamma \sigma(cz)\\
\end{equation*}
which is, for any fixed choice of $\sigma$ and $K(CZ)$, a convex combination of positive constants raised to the power of $\gamma$ and so is infinitely differentiable at all $\gamma \in \mathds{R}$. (Note that we never encounter the problematic form $0^0$ because the argument of $[\cdot]^\gamma$ will always be strictly positive, as the sum defining $f_\sigma$ extends only over values of $c,z$ for which $\sigma(cz)$ is positive, and hence $\sigma(c|z)>0$.) We can thus Taylor-expand $f_\sigma$ about $\gamma = 0$, obtaining via the Lagrange remainder theorem
that for any positive $\gamma$, there exists a $k\in (0,\gamma)$ making the following equality hold:
\begin{equation}\label{e:taylorexpand}
f_\sigma(\gamma) = f_\sigma(0) + f'_\sigma(0)\gamma + \frac{f''_\sigma(k)}{2}\gamma^2
\end{equation}
The first term in the expansion satisfies $f_\sigma(0)=\sum_{cz} 1\cdot \sigma(cz) =1$. The coefficient of $\gamma$ in~\eqref{e:taylorexpand} satisfies:
\begin{align*}
f'_\sigma(0) &=\sum_{c,z:\sigma(cz)> 0} \left[2^{K(cz)-\epsilon} \sigma(c|z)\right]^0 \sigma(cz)\ln\left[2^{K(cz)-\epsilon} \sigma(c|z)\right]\\
&= \sum_{c,z:\sigma(cz)>0}\sigma(cz) [K(cz) - \epsilon +\log_{2}(\sigma(c|z))]\ln(2)\\
&= \ln(2)\left(\mathbb E_\sigma [K(CZ)] - \mathbb E[-\log_2 (\sigma(c|z))] - \epsilon\right) \le -\epsilon\ln(2)
\end{align*}
where the inequality follows from the condition $\mathbb E_\sigma [K(CZ)] \le \mathbb E_{\sigma}[-\log_2(\sigma(C\lvert Z))]$ in the definition of an entropy estimator. Hence \eqref{e:taylorexpand} yields

\begin{equation}\label{e:taylorineq}
f_\sigma(\gamma) \le 1 - \epsilon\gamma\ln(2) + \frac{f''_\sigma(k)}{2}\gamma^2
\end{equation}
for some $k\in (0,\gamma)$. Now, given a fixed $\gamma$, $k$  may be different in \eqref{e:taylorineq} for different choices of $\sigma$; however, it must always lie in the interval $(0,\gamma)$, so if we can show that there is a choice of $\gamma$ such that for \textit{any} $\sigma$ the following inequality holds for all $k\in(0,\gamma)$
\begin{equation}\label{e:taylorgoal}
\frac{f''_\sigma(k)}{2}\gamma^2\le \epsilon\gamma\ln(2)
\end{equation}
then for that value of $\gamma$, we will know that $F(CZ)$ as defined in \eqref{e:Fdef} is a valid PEF satisfying the conditions of the theorem. To find the needed value of $\gamma$ making \eqref{e:taylorgoal} hold and complete the proof, we calculate
\begin{align*}
f''_{\sigma}(k) &= \ln^{2}(2)\sum_{c,z:\sigma(cz)>0}\left[2^{K(cz)-\epsilon}\sigma(c\lvert z)\right]^{k}\left[\log_{2}\left(2^{K(cz)-\epsilon}\sigma(c\lvert z)\right)\right]^{2}\sigma(cz) \\
&\le \ln^{2}(2)M^{k}\sum_{cz:\sigma(cz)>0}\sigma(c\lvert z)^{k+1}\left[K(cz)-\epsilon+\log_{2}(\sigma(c\lvert z))\right]^{2}\sigma(z)
\end{align*}
where $M=\max_{cz}2^{K(cz)}$. We now assert that each quantity $\sigma(c|z)^{k+1}\left[K(cz) - \epsilon + \log_{2}(\sigma(c|z))\right]^2$ is bounded above by a constant $N_{cz}$ for all $k>0$, and $N_{cz}$ is independent of $\sigma$. This follows because for any fixed choice of $c$ and $z$, this quantity is strictly smaller than the expression $g_{cz}(x) = x\left[K(cz) - \epsilon + \log_{2}(x)\right]^2$ for the choice of $x=\sigma(c|z) \in (0,1]$ (note that since $\sigma(c\lvert z)\in(0,1]$, $\sigma(c|z)^{k+1} \le \sigma (c|z)$ holds for any $k>0$). Then two applications of l'H\^{o}pital's rule demonstrate that $\lim_{x\to 0} g_{cz}(x)$ exists and so $g_{cz}$ can be extended to a continuous function on $[0,1]$ where it has a maximum by the extreme value theorem.\footnote[2]{Invocation of the extreme value theorem, rather than computing an explicit bound, is what primarily allows us to shorten the proof compared to the argument proving Theorem 41 in~\cite{PhysRevResearch.2.033465}.} 
Referring to this maximum as $N_{cz}$ and letting $N = \max_{cz} N_{cz}$, we get the desired bound as shown below.

\begin{equation}\label{eq:bound_f''}
f''_\sigma(k)\le \ln^{2}(2)M^k\sum_{z:\sigma(z)>0}\sigma(z)\sum_{c:\sigma(c,z)>0} N
\le \ln(2)M^k\sum_{z:\sigma(z)>0}\sigma(z) \abs{\mathcal C} N = \ln(2)M^k\abs{\mathcal{C}}N. 
\end{equation}
This shows that if $M^k\gamma \le 2\epsilon/\abs{\mathcal C} N$ holds, then \eqref{e:taylorgoal} holds, from which it follows that a sufficiently small choice of $\gamma>0$ makes~\eqref{e:taylorgoal} hold for all $k\in (0,\gamma)$.
\end{proof}

The combination of Theorem~\ref{t:est}, which shows the existence of an entropy estimator with entropy estimate $h_{\min} (\rho)$, and Theorem~\ref{thm:EntEst_PEF}, which enables the construction of a family of PEFs with log-prob rate arbitrarily close to this entropy estimate, demonstrates of the asymptotic optimality of the PEF method.

\subsection{Robustness of PEFs}\label{s:Robustness}

We want to consider a question not considered in the previous PEF papers: can a PEF optimised for $\rho(CZ)$ certify randomness for a distribution different from $\rho$, where the difference is measured in terms of the total variation distance between them; in other words, how \textit{robust} is the PEF? We will see in the next section that in the $(2,2,2)$ Bell scenario, for any behaviour corresponding to $\rho$ violating the CHSH-Bell inequality, PEFs can be (up to any desired $\epsilon$-tolerance) asymptotically optimal in terms of log prob rate at $\rho$ while also generating randomness at a positive rate for any behaviour (corresponding to a distribution of outcomes and settings) that violates the CHSH-Bell inequality by a fixed positive amount, which can be chosen to be as small as desired.

The following theorem gives a useful sufficient condition for a distribution different from $\rho$ to have positive log-prob rate, and demonstrates that any nontrivial (i.e., non-constant) PEF will have at least some degree of robustness.

\begin{theorem}\label{thm:robustPEF}
Let $F(CZ)=G(CZ)^{\beta}$ be a non-constant positive PEF with power $\beta>0$ for $\Pi$. The log-prob rate $\mathcal{O}_{\sigma}(F;\beta)$ at a distribution $\sigma(CZ)\in\Pi$ is related to the log-prob rate $\mathcal{O}_{\rho}(F;\beta)$ at $\rho(CZ)\in\Pi$ and the total variation distance between $\rho$ and $\sigma$ as
\begin{equation}\label{eq5}
\abs{\mathcal{O}_{\rho}(F;\beta)-\mathcal{O}_{\sigma}(F;\beta)} \le (L-l)d_\mathrm{TV}(\rho,\sigma),
\end{equation}
where $L=\max_{cz}\log_{2}(G(cz))$ and $l=\min_{cz}\log_{2}(G(cz))$. Consequently, assuming that $\mathcal{O}_{\rho}(F;\beta)$ is positive, the following upper bound on the total variation distance between $\rho(CZ)$ and $\sigma(CZ)$ is a sufficient condition for $F$ to have a positive log-prob rate at $\sigma(CZ)$
\begin{equation}\label{e:suff_cond_robustPEF}
d_\mathrm{TV}(\rho,\sigma) < \mathbb{E}_{\rho}[\log_{2}(G)]/(L-l).
\end{equation} 
\end{theorem}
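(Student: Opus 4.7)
The plan is to reduce the theorem to a standard fact: the difference of expectations of a bounded function under two probability distributions is controlled by the range of the function times the total variation distance. Indeed, since $F(CZ) = G(CZ)^\beta$, the log-prob rate simplifies as
\[
\mathcal{O}_{\rho}(F;\beta) = \mathbb{E}_{\rho}[\log_{2}(G(CZ)^{\beta})]/\beta = \mathbb{E}_{\rho}[\log_{2}(G(CZ))],
\]
and likewise for $\sigma$. Setting $h(cz) \coloneqq \log_2 G(cz) \in [l, L]$, the task becomes bounding
\[
\bigl|\mathbb{E}_{\rho}[h(CZ)] - \mathbb{E}_{\sigma}[h(CZ)]\bigr| = \Bigl|\sum_{cz} h(cz)\bigl(\rho(cz) - \sigma(cz)\bigr)\Bigr|
\]
by $(L-l)\, d_{\mathrm{TV}}(\rho,\sigma)$.

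The key step I would carry out is a \emph{centering trick}: because $\rho$ and $\sigma$ are both probability distributions, $\sum_{cz}(\rho(cz) - \sigma(cz)) = 0$, so for any constant $k \in \mathbb{R}$,
\[
\sum_{cz} h(cz)\bigl(\rho(cz) - \sigma(cz)\bigr) = \sum_{cz} \bigl(h(cz) - k\bigr)\bigl(\rho(cz) - \sigma(cz)\bigr).
\]
Choosing $k = (L+l)/2$ yields $|h(cz) - k| \le (L-l)/2$ uniformly in $c,z$. Applying the triangle inequality and then the definition of total variation distance from \eqref{e:d_TV} gives
\[
\bigl|\mathbb{E}_{\rho}[h] - \mathbb{E}_{\sigma}[h]\bigr| \le \frac{L-l}{2}\sum_{cz}\bigl|\rho(cz) - \sigma(cz)\bigr| = (L-l)\, d_{\mathrm{TV}}(\rho,\sigma),
\]
which is the claimed inequality \eqref{eq5}.

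For the sufficient condition \eqref{e:suff_cond_robustPEF}, I would simply rearrange: the inequality just proven implies $\mathcal{O}_{\sigma}(F;\beta) \ge \mathcal{O}_{\rho}(F;\beta) - (L-l)\, d_{\mathrm{TV}}(\rho,\sigma)$, so positivity of the right-hand side, equivalent to $d_{\mathrm{TV}}(\rho,\sigma) < \mathbb{E}_{\rho}[\log_{2}(G)]/(L-l)$, guarantees $\mathcal{O}_{\sigma}(F;\beta) > 0$. There is essentially no obstacle here; the only mildly clever move is the centering by $(L+l)/2$, which is needed to get the tight constant $L-l$ rather than the weaker bound $2\max(|L|,|l|)$ that a naive application of $|h(cz)| \le \max(|L|,|l|)$ would give. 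The non-constancy hypothesis on $F$ (hence on $G$) is what guarantees $L > l$ so that the sufficient condition is non-vacuous.
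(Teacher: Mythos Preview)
Your proof is correct and follows essentially the same approach as the paper: both reduce to bounding $|\mathbb{E}_\rho[\log_2 G] - \mathbb{E}_\sigma[\log_2 G]|$, center by the midpoint $(L+l)/2$, and apply the triangle inequality together with the definition of total variation distance. The derivation of the sufficient condition is likewise identical.
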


\begin{proof}
Using the definition of log-prob rate at a given distribution we have
\begin{align*}
&\abs{\mathcal{O}_{\rho}(F;\beta)-\mathcal{O}_{\sigma}(F;\beta)}
= \abs{\sum_{cz}\frac{1}{\beta}\left[\log_{2}(G(cz)^{\beta})\big(\rho(cz)-\sigma(cz)\big)\right]}\\
&=\abs{\sum_{cz}\left(\log_{2}(G(cz)) + \frac{L+l}{2}-\frac{L+l}{2}\right)(\rho(cz)-\sigma(cz))}\\
&=\abs{\sum_{cz}\left(\log_{2}(G(cz))-\frac{L+l}{2}\right)\big(\rho(cz)-\sigma(cz)\big)+\frac{L+l}{2}\sum_{cz}\big(\rho(cz)-\sigma(cz)\big)}\\
&\le \sum_{cz}\abs{\log_{2}(G(cz))-\frac{L+l}{2}}\abs{\rho(cz)-\sigma(cz)}\\
&\le (L-l)\frac{1}{2}\sum_{cz}\abs{\rho(cz)-\sigma(cz)}=(L-l)d_\mathrm{TV}(\rho,\sigma)
\end{align*}
Hence, we have
\begin{equation*}
\mathcal{O}_{\rho}(F;\beta) - (L-l)d_\mathrm{TV}(\rho,\sigma) \le \mathcal{O}_{\sigma}(F;\beta) \le \mathcal{O}_{\rho}(F;\beta) + (L-l)d_\mathrm{TV}(\rho,\sigma).
\end{equation*}
Assuming that $\mathcal{O}_{\rho}(F;\beta)$ is positive, a sufficient condition for $\mathcal{O}_{\sigma}(F;\beta)$ to be positive is $\mathcal{O}_{\rho}(F;\beta)>\abs{L-l}d_{\mathrm{TV}}(\rho,\sigma)$, or equivalently, the following bound on $d_{\mathrm{TV}}(\rho,\sigma)$:
\begin{equation*}
d_\mathrm{TV}(\rho,\sigma) < \mathcal{O}_{\rho}(F;\beta)/(L-l)=\mathbb{E}_{\rho}[\log_{2}(G)]/(L-l).
\end{equation*}
\end{proof}

\noindent We will see in Section \ref{s:robustandoptimal} that the bound \eqref{eq5} can be saturated, and so is tight.

\section{Application to the (2,2,2) Bell scenario} \label{s:application}

Here, we explore the application of the results of the previous section to the (2,2,2) Bell scenario (that of two parties, two measurement settings, and two outcomes). First, working within the trial model of no-signalling distributions $\Pi_{\mathrm{NS}}$,  we show that PEFs can be simultaneously  asymptotically optimal and robust by means of an explicit construction of a sequence of PEFs that approaches the optimal log-prob rate for the target distribution while simultaneously generating randomness at a positive rate for any other distribution violating the CHSH inequality.

In the course of this exercise, we will observe that the optimal adversarial attack---one generating the observed statistics (consistent with an expected trial distribution $\rho$) while asymptotically yielding $h_{\min}(\rho)$ amount of per-trial randomness---is always achieved through a single-trial distribution that marginalises to $\rho$ through a convex combination of a single extremal no-signalling non-local distribution and a local realistic distribution (which itself consists of  a convex mixture of up to eight extremal local deterministic distributions). This is a notable feature, revealing that the adversary never needs to prepare more than one non-local distribution to simulate the observed distribution with as little min-entropy as possible. Later in this section, we explore the potential for generalisation of this feature to the (2,2,2) scenario restricted to quantum distributions ($\Pi_{\mathrm{Q}}$); if true, this would be an important finding, outlining the optimal approach of a (more realistic) quantum-limited adversary attacking the PEF protocol. The general observation that preparing a single non-local state is preferable to preparing multiple underlies the significance of the answer to this question. We find some evidence that the feature---only requiring one extremal non-local distribution in the convex combination attack---may hold for the $\Pi_{\mathrm{Q}}$ in the (2,2,2) Bell scenario, but this may be a difficult question to resolve due to the complicated geometry of the quantum set. We also explore possible generalisations of this feature to no-signalling trial models for $(n,m,k)$ Bell scenarios where $n$, $m$, or $k$ exceed 2, and find that it \textit{does not} hold in any of these cases---so the question of whether this holds in a given Bell scenario and trial model is non-trivial in general.

We begin with a brief review of the (2,2,2) Bell scenario and some features of the set $\Pi_{\mathrm{NS}}$ of no-signalling distributions in this scenario.

\subsection{A brief review of the (2,2,2) Bell scenario}\label{e:brief_review} The (2,2,2) Bell scenario is the minimal Bell scenario, comprising of two spatially separated parties Alice and Bob, each having two measurement settings and two possible outcomes corresponding to each setting. The measurement settings for Alice and Bob are represented by the RVs $X,Y$ realising values $x,y\in\{0,1\}$ and the measurement outcomes are represented by the RVs $A,B$ realising values $a,b\in\{0,1\}$. With $\sigma_{\mathrm s}(XY)$ representing a fixed settings distribution, we refer to the sets $\Pi_{\mathrm{NS}},\Pi_{\mathrm{Q}}\text{ and }\Pi_{\mathrm{L}}$ as no-signalling, quantum and local models, respectively, when they comprise of distributions $\mu(ABXY)\coloneqq\mu(AB\lvert XY)\sigma_{\mathrm s}(XY)$, where the conditional probabilities $\mu(AB\lvert XY)$, referred to as behaviours, are constrained by the no-signalling, quantum and local realism principle, respectively. Henceforth, all distributions $\mu(ABXY)$ belonging to a model are defined as $\mu(ABXY)\coloneqq\mu(AB\lvert XY)\sigma_{\mathrm s}(XY)$, and we associate a model with its constituent behaviour $\mu(AB\lvert XY)$ or distribution $\mu(ABXY)$, indistinctively, since the settings distribution is fixed. Recall that the model $\Pi_{\mathrm{NS}}$ is a polytope, the extremal points of which consist of the behaviours $\mu_{\mathrm{extr}}(AB\lvert XY)\equiv \big\{\mu_{\mathrm{extr}}(ab\lvert xy)\colon a,b,\linebreak[0]x,y\in\{0,1\}\big\}$ defined below.
\begin{align}
    \mu_{\text{PR}}^{\alpha\beta\gamma}(ab\lvert xy) &\coloneqq\begin{cases} \frac{1}{2} &: \quad a\oplus b=x y\oplus\alpha x\oplus\beta y\oplus\gamma\\
    0 &: \quad \text{otherwise}\end{cases}\label{eq_PRbox}\\
    \mu_{\text{LD}}^{\alpha\beta\gamma\delta}(ab\lvert xy) &\coloneqq \begin{cases} 1 &: \quad a=\alpha x\oplus\beta, b=\gamma y\oplus\delta\\
    0 &: \quad \text{otherwise}\end{cases}\label{eq_LDbox}
\end{align}
with $\alpha,\beta,\gamma,\delta\in\{0,1\}$ and $\oplus$ denotes addition modulo $2$. \eqref{eq_PRbox} and~\eqref{eq_LDbox} are known as the Popescu-Rohrlich (PR) behaviours~\cite{Popescu1994} and the Local Deterministic (LD) behaviours, respectively. The CHSH-Bell inequalities shown below are known to be the only non-trivial facet inequalities delimiting the local polytope which is the convex hull of the LD behaviours~\cite{PhysRevLett.48.291}. Corresponding to each choice of $\alpha,\beta,\gamma\in\{0,1\}$, the inequalities represent a version of the canonical CHSH-Bell inequality.  
\begin{equation}\label{eq_Bell}
    B^{\alpha\beta\gamma} \coloneqq (-1)^{\gamma}E_{00}+(-1)^{\beta+\gamma}E_{01}+(-1)^{\alpha+\gamma}E_{10}+(-1)^{\alpha+\beta+\gamma+1}E_{11}\le 2,
\end{equation}
where $E_{xy}\coloneqq \sum_{a,b=0}^{1}(-1)^{a+b}\mu(ab\lvert xy)$ for $x,y\in\{0,1\}$. The non-local algebraic maximum for the expression $B^{\alpha\beta\gamma}$ is 4. The local maximum is obtained by eight $\mu_{\mathrm{LD}}^{\alpha\beta\gamma\delta}(AB\lvert XY)$ behaviours for each $B^{\alpha\beta\gamma}$. The sets $\mathsf{LD}_{i},\,i\in\{1,2,\ldots,8\}$, each comprising of eight LD behaviours saturating---i.e., achieving the value of 2---exactly one $B^{\alpha\beta\gamma}$ are shown in Table~\ref{tab:1PR_8LD}. A result proven in~\cite{Bierhorst_2016} (see Theorems 2.1 and 2.2 therein) states that any behaviour violating~\eqref{eq_Bell} can be represented as a convex combination of one PR box achieving the non-local maximum for $B^{\alpha\beta\gamma}$ and (up to) eight LD behaviours of the corresponding $\mathsf{LD}_{i}$ set saturating it. In fact, the geometry of the no-signalling polytope in this Bell scenario is such that there is a one-to-one correspondence between the non-local no-signalling extremal points, the PR boxes, in~\eqref{eq_PRbox} and the non-trivial facets of the local polytope described by~\eqref{eq_Bell}, with exactly one extremal point violating it up to the algebraic maximum of $4$ for each choice of $(\alpha,\beta,\gamma)\in\{0,1\}^{3}$. Hence, any non-local behaviour---that violates a given version of the CHSH-Bell inequality---is contained in a \emph{non-local $8$-simplex} whose vertices are the one PR box that maximally violates that particular version and the eight LD behaviours that saturate it. Recall that a $p$-simplex is a $p$-dimensional polytope which is the convex hull of its $p+1$ vertices. More formally, if the set $C\coloneqq\{\vec{a}_{0},\vec{a}_{1},\ldots,\vec{a}_{p}\}\subset\mathds{R}^{n}$ of $p+1$ points are affinely independent, then the $p$-simplex determined by them is the following set of points:
\begin{equation*}
    \Delta^{p} \coloneqq \left\{\sum_{k=0}^{p}\theta_{k}\vec{a}_{k}\,\bigg|\, \sum_{k=0}^{p}\theta_{k}=1,\,\theta_{k}\ge 0\text{ for }k=0,1,\ldots,p \right\}.
\end{equation*}
The affine independence condition means that the only admissible choice of $\theta_{k}\in\mathds{R}$ such that $\sum_{k=0}^{p}\theta_{k}\vec{a}_{k}=\vec{0}$ and $\sum_{k=0}^{p}\theta_{k}=0$ are satisfied is $\theta_{k}=0$ for all $k$; this holds if and only if the vectors $\vec{a}_{k}-\vec{a}_{0}$ are linearly independent for $k=1,2,\ldots,p$. 

One can check that the PR box that achieves the non-local maximum for a given version of the CHSH-Bell expression $B^{\alpha\beta\gamma}$ and the eight LD behaviours that achieve the local maximum for it are affinely independent. Since $a,b,x,y\in\{0,1\}$ and $|\{0,1\}^{4}|=16$, we can represent the behaviours $\mu(ab\lvert xy)$ in this Bell scenario as vectors $\vec{\mu}\in\mathds{R}^{16}$ as shown in Table~\ref{tab:Simplex_PR1}. Then the affine independence is apparent: letting the PR box behaviour be $\vec{a}_0$ and the LD behaviours be the other $\vec{a}_k$, each $\vec{a}_k-\vec{a}_0$ term has a unique column where it contains a ``1" while all of the other terms contain ``0", ensuring linear independence.

\begin{table}[H]
    \centering
    \begin{tabular}{c|cccc|cccc|cccc|cccc|}
    \cline{2-17}
     & \multicolumn{16}{c|}{$xy$} \\
     & \multicolumn{4}{c}{$00$} & \multicolumn{4}{c}{$01$} & \multicolumn{4}{c}{$10$} & \multicolumn{4}{c|}{$11$} \\
     \cline{2-17}
     & \multicolumn{4}{c|}{$ab$} & \multicolumn{4}{c|}{$ab$} & \multicolumn{4}{c|}{$ab$} & \multicolumn{4}{c|}{$ab$} \\
     & $00$ & $01$ & $10$ & $11$ & $00$ & $01$ & $10$ & $11$ & $00$ & $01$ & $10$ & $11$ & $00$ & $01$ & $10$ & $11$ \\
     \cline{2-17}
   $\vec{\mu}_{\mathrm{PR},1}$ & $1/2$ & $0$ & $0$ & $1/2$ & $1/2$ & $0$ & $0$ & $1/2$ & $1/2$ & $0$ & $0$ & $1/2$ & $0$ & $1/2$ & $1/2$ & $0$ \\
   $\vec{\mu}_{\mathrm{LD},1}$ & $1$ & $0$ & $0$ & $0$ & $1$ & $0$ & $0$ & $0$ & $1$ & $0$ & $0$ & $0$ & $1$ & $0$ & $0$ & $0$ \\
   $\vec{\mu}_{\mathrm{LD},2}$ & $0$ & $0$ & $0$ & $1$ & $0$ & $0$ & $0$ & $1$ & $0$ & $0$ & $0$ & $1$ & $0$ & $0$ & $0$ & $1$ \\
   $\vec{\mu}_{\mathrm{LD},3}$ & $1$ & $0$ & $0$ & $0$ & $0$ & $1$ & $0$ & $0$ & $1$ & $0$ & $0$ & $0$ & $0$ & $1$ & $0$ & $0$ \\
   $\vec{\mu}_{\mathrm{LD},4}$ & $0$ & $0$ & $0$ & $1$ & $0$ & $0$ & $1$ & $0$ & $0$ & $0$ & $0$ & $1$ & $0$ & $0$ & $1$ & $0$ \\
   $\vec{\mu}_{\mathrm{LD},5}$ & $1$ & $0$ & $0$ & $0$ & $1$ & $0$ & $0$ & $0$ & $0$ & $0$ & $1$ & $0$ & $0$ & $0$ & $1$ & $0$ \\
   $\vec{\mu}_{\mathrm{LD},6}$ & $0$ & $0$ & $0$ & $1$ & $0$ & $0$ & $0$ & $1$ & $0$ & $1$ & $0$ & $0$ & $0$ & $1$ & $0$ & $0$ \\
   $\vec{\mu}_{\mathrm{LD},7}$ & $0$ & $1$ & $0$ & $0$ & $1$ & $0$ & $0$ & $0$ & $0$ & $0$ & $0$ & $1$ & $0$ & $0$ & $1$ & $0$ \\
   $\vec{\mu}_{\mathrm{LD},8}$ & $0$ & $0$ & $1$ & $0$ & $0$ & $0$ & $0$ & $1$ & $1$ & $0$ & $0$ & $0$ & $0$ & $1$ & $0$ & $0$ \\
   \cline{2-17}
    \end{tabular}
    \caption{These probability vectors in $\mathds{R}^{16}$ are the PR box $\vec{\mu}_{\mathrm{PR},1}\equiv \mu_{\mathrm{PR}}^{000}$ that achieves the non-local maximum of $4$ and the eight LD behaviours $\vec{\mu}_{\mathrm{LD},1},\ldots,\vec{\mu}_{\mathrm{LD},8}$ that achieve the local maximum of $2$ for the standard CHSH-Bell expression $B^{000}$, with the LD behaviors corresponding to the eight probability tables numbered 1, 4, 5, 8, 9, 12, 14 and 15 in Table A2 of~\cite{Bierhorst_2016}, and also given in the first row of Table \ref{tab:1PR_8LD}. One can verify the affine independence of the nine vectors above by verifying that the eight vectors obtained by subtracting the first vector from the remaining eight are linearly independent.}
    \label{tab:Simplex_PR1}
\end{table}
It is known that a behaviour belonging to $\Pi_{\mathrm{NS}}\setminus\Pi_{\mathrm{L}}$ violates exactly one of the eight CHSH-Bell inequalities. The impossibility of simultaneously violating a specific pair of CHSH-Bell inequalities can be seen as presented in~\cite{Le2023quantumcorrelations}: Suppose a behaviour in $\Pi_{\mathrm{NS}}\setminus\Pi_{\mathrm{L}}$ violates both inequalities corresponding to $(\alpha,\beta,\gamma)=(0,0,0)$ and $(\alpha,\beta,\gamma)=(1,0,0)$, then $E_{00}+E_{01}+E_{10}-E_{11}>2$ and $E_{00}+E_{01}-E_{10}+E_{11}>2$ holds for the same behaviour. Adding these two inequalities we have $2(E_{00}+E_{01})>4$, i.e., $E_{00}+E_{01}>2$, which is not possible to satisfy since the correlations $E_{xy}$ satisfy $\abs{E_{xy}}\le 1$. 

Table~\ref{tab:1PR_8LD} lists the eight versions of the Bell expression $B^{\alpha\beta\gamma}$ and the eight non-local $8$-simplices $\Delta_{\mathrm{PR},i}^{8}$ containing points that violate the corresponding CHSH-Bell inequality. Any non-local no-signalling behaviour ultimately belongs to exactly one such simplex.

\begin{table}[h]
    \centering
    {\renewcommand{\arraystretch}{1.32}%
    \begin{tabular}{|c|c|}
    \hline
    \hline
    $B^{\alpha\beta\gamma}$ & $\Delta_{\mathrm{PR},i}^{8}$ \\
    \hline
    \hline
    $B^{000}$ & $\Delta_{\mathrm{PR},1}^{8}\coloneqq\mathrm{conv}\left\{\mu_{\mathrm{PR}}^{000},\mu_{\mathrm{LD}}^{0000},\mu_{\mathrm{LD}}^{0101},\mu_{\mathrm{LD}}^{0010},\mu_{\mathrm{LD}}^{0111},\mu_{\mathrm{LD}}^{1000},\mu_{\mathrm{LD}}^{1101},\mu_{\mathrm{LD}}^{1011},\mu_{\mathrm{LD}}^{1110}\right\}$\\
    $B^{001}$ & $\Delta_{\mathrm{PR},2}^{8}\coloneqq\mathrm{conv}\left\{\mu_{\mathrm{PR}}^{001},\mu_{\mathrm{LD}}^{0001},\mu_{\mathrm{LD}}^{0011},\mu_{\mathrm{LD}}^{0100},\mu_{\mathrm{LD}}^{0110},\mu_{\mathrm{LD}}^{1001},\mu_{\mathrm{LD}}^{1010},\mu_{\mathrm{LD}}^{1100},\mu_{\mathrm{LD}}^{1111}\right\}$\\
    $B^{010}$ & $\Delta_{\mathrm{PR},3}^{8}\coloneqq\mathrm{conv}\left\{\mu_{\mathrm{PR}}^{010},\mu_{\mathrm{LD}}^{0000},\mu_{\mathrm{LD}}^{0010},\mu_{\mathrm{LD}}^{0101},\mu_{\mathrm{LD}}^{0111},\mu_{\mathrm{LD}}^{1001},\mu_{\mathrm{LD}}^{1010},\mu_{\mathrm{LD}}^{1100},\mu_{\mathrm{LD}}^{1111}\right\}$\\
    $B^{011}$ & $\Delta_{\mathrm{PR},4}^{8}\coloneqq\mathrm{conv}\left\{\mu_{\mathrm{PR}}^{011},\mu_{\mathrm{LD}}^{0001},\mu_{\mathrm{LD}}^{0011},\mu_{\mathrm{LD}}^{0100},\mu_{\mathrm{LD}}^{0110},\mu_{\mathrm{LD}}^{1000},\mu_{\mathrm{LD}}^{1011},\mu_{\mathrm{LD}}^{1101},\mu_{\mathrm{LD}}^{1110}\right\}$\\
    $B^{100}$ & $\Delta_{\mathrm{PR},5}\coloneqq\mathrm{conv}\left\{\mu_{\mathrm{PR}}^{100},\mu_{\mathrm{LD}}^{0000},\mu_{\mathrm{LD}}^{0011},\mu_{\mathrm{LD}}^{0101},\mu_{\mathrm{LD}}^{0110},\mu_{\mathrm{LD}}^{1000},\mu_{\mathrm{LD}}^{1010},\mu_{\mathrm{LD}}^{1101},\mu_{\mathrm{LD}}^{1111}\right\}$\\
    $B^{101}$ & $\Delta_{\mathrm{PR},6}^{8}\coloneqq\mathrm{conv}\left\{\mu_{\mathrm{PR}}^{101},\mu_{\mathrm{LD}}^{0001},\mu_{\mathrm{LD}}^{0010},\mu_{\mathrm{LD}}^{0100},\mu_{\mathrm{LD}}^{0111},\mu_{\mathrm{LD}}^{1001},\mu_{\mathrm{LD}}^{1011},\mu_{\mathrm{LD}}^{1100},\mu_{\mathrm{LD}}^{1110}\right\}$\\
    $B^{110}$ & $\Delta_{\mathrm{PR},7}^{8}\coloneqq\mathrm{conv}\left\{\mu_{\mathrm{PR}}^{110},\mu_{\mathrm{LD}}^{0001},\mu_{\mathrm{LD}}^{0010},\mu_{\mathrm{LD}}^{0100},\mu_{\mathrm{LD}}^{0111},\mu_{\mathrm{LD}}^{1000},\mu_{\mathrm{LD}}^{1010},\mu_{\mathrm{LD}}^{1101},\mu_{\mathrm{LD}}^{1111}\right\}$\\
    $B^{111}$ & $\Delta_{\mathrm{PR},8}^{8}\coloneqq\mathrm{conv}\left\{\mu_{\mathrm{PR}}^{111},\mu_{\mathrm{LD}}^{0000},\mu_{\mathrm{LD}}^{0011},\mu_{\mathrm{LD}}^{0101},\mu_{\mathrm{LD}}^{0110},\mu_{\mathrm{LD}}^{1001},\mu_{\mathrm{LD}}^{1011},\mu_{\mathrm{LD}}^{1100},\mu_{\mathrm{LD}}^{1110}\right\}$\\
    \hline
    \hline
    \end{tabular}}
    \caption{The eight non-local $8$-simplices containing behaviours that violate the corresponding version of the CHSH-Bell inequality. We identify each $8$-simplex $\Delta_{\mathrm{PR},i}^{8}$ with a PR box which solely contributes to the non-locality of the behaviour violating the CHSH-Bell inequality.}
    \label{tab:1PR_8LD}
\end{table}

\subsection{Robust PEFs and optimal adversarial attacks in the (2,2,2) Bell Scenario}
\label{s:robustandoptimal}

We now examine the \textit{robustness} of PEFs that are optimal for an anticipated distribution $\rho$ and a fixed number of planned trials $n$.  We first review how we find optimal PEFs in this scenario. The constrained maximisation routine in~\eqref{eq_PEF_opt} provides a method to find useful PEFs with respect to an anticipated trial distribution, with Lemma~\ref{lemma_conv_PEF} showing that the feasibility constraints in \eqref{eq_PEF_opt}  
 can be restricted to only the distributions corresponding to the eight PR and sixteen LD behaviours (with a fixed settings distribution $\sigma_{\mathrm s}(XY)>0$).  
In practice, the number of trials $n$ will affect the choice of $\beta$ and the PEF that optimises the quantity $\mathbb{E}_{\rho}[(n\log_{2}(F(CZ)) + \log_{2}(\epsilon))/\beta]$, a quantity which (per the discussion surrounding \eqref{eq_PEF_opt}) can be thought of the anticipated amount of raw randomness from running the experiment whose trial distribution is expected to be $\rho$. 
If we divide this quantity by $n$, we arrive at a measure of expected randomness per trial for the optimal PEF at a given value of $\beta$, called the \textit{net log-prob rate}: the function $(\max_{F}\mathcal{O}_{\rho}(F;\beta))+\log_{2}(\epsilon)/n\beta$. Figure~\ref{fig:beta_n_relation} shows a plot of the net log-prob rates corresponding to two different values of $n$, as well as the supremum of the log-prob rate, for $\beta$ varying from $0.001$ to $0.1$ and $\epsilon$ fixed at the value $10^{-4}$. The value of $\beta$, and the corresponding PEF that maximises the curve, is then the best choice for the given planned number of trials $n$.

The plot illustrates some notable features of PEFs.  First, it was proved in Appendix D of \cite{PhysRevA.98.040304} that assuming a stable experiment (with each trial distribution $\rho$) the function $\sup_{F}\mathcal{O}_{\rho}(F;\beta)$ is monotonically non-increasing in $\beta>0$\footnote[2]{The proof that $\beta' < \beta$ implies $\sup_{F}\mathcal{O}_{\rho}(F;\beta')\le \sup_{F}\mathcal{O}_{\rho}(F;\beta)$ is straightforward: write $\beta'=\gamma \beta$ with $0<\gamma<1  $; then for any $F$ in the scope of $\sup_{F}\mathcal{O}_{\rho}(F;\beta)$, it turns out $F^\gamma$ is a PEF with power $\beta'$, for which the equality $\mathcal{O}_{\rho}(F^\gamma;\beta')=\mathcal{O}_{\rho}(F;\beta)$ follows immediately from the definition of log-prob rate -- hence the supremum of log-prob rates cannot be smaller at $\beta'$. $F^\gamma$ is a PEF with power $\beta'$ as $\mathbb E_\rho(F^\gamma \sigma(c|z)^{\beta\gamma})\le \mathbb E_\rho(F \sigma(c|z)^\beta)^\gamma\le 1^\gamma = 1$, with the first inequality holding by Jensen's inequality ($f(x)=x^\gamma$ is concave) and the second because $F$ is a PEF with power $\beta$.} which implies that the global supremum of the log-prob rates $\sup_{\beta>0}\sup_{F}\mathcal{O}_{\rho}(F;\beta)$, for all PEFs with positive powers, is achieved in the limit $\beta\to 0$. We observe this with the top curve. For a fixed $\epsilon$, the net log-prob rate converges upwards to $\sup_{F}\mathcal{O}_{\rho}(F;\beta)$ for each $\beta$ as $n\to 0$ , but for any fixed value of $n$, $\log_2 (\epsilon)/n\beta$ diverges to $-\infty$ as $\beta \to 0$. 
Hence in a finite trial regime the supremum of the log-prob rates (attainable by PEFs with positive powers) is not achieved---the maximum value of the \emph{net} log-prob rate is achieved at a $\beta$ away from $0$.  The general trend is that for a value of $n$ the net log-prob rate achieves a higher value corresponding to a lower value of $\beta$; the net log-prob rate is improved by a reduction in power and an increase in the number of trials. This is observed in Figure \ref{fig:beta_n_relation} for the two choices of $n=1.5 \times 10^5$ and $n=2.4 \times 10^5$. 

The arguments above illustrate how it is necessary to consider a range of $\beta$ values to find the optimal choice. We remark there is an upper limit to the range of $\beta$ values that must be considered: it was noted in in~\cite{PhysRevA.98.040304} (see Appendix F therein) that there exists a certain threshold value $\beta_{\text{th}}^{\text{NS}}$ such that for all $\beta \ge \beta_{\text{th}}^{\text{NS}}$, the optimisation problem in~\eqref{eq_PEF_opt} will return the same PEF independent of the choice of $\beta$, and \cite{PhysRevA.98.040304} cites numerical evidence that this bound is $\beta_{\text{th}}^{\text{NS}} \simeq 0.4151$. The following result, whose proof we give in the appendix, derives this threshold analytically, finding it to have the exact value $\log_2(4/3)$.
\begin{proposition}\label{beta_threshold}
For the set of behaviours $\Pi_{\mathrm{NS}}$, the PEF optimisation in~\eqref{eq_PEF_opt} is independent of the power $\beta$ for $\beta\ge\log_{2}(4/3)$.
\end{proposition}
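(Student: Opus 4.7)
The plan is to show that for $\beta \ge \log_2(4/3)$ the PR-box feasibility constraints in \eqref{eq_PEF_opt} are implied by the local deterministic (LD) constraints, so that the feasible region of the optimisation becomes independent of $\beta$ (the objective $\mathbb{E}_\rho[\log_2 F(CZ)]$ being already $\beta$-free).

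By Lemma~\ref{lemma_conv_PEF}, it suffices to impose constraints only at the extreme points of $\Pi_{\mathrm{NS}}$, namely the $8$ PR boxes and $16$ LD behaviours, paired with a fixed settings distribution $\sigma_{\mathrm s}$. Since $\mu_{\mathrm{LD}}(C|Z) \in \{0,1\}$, one has $\mu_{\mathrm{LD}}(C|Z)^\beta = \mu_{\mathrm{LD}}(C|Z)$ for every $\beta > 0$, so each LD constraint reduces to $\sum_z F(f(z),z)\sigma_{\mathrm s}(z) \le 1$, with no dependence on $\beta$. By contrast, since $\mu_{\mathrm{PR}}(C|Z) \in \{0, 1/2\}$, the PR constraint factors into
\[
T_{\mathrm{PR}}(F) \coloneqq \sum_z \sigma_{\mathrm s}(z)\bigl[F(c_1(z),z) + F(c_2(z),z)\bigr] \le 2^{\beta+1},
\]
where $c_1(z), c_2(z)$ are the two PR-supported outputs at setting $z$. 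It therefore suffices to prove: for each PR box, the maximum of $T_{\mathrm{PR}}(F)$ over $F \ge 0$ satisfying the LD constraints equals $8/3$, since then the PR constraint is automatic precisely when $2^{\beta+1} \ge 8/3$, i.e., $\beta \ge \log_2(4/3)$.

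I focus on the standard PR box $\mu_{\mathrm{PR}}^{000}$; the other seven cases follow by the relabelling symmetries of the no-signalling polytope. Achievability is immediate: set $F(c,z) = 1/(3\sigma_{\mathrm s}(z))$ on the $8$ ``good'' PR-supported outcomes and $F(c,z)=0$ elsewhere. Each LD constraint then reads $k_f/3 \le 1$, where $k_f$ is the number of settings at which the LD outputs a good outcome; since $k_f \le 3$ by the CHSH bound, all LD constraints hold, and $T_{\mathrm{PR}}(F) = 2 \cdot 4 \cdot (1/3) = 8/3$. The matching upper bound is an LP-duality certificate: a case check over the $16$ pairs $(f_X, f_Y)$ shows that exactly $8$ LD behaviours satisfy $k_f = 3$ (precisely those saturating $B^{000}=2$), and that at each setting $z$, among these $8$ LDs exactly $3$ output each of the two good outcomes and exactly $1$ outputs each of the two bad outcomes. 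Summing those $8$ LD constraints with common weight $1/3$ gives
\[
\tfrac{1}{3}\sum_{f \,:\, k_f = 3} \sum_z F(f(z),z)\sigma_{\mathrm s}(z) = T_{\mathrm{PR}}(F) + \tfrac{1}{3}\sum_z \sigma_{\mathrm s}(z)\bigl[F(c_3(z),z) + F(c_4(z),z)\bigr] \ge T_{\mathrm{PR}}(F),
\]
while the left-hand side is at most $8 \cdot (1/3) = 8/3$; hence $T_{\mathrm{PR}}(F) \le 8/3$.

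Combining these, the PR-box constraints are redundant whenever $\beta \ge \log_2(4/3)$, so the feasible region of \eqref{eq_PEF_opt} is determined by the $\beta$-independent LD constraints alone and the optimisation returns the same PEF throughout this range. The main obstacle is the combinatorial bookkeeping that identifies the $8$ LDs with $k_f = 3$ and tallies their output multiplicities at each setting; once that is in hand, the rest is routine linear programming.
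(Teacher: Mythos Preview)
Your proof is correct and follows essentially the same approach as the paper: both reduce the PR-box constraint to the inequality $T_{\mathrm{PR}}(F)\le 8/3$ (the paper's $M\le 8/3$) by summing the eight LD constraints that saturate $B^{000}$ and exploiting the $3$-versus-$1$ multiplicity pattern of good versus bad outcomes at each setting. Your presentation is slightly more structured in that you frame the argument explicitly as LP duality and fold the tightness witness $F(c,z)=1/(3\sigma_{\mathrm s}(z))$ into the main proof, whereas the paper derives the bound via the summed inequality $3M+N\le 8$ and relegates the tightness example to a closing remark; but the underlying combinatorics and inequality are identical.
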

\begin{proof}
    See Appendix~\ref{s:Proof_beta_threshold}.
\end{proof}

We now ask how optimal PEFs for lower and lower values of $\beta$ (and correspondingly higher values of $n$) compare on the question of \textit{robustness}, in the following sense: can a PEF optimised with respect to a distribution $\rho$ violating the standard CHSH-Bell inequality be used to certify randomness of distributions that are different from $\rho$, provided they violate the same CHSH-Bell inequality? This question is relevant because in practice, the observed experimental distribution will never be exactly the same as the anticipated one, and may be somewhat different depending on many potential factors. Figure~\ref{fig:HeatMap} gives an illustration of the matter of robustness. Comparing the two plots of the log-prob rate for quantum-realisable distributions on the two-dimensional slice (shown in Figure~\ref{fig:SliceAboveCHSHfacet}) above the standard CHSH-Bell facet, we observe that the level set denoting a zero amount of certified randomness in the right hand plot (which corresponds to a lower value of $\beta$ than that on the left) is pushed further down to (almost touching) the standard CHSH-Bell facet.

\begin{figure}[h]
\centering
\includegraphics[scale = 0.7]{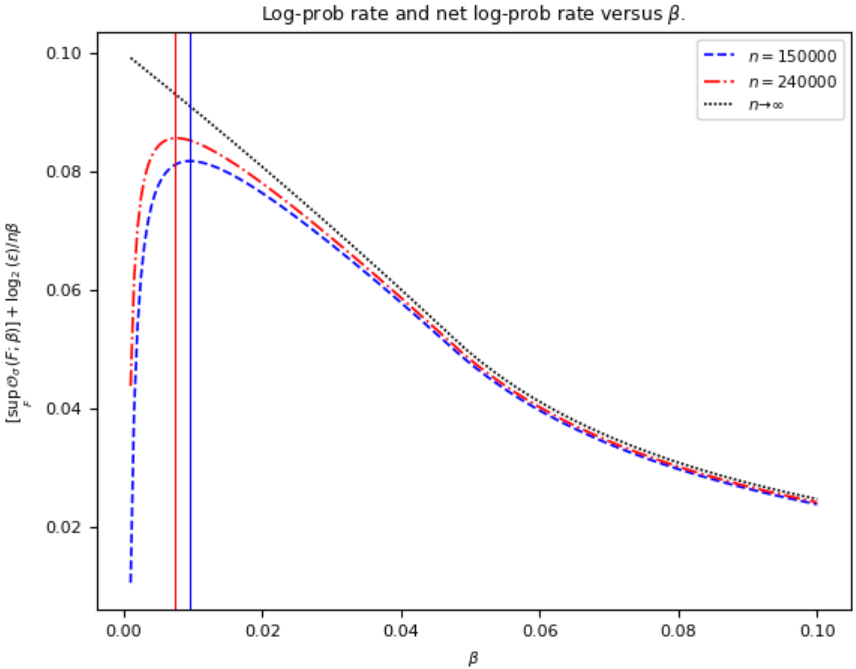}
\caption{A plot showing the net log-prob rates for $n=1.5\times 10^{5}$ (the dashed curve) and $n=2.4\times 10^{5}$ (the dash-dotted curve) with $\epsilon=10^{-4}$ and $\beta$ varying in the interval $(0.001,0.1)$. The dotted curve is the log-prob rate $\sup_{F}\mathcal{O}_{\rho}(F;\beta)$, an upper bound for the net log-prob rate in the limit as $n\to\infty$. We selected 200 equally spaced points in the interval $(0.001,0.1)$ for $\beta$ and performed the maximisation $\max_{F}\mathbb{E}_{\rho}[\log_{2}(F(ABXY))]$ constrained by: (1) the non-negativity of PEFs and (2) the defining condition $\mathbb{E}_{\mu}[F(ABXY)\mu(AB\lvert XY)^{\beta}]\le 1$ at all distributions $\mu$ corresponding to the eight PR and sixteen LD behaviours with a fixed uniform settings distribution $\mu(xy)=1/4$ for all $x,y\in\{0,1\}$. The anticipated distribution $\rho$ used here was the one corresponding to the behaviour given in Table I in~\cite{PhysRevResearch.2.033465}. We observe that the maximum value for the net log-prob rate---indicated by the solid vertical lines---is achieved at a lower value of $\beta$ for a higher value of $n$.}
\label{fig:beta_n_relation}
\end{figure}

\begin{figure}[h]
    \centering
    \includegraphics[scale=0.62]{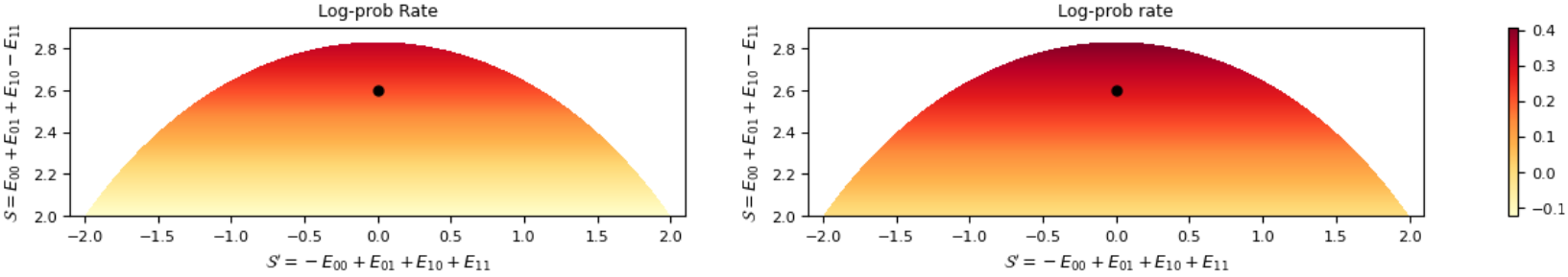}
    \caption{A heat map illustrating the robustness of PEF with log-prob rate as the figure of merit, evaluated for behaviours $\sigma(ab\lvert xy)$ on the two-dimensional slice of the set of quantum behaviours (shown in Figure~\ref{fig:SliceAboveCHSHfacet}) above the standard CHSH-Bell facet. The behaviours on the 2-dimensional slice shown above are parametrised by $\mathcal{S}$ and $\mathcal{S}'$ as shown in~\eqref{e:NSB2Dslice} with the added restrictions $\mathcal{S}^{2}+(\mathcal{S}^{\prime})^{2}\le 8$ and $2\le\mathcal{S}\le 2\sqrt{2},\,-2\le\mathcal{S}^{\prime}\le 2$ (see also Table~\ref{tab:QB2DSlice}). Assuming a uniform distribution for the settings, $\sigma_{\mathrm s}(xy)=1/4$ for all $x,y$, we plot the log-prob rate $\sum_{abxy}[\log_{2}F_{\ast}(abxy)\sigma(ab\lvert xy)\sigma_{\mathrm s}(xy)]/\beta$ for all distributions in the slice. The black dot corresponds to the behaviour (and hence the distribution) with respect to which we perform the PEF optimisation for a fixed $n$ and $\epsilon$ to obtain $F_{\ast}$. The coordinates for the black dot are $(\mathcal{S}^{\prime},\mathcal{S})\equiv(0,2.6)$.\\
    (a) \emph{Figure on the left}: Heat map with $F_{\ast}$ obtained from the PEF optimisation in~\eqref{eq_PEF_opt} with respect to the fixed distribution (corresponding to the black dot in the figures), fixed $n,\,\epsilon$ and $\beta=0.1$. Below $\mathcal{S}\simeq 2.22145$ no device-independent randomness can be certified.\\ 
]    (b) \emph{Figure on the right}: Heat map with $F_{\ast}$ obtained from the PEF optimisation in~\eqref{eq_PEF_opt} with respect to the fixed distribution (corresponding to the black dot in the figures), fixed $n,\,\epsilon$ and $\beta=0.01$. Below $\mathcal{S}\simeq 2.02072$ no device-independent randomness can be certified.}
    \label{fig:HeatMap}
\end{figure}
This suggests that the asymptotic optimality of a PEF need not entail a trade-off with its robustness; indeed we observed that in many cases, as $\beta>0$ assumes smaller and smaller values, the PEF optimised for a fixed $\rho$ violating the standard CHSH-Bell inequality gets more and more robust in the sense that it certifies randomness at a positive rate (asymptotically) for increasingly statistically different $\sigma$.

We show that this is a general feature. To this end, we define a sequence of PEFs that is both asymptotically optimal with respect to the log-prob rate and is asymptotically robust in the sense that given any distribution violating the standard CHSH-Bell inequality, all the PEFs beyond a point in the sequence certify randomness at a positive rate. To construct this PEF sequence, we first define the function $K_{\ast}(ABXY)$ as shown below:
\begin{equation}\label{e:Opt_EntEst}
    K_{\ast}(abxy)\coloneqq 
    \begin{cases}
    1 &:\quad a\oplus b=xy\\
    -3 &:\quad \text{otherwise}
    \end{cases}
\end{equation}

The function defined in~\eqref{e:Opt_EntEst} is an entropy estimator for the distributions in the no-signalling polytope when the settings are equiprobable; i.e., $\sigma_{\mathrm{s}}(xy)=1/4$ for all choices of $x$ and  $y$. To see this, recalling Definition \ref{def:EntropyEstimator} we can check---by direct evaluation---if $K_{\ast}$ satisfies the inequality $\mathbb{E}_{\sigma}[K(CZ)]\le\mathbb{E}_{\sigma}[-\log_{2}(\sigma(C\lvert Z))]$ when $\sigma$ is each of the extremal points of the no-signalling polytope. It is sufficient to check this condition for the extremal points of the no-signalling set, i.e., the PR behaviours and the LD behaviours. This is because if $\sigma$ is expressible as $\sigma=\lambda\sigma_{1}+(1-\lambda)\sigma_{2}$ then for any function $K$ satisfying $\mathbb{E}_{\sigma_{i}}[K(ABXY)]\le\mathbb{H}_{\sigma_{i}}(AB\lvert XY)$, we have $\mathbb{E}_{\sigma}[K]=\lambda\mathbb{E}_{\sigma_{1}}[K]+(1-\lambda)\mathbb{E}_{\sigma_{2}}[K]\le\lambda\mathbb{H}_{\sigma_{1}}(AB\lvert XY)+(1-\lambda)\mathbb{H}_{\sigma_{2}}(AB\lvert XY)\le\mathbb{H}_{\sigma}(AB\lvert XY)$. Hence if the condition holds for the extremal points, it will hold for all points in the set. To see that it does, we confirm by inspection that $\mathbb{E}_{\sigma}[K_{\ast}]$ attains the value 1 for the PR behaviour achieving the no-signalling maximum for the standard CHSH function, the value $-3$ for the PR behaviour achieving $-4$, and the value $-1$ for each of the PR behaviours that achieve the value 0, which are all less than or equal to the conditional Shannon entropy of the respective PR behaviours, which is 1. Likewise, we can check that $K_{\ast}$ is a valid entropy estimator for all the LD behaviours; it takes the value zero for the eight local deterministic distributions appearing in Table \ref{tab:Simplex_PR1} and $-2$ for the other eight, while $\mathbb{H}(AB\lvert XY)=0$ for these distributions. Hence, we have verified that $K_{\ast}$ satisfies the entropy estimator condition for all the extremal behaviours, and by extension all behaviours in the no-signalling polytope. 

Having shown $K_{\ast}$ is a is an entropy estimator, we next consider a sequence of functions $\{F_{k}\}_{k=1}^\infty$ where $F_{k}$ is defined according to the construction in Theorem \ref{thm:EntEst_PEF}:
\begin{equation}\label{e:PEF_sequence}
    F_{k}(ABXY) = 2^{(K_{\ast}(ABXY)-e^{-k})\beta_{k}},
\end{equation}
where we choose a positive $\beta_k$ making $F_k$ a PEF for each $k$, whose existence is guaranteed by the theorem. 
By construction, for each $k$ the function $F_{k}$ is a valid PEF with power $\beta_{k}>0$ for the set of no-signalling distributions.
The log-prob rate of $F_{k}$ at $\sigma$ is:
\begin{equation}\label{e:logprobrate_F_k}
\mathcal{O}_{\sigma}(F_{k};\beta_{k})=\frac{1}{\beta_k}\mathbb{E}_{\sigma}\left[\log_{2}\left(2^{(K_{\ast}-e^{-k})\beta_{k}}\right)\right]=\mathbb{E}_{\sigma}[K_{\ast}]-e^{-k}.
\end{equation}
We show robustness of the sequence in the following sense: for \textit{any} $\sigma\in \Pi_{\mathrm{NS}}$ violating the standard CHSH-Bell inequality, the log-prob rate of the sequence of PEFs $\{F_{k}\}_{k=1}^\infty$ is eventually positive. To see this, recall that as discussed in our brief review of the (2,2,2) Bell scenario, behaviours violating the standard CHSH-Bell inequality are contained in the non-local $8$-simplex $\Delta_{\mathrm{PR},1}^{8}$ (see Table~\ref{tab:1PR_8LD}). Hence, $\sigma$ is expressible as a convex combination of the vertices of $\Delta_{\mathrm{PR},1}^{8}$:
\begin{equation}\label{e:sigma_1PR+8LD}
\sigma(ab\lvert xy)\sigma_{\mathrm s}(xy) = \lambda_{\mathrm{PR},1}\mu_{\mathrm{PR},1}(ab\lvert xy)\sigma_{\mathrm s}(xy) + \sum_{i=1}^{8}\alpha_{i}\mu_{\mathrm{LD},i}(ab\lvert xy)\sigma_{\mathrm s}(xy),
\end{equation}
where $\lambda_{\mathrm{PR},1} + \sum_{i=1}^{8}\alpha_{i}=1$. This decomposition allows us to express the log-prob rate in terms of the standard CHSH-Bell function, which we define as $S(ABXY) = (-1)^{XY}(-1)^{A+B}/\sigma_{\mathrm s}(XY)$, where $\sigma_{\mathrm s}(XY)$ is the fixed settings distribution. We see that $\lambda_{\mathrm{PR},1}=(S_{\sigma}-2)/2$ in \eqref{e:sigma_1PR+8LD}, where $S_{\sigma}$ is the expected standard CHSH-Bell value according to the distribution $\sigma(ABXY)=\sigma(ab\lvert xy)\sigma_{\mathrm s}(xy)$. This follows by computing the expectation of $S$ according to the PR Box distribution $\mu_{\mathrm{PR},1}=\mu_{\mathrm{PR},1}(abxy)=\mu_{\mathrm{PR},1}(ab\lvert xy)\sigma_{\mathrm s}(xy)$, which is 4, and the expectation of $S$ according to the local distribution $\mu_{\mathrm{L},i}=\mu_{\mathrm{L},i}(abxy)=\mu_{\mathrm{LD},i}(ab\lvert xy)\sigma_{\mathrm s}(xy)$, which is 2. The
log-prob rate $\mathcal{O}_{\sigma}(F_{k};\beta_{k})$ for $F_{k}$ at $\sigma$ is then expressed as: 
\begin{equation}\label{e:logprobrate_sigma_1}
    \mathcal{O}_{\sigma}(F_{k};\beta_{k}) = \left(\frac{S_{\sigma}-2}{2}\right)\mathbb{E}_{\mu_{\mathrm{PR},1}}[K_{\ast}] + \sum_{i=1}^{8}\alpha_{i}\mathbb{E}_{\mu_{\mathrm{LD},i}}[K_{\ast}] - e^{-k}.
\end{equation}
Since $\mathbb{E}_{\mu_{\mathrm{LD},i}}[K_{\ast}]$ evaluates to zero for each $\mu_{\mathrm{LD},i}$ and $\mathbb{E}_{\mu_{\mathrm{PR},1}}[K_{\ast}]$ evaluates to $1$, the expression for $\mathcal{O}_{\sigma}(F_{k};\beta_{k})$ reduces to
$\mathcal{O}_{\sigma}(F_{k};\beta_{k}) = \frac{S_{\sigma}-2}{2} - e^{-k}$.
As $k\to\infty$, $\mathcal{O}_{\sigma}(F_{k};\beta_{k})=(S_{\sigma}-2)/2$ and so the quantity is eventually strictly positive provided $S_{\sigma}>2$, i.e., provided $\sigma$ violates the standard CHSH-Bell inequality.

Continuing our discussion on robustness, a different perspective on it would be to ask: given a PEF $F$ with power $\beta>0$ optimised with respect to the distribution $\rho$, how far in terms of total-variation distance can another distribution $\sigma$ be such that the same PEF (with the same power) can be used to certify randomness? Theorem~\ref{thm:robustPEF} provides a sufficient condition for the robustness of a positive, non-constant PEF $F=G^{\beta}$ 
with power $\beta$ in the following sense: assuming the log-prob rate of $F$ at $\rho$ is positive, the log-prob rate of $F$ at a different distribution $\sigma$ is positive if $d_{\mathrm{TV}}(\rho,\sigma)$ is within a certain bound (as given in~\eqref{e:suff_cond_robustPEF}). For the sequence $\{F_{k}\}_{k=1}^\infty$ of PEFs the upper-bound on $d_{\mathrm{TV}}(\rho,\sigma)$ is computed as follows: Notice that in the sequence $\{F_{k}\}_{k=1}^\infty$ of PEFs, $F_{k}$ is of the form $F_{k}=G_{k}^{\beta_{k}}$, where $G_{k}=2^{K_{\ast}-e^{-k}}$. The upper-bound on $d_{\mathrm{TV}}(\rho,\sigma)$ (as given in~\eqref{e:suff_cond_robustPEF}) is then $\mathbb{E}_{\rho}[G_{k}]/(L-l)=\frac{1}{4}\left(\frac{S_{\rho}-2}{2}-e^{-k}\right)$. It is worthwhile to observe that given a standard-CHSH Bell inequality violating distribution $\rho$, this upper-bound approaches the \emph{strength of non-locality} for $\rho$ which is expressed as $(S_{\rho}-2)/8$. The strength of non-locality is defined in terms of how far the non-local no-signalling distribution $\rho$ is from the local set $\Pi_{\mathrm{L}}$~\cite{PhysRevA.97.022111}. It is defined as follows:
\begin{equation}\label{e:Strength_Nonlocality}
    d_{\mathrm{NL}}(\rho)\coloneqq \frac{1}{\abs{\mathcal{X}}\abs{\mathcal{Y}}}\frac{1}{2}\min_{\tau\in\Pi_{L}}\sum_{abxy}\abs{\rho(ab\lvert xy)-\tau(ab\lvert xy)},
\end{equation}
where the minimum is over all distributions $\tau$ belonging to the local set $\Pi_{\mathrm{L}}$. In the definition of $d_{\mathrm{NL}}(\rho)$ in~\eqref{e:Strength_Nonlocality} we have assumed a uniform settings distribution as is evident from the factor $1/\abs{\mathcal{X}}\abs{\mathcal{Y}}$, where $\abs{\mathcal{X}}$ and $\abs{\mathcal{Y}}$ denote the number of the measurement settings choices for Alice and Bob, respectively (which for the (2,2,2) Bell scenario is $2$ for Alice and $2$ for Bob). A theorem in~\cite{Bierhorst_2016} (see Theorem 3.1) provides a condition for the local distribution $\tau$ such that the minimum $(1/2)\min_{\tau\in\Pi_{\mathrm{L}}}\sum_{abxy}\abs{\rho(ab\lvert xy)-\tau(ab\lvert xy)}$ in~\eqref{e:Strength_Nonlocality} is achieved and that the minimum comes out to be the weight $(S_{\rho}-2)/2$ on the PR-box in the expression of $\rho$ as the convex combination of the vertices of $\Delta_{\mathrm{PR},1}^{8}$; and so per the definition in~\eqref{e:Strength_Nonlocality} $d_{\mathrm{NL}}(\rho)=(S_{\rho}-2)/8$. 
Thus, the bound $\frac{1}{4}\left(\frac{S_{\rho}-2}{2}-e^{-k}\right)$ from Theorem~\ref{thm:robustPEF} approaches $\frac{S_{\rho}-2}{8}$ which is the strength of non-locality $d_{\mathrm{NL}}(\rho)$ for $\rho$. This illustrates that a bound of this form cannot be improved, in the sense that increasing the total variation distance from $\rho$ by any positive amount will encompass local distributions which cannot certify randomness.

Thus $\{F_k\}_{k=1}^\infty$ is fully robust as $k\to \infty$. Next, we confirm that $\{F_k\}_{k=1}^\infty$ is asymptotically optimal in terms of min-entropy per trial (i.e., log-prob rate), for any
distribution $\sigma$ violating the standard CHSH inequality. Since $\Pi_{\mathrm{NS}}$ is closed and equal to the convex hull of its extremal points,
Theorem~\ref{thm_hmin_achieved} implies that given such a $\sigma$, the adversary has a strategy obtained through an IID attack based on a single-trial distribution whose conditional Shannon entropy is equal to the infimum defined in~\eqref{e:def_IIDattack}. We can identify this attack. 
The optimisation in~\eqref{e:def_IIDattack} can be expressed as follows:
\begin{equation}\label{e:hminagain}
    h_{\min}(\sigma) = \min\Big\{H_{\nu}(AB\lvert XYE)\colon\nu_{e}\in\Pi_{\mathrm{NS}},\sum_{e}\nu(e)\nu_{e}=\sigma\Big\},
    \end{equation}
where $\nu_{e}=\nu(ABXY\lvert e)$. We compute $H_\nu(AB|XYE)$ for the decomposition of $\sigma$ given in~\eqref{e:sigma_1PR+8LD}, where we have noted $\lambda_{\mathrm{PR},1}=(S_{\sigma}-2)/2$. 
Since the conditional Shannon entropy is one for PR boxes and zero for LD behaviours, we obtain  $H_\nu(AB|XYE)= (S_{\sigma}-2)/2$, and hence $h_{\min}(\sigma)$ is no larger than this value. But since this expression is same as that of the asymptotic log-prob rate of the sequence $\{F_{k}\}_{k=1}^{\infty}$ of valid PEFs, we can say $h_{\min}(\sigma)$ is also no smaller than this value, and so $h_{\min}(\sigma)= (S_{\sigma}-2)/2$.
This demonstrates the asymptotic optimality of the sequence $\{F_{k}\}_{k=1}^\infty$ in the sense that the PEFs in the sequence get arbitrarily close to certifying an asymptotic randomness rate of $h_{\min}(\sigma)$.

In our proof of the asymptotic optimality of the sequence $\{F_k\}_{k=1}^\infty$, we identified the optimal attack by an adversary: it is to prepare the decomposition in \eqref{e:sigma_1PR+8LD} with each $e$ corresponding to one of the (up to) nine extremal behaviours, with respective $\nu(e)$ weights of $\lambda_{\mathrm{PR},1}$ and $\alpha_{i}$. This can be seen to be the \textit{unique} attack achieving  $h_{\min}(\sigma)$, through an argument we sketch as follows: (1) any $\nu$-decomposition of $\sigma$ can be improved upon (i.e., reducing $H_\nu(AB|XYE)$) by considering only extremal $\nu_e$, by the concavity of conditional Shannon entropy; (2) any decomposition including positive weights on more than one PR box can be \textit{strictly} improved upon by one with weights on a single PR box, by Theorem 2.1 of~\cite{Bierhorst_2016}, which shows how to replace equal mixtures of two PR boxes with mixtures of a single PR box and local deterministic distributions; (3) this decomposition can be further strictly improved via Theorem 2.2 of~\cite{Bierhorst_2016} by removing any local deterministic distributions not saturating the CHSH-Bell inequality with those that do (the improvement being obtained by decreasing the weight on the sole remaining PR box). The resulting decomposition---that of~\eqref{e:sigma_1PR+8LD}---is thus the unique optimiser of~\eqref{e:hminagain}. It witnesses the bound of $1+\mathrm{dim}(\Pi_{\mathrm{NS}})=1+8=9$ on the set $\mathcal{E}$ (as shown in Theorem~\ref{thm_hmin_achieved}).  In general, positive weight on all 9 extremal boxes may be necessary, due to their affine independence which was noted in Section \ref{e:brief_review}. One can confirm this visually from Table~\ref{tab:Simplex_PR1}: weight on the (only) non-local distribution, the PR box, is necessary to violate the CHSH-Bell inequality, and any distribution with non-zero probabilities for each possible outcome (a property possessed by, for example, the quantum distribution saturating Tsirelson's bound) will require positive weight on all the local deterministic behaviours, as each LD behaviour corresponds to a distinct sole appearance of the number ``1" in a column otherwise populated by zeroes in Table~\ref{tab:Simplex_PR1}. This witnesses that further reduction of the $1+\mathrm{dim}(\Pi_{\mathrm{NS}})$ bound on $|\mathcal E|$ in Theorem~\ref{thm_hmin_achieved} is impossible, and so this bound is optimal.

It is an important observation that the adversary needs to prepare only one non-classical state in her realisation of the optimal attack, since the preparation of a non-classical state is likely the most difficult aspect of the attack. We now explore possible generalisations of this feature to other trial models.

\subsection{Characterising the optimal attack in different scenarios}

We start by exploring the possibility of arriving at a similar analytic characterisation of the optimal adversarial attack when the adversary is limited to only quantum-realisable distributions.
Suppose now that our trial model is the set $\Pi_{\mathrm Q}$ of quantum-achievable distributions for the (2,2,2) scenario. The adversary is still constrained to performing probabilistic attacks to simulate the trial statistics, while generating the least amount of randomness possible; however, she now tries to mimic the trial statistics using quantum-achievable distributions. The optimisation routine depicting this goal is:
\begin{equation}\label{e:Eve_objective_Q}
    \tilde{h}_{\min}(\sigma)=\min\Big\{H_{\omega}(AB\lvert XYE)\colon\omega_{e}\in\Pi_{\mathrm{Q}},\sum_{e}\omega(e)\omega_{e}=\sigma\Big\},
\end{equation}
where $\omega_{e}=\omega(ABXY\lvert e)$. The set $\Pi_{\mathrm Q}$ is compact and convex, but unlike $\Pi_{\mathrm{NS}}$, is not a polytope and so there is a continuum of extremal points.

We conjecture that the minimum in~\eqref{e:Eve_objective_Q} is achieved at a distribution that marginalises to the observed trial distribution through a convex combination of (only) one quantum extremal distribution violating the standard CHSH-Bell inequality and no more than eight local deterministic distributions that saturate the same inequality.

An attempt to prove this will require an understanding of the geometry of the quantum set, and in particular its extremal points. We do not yet have a complete characterisation of the set of behaviours $\Pi_{\mathrm{Q}}$ (in the true $\mathbb{R}^{8}$ space), although a recent work has conjectured an analytic criterion for extremality in the CHSH scenario~\cite{mikosnuszkiewicz2023extremal}. However, a characterisation does exist when we make the assumption of unbiased marginals: $\mu(A=0\lvert x)=\mu(A=1\lvert x)=1/2$ for all $x\in\{0,1\}$ and $\mu(B=0\lvert y)=\mu(B=1\lvert y)=1/2$ for all $y\in\{0,1\}$, in which case the set of behaviours is four dimensional. The unbiased marginal case has been completely characterised, a detailed exposition of which can be found in~\cite{Le2023quantumcorrelations} (see Theorem 1 therein).

\begin{figure}[H]
    \centering
    \subfloat[]{\includegraphics[width=0.486\textwidth]{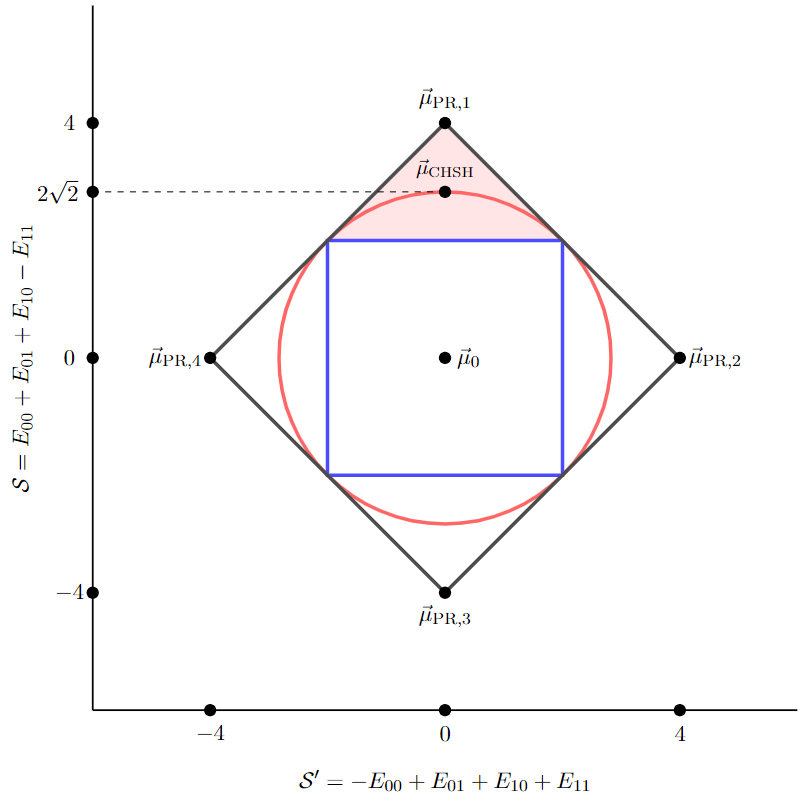}\label{fig:2DSlice}}
    \hfill
    \subfloat[]{\includegraphics[width=0.486\textwidth]{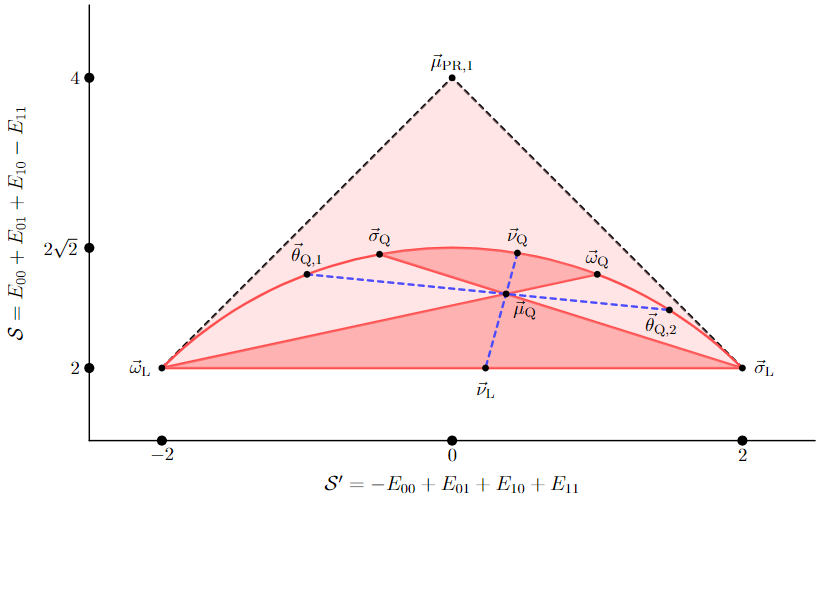}\label{fig:SliceAboveCHSHfacet}}
    \captionsetup{singlelinecheck=off}
    \caption[.]{(a) A 2-dimensional slice of the set of no-signalling behaviours (containing the quantum and the local set). The behaviours can be parametrised as the CHSH-Bell values $\mathcal{S}$ and $\mathcal{S}^{\prime}$ obtained by two different versions of the CHSH-Bell expression in~\eqref{eq_Bell}. Any behaviour on the slice can be represented as:
\begin{equation}\label{e:NSB2Dslice}
    \vec{\mu} = \frac{\mathcal{S}}{4}\vec{\mu}_{\mathrm{PR},1} + \frac{\mathcal{S}'}{4}\vec{\mu}_{\mathrm{PR},2} + \left(1 - \frac{\mathcal{S}+\mathcal{S}'}{4}\right)\vec{\mu}_{0},\quad \mathcal{S,S'}\in[-4,4],
\end{equation}
    where $\vec{\mu}_{0}$ is the maximally random behaviour obtained as the equal mixtures of all 16 local deterministic behaviours. The disk $\mathcal{S}^{2}+(\mathcal{S}')^{2}\le 8$ represents the set of quantum behaviours.
    (b) The portion of the 2-dimensional slice containing the no-signalling (including quantum-achievable) behaviours above the standard CHSH-Bell facet. For a fixed behaviour $\vec\mu_\mathrm{Q}$ in the interior of the quantum region, the darker shaded region corresponds to possible ways of expressing $\vec\mu_\mathrm{Q}$ as a convex combination of a behaviour on the quantum boundary and a behaviour on the local boundary (for example, $\vec\mu_\mathrm{Q}=\lambda\vec\nu_\mathrm{Q}+(1-\lambda)\vec\nu_\mathrm{L},\,\lambda\in(0,1)$). For the same behaviour $\vec\mu_\mathrm{Q}$, the lighter shaded region represents possible ways of expressing it as a convex combination of two behaviours on the quantum boundary (for example, $\vec\mu_\mathrm{Q}=\delta\vec\theta_\mathrm{Q,1}+(1-\delta)\vec\theta_\mathrm{Q,2},\,\delta\in(0,1)$).} 
\end{figure}

A key enabling step in the direction of characterising the optimal attack in the unbiased marginals case would be to see if the following two conditions hold simultaneously: first, a convex combination of any two extremal quantum behaviours can be expressed equivalently as a different convex combination of one extremal quantum behaviour (different from the previous two) and classical noise (mixtures of the local deterministic behaviours), i.e., for extremal quantum behaviours $\vec{\mu}_{1},\vec{\mu}_{2}$, the convex combination $\lambda\vec{\mu}_{1}+(1-\lambda)\vec{\mu}_{2}$ can be re-expressed as the convex combination $\delta\vec{\mu}_{3}+(1-\delta)\vec{\mu}_{0}$, where $\lambda,\delta\in(0,1)$, $\vec{\mu}_{3}$ is a third extremal quantum behaviour,  and $\vec{\mu}_{0}$ is a mixture of the local deterministic behaviours; and second, $\lambda\mathbb{H}_{\mu_{1}}(AB\lvert XY)+(1-\lambda)\mathbb{H}_{\mu_{2}}(AB\lvert XY) \ge \delta\mathbb{H}_{\mu_{3}}(AB\lvert XY)$, where the term $(1-\delta)\mathbb H_{\mu_0}(AB|XY)$ that might be expected to appear on the right vanishes due to the concavity of conditional Shannon entropy and the fact that it is zero for local deterministic behaviours, into which $\vec{\mu}_0$ can be decomposed. 

A numerical inspection to check---by means of an exhaustive search---if these two conditions hold simultaneously (in the uniform marginals case) introduces a lot of free variables. If we add more symmetry to the behaviours with uniform marginals and constrain ourselves to the 2-dimensional slice as shown in Figure~\ref{fig:2DSlice},\footnote[2]{This can be done as follows: A behaviour with uniform marginals can be completely specified by the correlators $(E_{00},E_{01},E_{10},E_{11})$, where $-1\le E_{xy}\le 1,\,\forall x,y$ (see the line following~\eqref{eq_Bell} for the definition of $E_{xy}$). To obtain behaviours in the 2-dimensional slice as shown in Figure~\ref{fig:2DSlice} one can restrict attention to distributions of the form $\mu(ab\lvert xy)=\frac{1}{4}(1+(-1)^{a+b}C_{xy})$ where $C_{00}=-C_{11}=\frac{E_{00}-E_{11}}{2}$ and $C_{01}=C_{10}=\frac{E_{01}+E_{10}}{2}$.} where the behaviours are given by the formula \eqref{e:NSB2Dslice} and are of the form displayed in Table \ref{tab:QB2DSlice}, then one can perform numerical search to see if the two conditions mentioned above hold simultaneously, and we did observe it to hold in some initial numerical investigations comparing the $\vec \theta$ decompositions against $\vec \nu$ decompositions as depicted in Figure~\ref{fig:SliceAboveCHSHfacet}.

\begin{table}[h]
\centering
{\renewcommand{\arraystretch}{1.24}%
\begin{tabular}{|r|r|c|c|c|c|}
\cline{3-6}
\multicolumn{1}{r}{} & \multicolumn{1}{r}{} & \multicolumn{4}{|c|}{$ab$}\\
\cline{3-6}
\multicolumn{1}{r}{} & \multicolumn{1}{r}{} & \multicolumn{1}{|c}{$00$} & \multicolumn{1}{c}{$01$} & \multicolumn{1}{c}{$10$} & \multicolumn{1}{c|}{$11$}\\
\hline
\multirow{4}{*}{$xy$} & $00$ & $s_{1}$ & $s_{2}$ & $s_{2}$ & $s_{1}$ \\
\cline{3-6}
 & $01$ & $s_{3}$ & $s_{4}$ & $s_{4}$ & $s_{3}$ \\
\cline{3-6}
 & $10$ & $s_{3}$ & $s_{4}$ & $s_{4}$ & $s_{3}$ \\
\cline{3-6}
 & $11$ & $s_{2}$ & $s_{1}$ & $s_{1}$ & $s_{2}$ \\
\hline
\end{tabular}}
\caption{Tabular representation of the no-signalling behaviours on the 2-dimensional slice shown in Figure~\ref{fig:2DSlice}. The behaviours have uniform marginals, i.e., the probability of observing an outcome conditioned on a measurement setting is $1/2$ for each party for all outcomes and settings. The behaviours are further constrained in having the third and fourth row completely determined by the first and second, which need not hold in general for uniform marginal distributions, and brings the dimensionality down from four to two. Any behaviour represented as above is parameterised as the values $\mathcal{S}$ and $\mathcal{S}^{'}$ of the two versions of the CHSH-Bell expression $E_{00}+E_{01}+E_{10}-E_{11}$ and $-E_{00}+E_{01}+E_{10}+E_{11}$, respectively: $s_{1} = (4+\mathcal{S}-\mathcal{S}^{\prime})/16$, $s_{2} = (4+\mathcal{S}^{\prime}-\mathcal{S})/16$, $s_{3} = (4+\mathcal{S}+\mathcal{S}^{\prime})/16$, $s_{4} = (4-\mathcal{S}-\mathcal{S}^{\prime})/16$, where for the no-signalling set $-4\le\mathcal{S}'+\mathcal{S}\le 4,\,-4\le\mathcal{S}'-\mathcal{S}\le 4$ and for the quantum set $\mathcal{S}^{2}+(\mathcal{S}')^{2}\le 8$.}\label{tab:QB2DSlice}
\end{table}

Going beyond the minimal Bell scenario, we considered the possibility of a similar characterisation of optimal \emph{no-signalling} adversarial attack in higher $(n,m,k)$ Bell scenarios. In the (2,2,2) Bell scenario the analytical characterisation of the optimal adversarial attack crucially relied upon the geometric features of the no-signalling polytope, namely Theorems 2.1 and 2.2 in~\cite{Bierhorst_2016}: that equal mixtures of two PR behaviours are expressible as equal mixtures of four distinct LD behaviours and consequently, a behaviour violating any of the eight versions (up to local relabelling of the outcomes and settings) of the CHSH-Bell inequality is expressible as a convex combination of the one PR behaviour achieving the non-local maximum and (up to) eight LD behaviours achieving the local maximum of the corresponding CHSH-Bell expression. These geometric features, however, do not extend to the no-signalling polytopes of higher $(n,m,k)$ Bell scenarios. Membership of equal mixtures of extremal no-signalling non-local behaviours in the local polytope holds solely in the (2,2,2) Bell scenario.

Below we provide examples of equal mixtures of no-signalling non-local extremal behaviours in the $(2,2,3)$, $(2,3,2)$ and $(3,2,2)$ Bell scenarios that do not belong to the local polytope. One can use linear programming to check non-locality of the such examples. Assessment of locality of a behaviour is an instance of the \emph{membership problem of the local polytope}. Since the local deterministic (LD) behaviours are the extremal points of the local polytope, we can formulate our problem as a \emph{feasibility linear program}. Suppose $\left\{\vec{\mu}_{\mathrm{LD},1},\vec{\mu}_{\mathrm{LD},2},\ldots,\vec{\mu}_{\mathrm{LD},\#_{\mathrm{LD}}}\right\}$ is the set of LD behaviours for some Bell scenario. The vector $\vec{\mu}_{\mathrm{LD},i}\in\mathds{R}^{d}$ denotes the joint probability of outcomes conditioned on the input choices and $d$ is the dimension of the ambient space in which the vector lies. The feasibility linear program has the variable $\vec{x}\in\mathds{R}^{\#_{\mathrm{LD}}}$. The inequality constraints comprise of $x_{i}\ge 0,\,i\in[\#_{\mathrm{LD}}]$ and the equality constraints are $\sum_{i=1}^{\#_{\mathrm{LD}}}x_{i}=1$ and the following:
\begin{equation}\label{e:equalMix_membership}
    \frac{1}{2}\overrightarrow{\mathsf{NS}}_{\mathrm{extr}} + \frac{1}{2}\overrightarrow{\mathsf{NS}^{\prime}}_{\mathrm{extr}} = \sum_{k=1}^{\#_{\mathrm{LD}}}x_{k}\vec{\mu}_{\mathrm{LD},k}
\end{equation}
where $\overrightarrow{\mathsf{NS}}_{\mathrm{extr}}$ is a non-local no-signalling extremal behaviour. The details on formulating the dual of this linear program can be found in section E.2.1 of the Appendix of \cite{10.1093/oso/9780198788416.001.0001}.

Before presenting the counter-examples we briefly review the $(n,m,k)$ Bell scenario: This scenario 
 consists of $n$ spatially separated parties, where each party $i\in[n]$ has a choice of $m$ different $k$-outcome measurements. For $\mathcal{X}\equiv\{0,1,\ldots,m-1\}$ and $\mathcal{A}\equiv\{0,1,\ldots,k-1\}$ the joint probability $\mu(a_{1}a_{2}\ldots a_{n}\lvert x_{1}x_{2}\ldots x_{n})$ of obtaining the outcomes $(a_{1},a_{2},\ldots,a_{n})\in\mathcal{A}^{n}$ conditioned on the inputs $(x_{1},x_{2},\ldots,x_{n})\in\mathcal{X}^{n}$ can be viewed as a probability vector $\vec{\mu}\in\mathds{R}^{d}$, where $d=\left(\abs{\mathcal{A}}\abs{\mathcal{X}}\right)^{n}$. 

The extremal points of the no-signalling polytope comprise of the local deterministic (LD) behaviours and the non-local extremal behaviours. The LD behaviours consist of all possible assignments $\Lambda_{\mathrm{LD}} = \{\{\lambda_{1x_{1}}\}_{x_{1}\in\mathcal{X}};\{\lambda_{2x_{2}}\}_{x_{2}\in\mathcal{X}};\ldots;\{\lambda_{nx_{n}}\}_{x_{n}\in\mathcal{X}}\}$, where $\lambda_{ix_{i}}\in\mathcal{A}$ for $i\in[n]$. The number of such assignments is $\#_{\mathrm{LD}}=(\abs{\mathcal{A}})^{n\abs{\mathcal{X}}}$. Corresponding to each assignment $\lambda\in\Lambda_{\mathrm{LD}}$ the LD probabilities are expressed as
\begin{equation}\label{e:def_LD}
\mu_{\mathrm{LD}, k}(a_{1}a_{2}\ldots a_{n}\lvert x_{1}x_{2}\ldots x_{n}) = [\![a_{1}=\lambda_{1x_{1}}]\!][\![a_{12}=\lambda_{2x_{2}}]\!]\cdots[\![a_{n}=\lambda_{nx_{n}}]\!]
\end{equation}
where $[\![\cdot]\!]$ is the function that evaluates to $1$ if the condition within holds, $0$ otherwise. A behaviour $\vec{\mu}_{\mathcal{L}}$ is local if it can be expressed as $\vec{\mu}_{\mathcal{L}}=\sum_{k=1}^{\#_{\mathrm{LD}}}q_{k}\vec{\mu}_{\mathrm{LD},k}$, where $q_{k}\ge 0$ and $\sum_{k=1}^{\#_{\mathrm{LD}}}q_{k}=1$.\\

\emph{$(2,2,3)$ Bell scenario:}
This scenario is an instance of the more general $(2,2,k)$ scenario, also known in the literature as the CGLMP scenario~\cite{PhysRevLett.88.040404}, for $k=3$. In this bipartite scenario the parties have two $3$-output choices of settings. The extremal behaviours for the no-signalling polytope for the CGLMP scenario have been fully described in~\cite{PhysRevA.71.022101}. The non-local no-signalling extremal behaviours for the $(2,2,3)$ scenario, up to relabelling of inputs and outcomes, are given by the following formula:
\begin{equation}\label{NLbox_223}
\mathsf{NL}_{\mathrm{ext}}(ab\lvert xy) \coloneqq \begin{cases}
\frac{1}{3} &:\quad b-a\equiv xy\Mod{3} \\
0 &: \quad \text{otherwise}
\end{cases} 
\end{equation}
where $a,b\in\{0,1,2\}$ and $x,y\in\{0,1\}$ are the outputs and inputs for the parties, respectively. We found that~\eqref{e:equalMix_membership} does not necessarily hold for all equal mixtures of a pair of distinct non-local extremal behaviours. Among the several examples we found that violate~\eqref{e:equalMix_membership}, Table~\ref{table:NL_extr_CGLMP} shows one such example. 
\begin{table}[H]
\centering
{\renewcommand{\arraystretch}{1.15}%
\begin{tabular}{c|c c c c c c c c c|}
\multicolumn{1}{r}{} & \multicolumn{1}{c}{$ab$} & \multicolumn{1}{c}{$ab^{\prime}$} & \multicolumn{1}{c}{$ab^{\prime\prime}$} & \multicolumn{1}{c}{$a^{\prime}b$} & \multicolumn{1}{c}{$a^{\prime}b^{\prime}$} & \multicolumn{1}{c}{$a^{\prime}b^{\prime\prime}$} & \multicolumn{1}{c}{$a^{\prime\prime}b$} & \multicolumn{1}{c}{$a^{\prime\prime}b^{\prime}$} & \multicolumn{1}{c}{$a^{\prime\prime}b^{\prime\prime}$} \\
\cline{2-10}
$xy$ & $1/3$ & & & & $1/3$ & & & & $1/3$ \\
$xy^{\prime}$ & $1/3$ & & & & $1/3$ & & & & $1/3$ \\
$x^{\prime}y$ & $1/3$ & & & & $1/3$ & & & & $1/3$ \\
$x^{\prime}y^{\prime}$ & & $1/3$ & & & & $1/3$ & $1/3$ & & \\
\cline{2-10}
\end{tabular}}
{\renewcommand{\arraystretch}{1.15}%
\begin{tabular}{c|c c c c c c c c c|}
\multicolumn{1}{r}{} & \multicolumn{1}{c}{$ab$} & \multicolumn{1}{c}{$ab^{\prime}$} & \multicolumn{1}{c}{$ab^{\prime\prime}$} & \multicolumn{1}{c}{$a^{\prime}b$} & \multicolumn{1}{c}{$a^{\prime}b^{\prime}$} & \multicolumn{1}{c}{$a^{\prime}b^{\prime\prime}$} & \multicolumn{1}{c}{$a^{\prime\prime}b$} & \multicolumn{1}{c}{$a^{\prime\prime}b^{\prime}$} & \multicolumn{1}{c}{$a^{\prime\prime}b^{\prime\prime}$} \\
\cline{2-10}
$xy$ & & $1/3$ & & & & $1/3$ & $1/3$ & & \\
$xy^{\prime}$ & $1/3$ & & & & $1/3$ & & & & $1/3$ \\
$x^{\prime}y$ & $1/3$ & & & & $1/3$ & & & & $1/3$ \\
$x^{\prime}y^{\prime}$ & $1/3$ & & & & $1/3$ & & & & $1/3$ \\
\cline{2-10}
\end{tabular}}
\caption{Two non-local extremal behaviours for the CGLMP scenario with $3$ outcomes whose equal mixtures is non-local. The inputs $x,y\in\{0,1\}$ and the outcomes $a,b\in\{0,1,2\}$ with $x^{\prime}=x\oplus 1,\,y^{\prime}=y\oplus 1$ and $a^{\prime}=a\oplus_{3}1,\,a^{\prime\prime}=a\oplus_{3}2, b^{\prime}=b\oplus_{3}1,\,b^{\prime\prime}=b\oplus_{3}2$. The symbol $\oplus$ denotes addition modulo $2$ and $\oplus_{3}$ denotes addition modulo $3$. The missing entries correspond to $0$. The top behaviour comes directly from \eqref{NLbox_223} while the bottom behaviour is obtained through the relabelling $x\leftrightarrow x'$ and $y \leftrightarrow y'$. An equal mixture of these two boxes lies outside the local polytope.}
\label{table:NL_extr_CGLMP}
\end{table}

\emph{$(2,3,2)$ Bell scenario}:
More generally, the extremal behaviours of $(2,k,2)$ no-signalling polytope, with $k>2$, have been completely characterised in~\cite{PhysRevLett.95.140401} and~\cite{PhysRevA.72.052312}, of which the $(2,3,2)$ is an instance. Following Table II of ~\cite{PhysRevA.72.052312} we can obtain Tables~\ref{t:Bell232_NL1} and~\ref{t:Bell_232_NL2} which are two representative examples of non-local no-signalling extremal behaviours, equal mixtures of which lie outside the local polytope. In Table~\ref{t:Bell232_NL1} all input choices, $x,y\in\{0,1,2\}$, for Alice and Bob have uniform probabilities of outcomes; in Table~\ref{t:Bell_232_NL2} all inputs for Alice and inputs $y\in\{0,1\}$ for Bob have uniform probabilities of outcomes, with the exception that Bob's outcome for $y=2$ is deterministic.

\begin{align*}
    {\renewcommand{\arraystretch}{1.15}%
    \begin{tabular}{|c c|}
    \hline
    $p(00\lvert xy)$ & $p(01\lvert xy)$ \\
    $p(10\lvert xy)$ & $p(11\lvert xy)$ \\
    \hline
    \end{tabular}}
    \qquad { \text{with} } \qquad
    {\renewcommand{\arraystretch}{1.15}%
    \begin{tabular}{|c c|}
    \hline
    ? &  \\
      &  \\
    \hline
    \end{tabular}} \equiv {\renewcommand{\arraystretch}{1.15}%
    \begin{tabular}{|c c|}
    \hline
    $1/2$ & $0$ \\
    $0$ & $1/2$ \\
    \hline
    \end{tabular}} \text{ or } {\renewcommand{\arraystretch}{1.15}%
    \begin{tabular}{|c c|}
    \hline
    $0$ & $1/2$ \\
    $1/2$ & $0$ \\
    \hline
    \end{tabular}}
\end{align*}
\begin{table}[H]
\begin{minipage}{0.45\linewidth}
\centering
{\renewcommand{\arraystretch}{1.15}%
\begin{tabular}{r|r|c c|c c|c c|}
\multicolumn{1}{r}{} & \multicolumn{1}{r}{} & \multicolumn{6}{c}{$y$} \\
\cline{3-8}
\multicolumn{1}{r}{} & \multicolumn{1}{r|}{} & \multicolumn{2}{c|}{$0$} & \multicolumn{2}{c|}{$1$} & \multicolumn{2}{c|}{$2$} \\
\cline{2-8}
\multirow{6}{*}{$x$} & \multirow{2}{*}{$0$} & $1/2$ & $0$ & $1/2$ & $0$ & $1/2$ & $0$ \\
 &  & $0$ & $1/2$ & $0$ & $1/2$ & $0$ & $1/2$ \\
\cline{2-8}
 & \multirow{2}{*}{$1$} & $1/2$ & $0$ & $0$ & $1/2$ & ? & \\
 & & $0$ & $1/2$ & $1/2$ & $0$ & & \\
 \cline{2-8}
 & \multirow{2}{*}{$2$} & $1/2$ & $0$ & ? &  & ? & \\
 &  & $0$ & $1/2$ &  &  &  & \\
 \cline{2-8}
\end{tabular}}
\caption{Non-local no-signalling extremal behaviour with all input choices $x,y\in\{0,1,2\}$ for Alice and Bob having uniform probabilities of outcomes.}
\label{t:Bell232_NL1}
\end{minipage}
\hfill%
\begin{minipage}{0.45\linewidth}
\centering
{\renewcommand{\arraystretch}{1.15}%
\begin{tabular}{r|r|c c|c c|c c|}
\multicolumn{1}{r}{} & \multicolumn{1}{r}{} & \multicolumn{6}{c}{$y$} \\
\cline{3-8}
\multicolumn{1}{r}{} & \multicolumn{1}{r|}{} & \multicolumn{2}{c|}{$0$} & \multicolumn{2}{c|}{$1$} & \multicolumn{2}{c|}{$2$} \\
\cline{2-8}
\multirow{6}{*}{$x$} & \multirow{2}{*}{$0$} & $1/2$ & $0$ & $1/2$ & $0$ & $1/2$ & $0$ \\
 &  & $0$ & $1/2$ & $0$ & $1/2$ & $1/2$ & $0$ \\
\cline{2-8}
 & \multirow{2}{*}{$1$} & $1/2$ & $0$ & $0$ & $1/2$ & $1/2$ & $0$ \\
 & & $0$ & $1/2$ & $1/2$ & $0$ & $1/2$ & $0$ \\
 \cline{2-8}
 & \multirow{2}{*}{$2$} & $1/2$ & $0$ & ? &  & $1/2$ & $0$ \\
 &  & $0$ & $1/2$ &  &  & $1/2$ & $0$ \\
 \cline{2-8}
\end{tabular}}
\caption{All inputs for Alice and inputs $y\in\{0,1\}$ for Bob have uniform probabilities of outcomes, while Bob's outcome for $y=2$ is deterministic for this non-local no-signalling extremal behaviour.}
\label{t:Bell_232_NL2}
\end{minipage}
\end{table}
There are 16 possible mixtures of the two behaviours in Tables~\ref{t:Bell232_NL1} and~\ref{t:Bell_232_NL2} corresponding to each `?' in each table being a perfect correlation or a perfect anti-correlation, all of which represent mixtures of extremal non-local boxes \cite{PhysRevA.72.052312} and all lie outside the local polytope. The non-locality of the mixtures is confirmed by noting that the four cells in the upper left corner, corresponding to restricting the settings choices to $x,y\in\{0,1\}$, is the PR box distribution which is of course non-local.\\

\emph{$(3,2,2)$ Bell scenario:}
This is a tripartite scenario with each party having binary input choices and outcomes. The no-signalling polytope consists of 46 inequivalent classes of extremal behaviours, of which one is the class comprising of 64 LD behaviours. A complete characterisation can be found in~\cite{Pironio_2011}. As an example violating \eqref{e:equalMix_membership} we can refer to the observation made in Section 2.3 of~\cite{Pironio_2011} that equal mixtures of two behaviours in Class 46 (see Table 1 of \cite{Pironio_2011}) is a GHZ correlation which is expressed (entirely in terms of correlators ]$\langle A_{x}B_{y}C_{z}\rangle$) as $P_{\mathrm{GHZ}}(abc\lvert xyz)=\frac{1}{8}(a+abc\langle A_{x}B_{y}C_{z}\rangle)$. $\vec{P}_{\mathrm{GHZ}}$ is a non-local behaviour which is obtained by measuring $\frac{1}{\sqrt{2}}(\ket{000}+\ket{111})$ in suitable local bases \cite{https://doi.org/10.48550/arxiv.0712.0921}.

\section{Conclusion}
In this work, we revisited the probability estimation framework with the goal of presenting a complete and self-contained proof of its optimality in the asymptotic regime and obtaining a better characterisation of optimal adversarial attack strategies on the protocol. We obtained in Theorem~\ref{thm_hmin_achieved} an improved and tight upper bound on the cardinality of the set of states needed in the optimal attack, and studied the implications of this result for specific scenarios in Section~\ref{s:application}. We also considered the question of \textit{robustness} for the PEF method, finding that asymptotic optimality of PEFs (in terms of randomness generation rate) need not entail a trade-off with robustness to small deviations from expected experimental behaviour.

In proving the optimality of the framework, our results show that there remains nothing to be gained, asymptotically, for an adversary implementing memory attacks---an i.i.d. attack is asymptotically optimal. However, in real world applications this may not hold. The number of trials in a Bell experiment are finite, albeit large, and there are unavoidable correlations between the successive trials (referred to as memory effects). We leave to future work considerations of side-channel attacks in the non-asymptotic (finite trials) regime for the probability estimation framework.

\section*{Acknowledgements}

We acknowledge helpful discussions with Jitendra Prakash and Mark Wilde. This work was partially supported by AFOSR Grant FA9550-20-1-0067, NSF Award 1839223, and Louisiana Board of Regents Award LEQSF (2019-22)-RD-A-27.

\appendix

\section{Proofs for Theorems~\ref{PEF_firstthm} and~\ref{PEF_thm2}}\label{a:PEFproof_lbscme}
First, we present the proof for Theorem~\ref{PEF_firstthm}.
\begin{theorem*}
Suppose $\mu\colon\mathcal{C}^{n}\times\mathcal{Z}^{n}\times\mathcal{E}\to[0,1]$ is a distribution of $\mathbf{CZ}E$ such that $\mu_{e}(\mathbf{CZ})\in\Theta$ for each $e\in\mathcal{E}$. Then for fixed $\beta, \epsilon > 0$
\begin{equation}\label{eq_PEF_def_1}
\mathbb{P}_{\mu_{e}}\left(\mu_{e}(\mathbf{C}\lvert\mathbf{Z})\ge \left(\epsilon\prod_{i=1}^{n}F_{i}(C_{i}Z_{i})\right)^{-1/\beta}\right)\le \epsilon
\end{equation}
holds for each $e\in\mathcal{E}$, where $F_{i}(C_{i}Z_{i})$ is the probability estimation factor for the $i$'th trial.
\end{theorem*}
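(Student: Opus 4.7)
The proof has two stages. First, I would rewrite the event inside the probability in a form amenable to Markov's inequality: raising to the $\beta$-th power, the event $\{\mu_e(\mathbf{C}|\mathbf{Z})\ge(\epsilon\prod_i F_i)^{-1/\beta}\}$ becomes $\{\mu_e(\mathbf{C}|\mathbf{Z})^\beta \prod_{i=1}^n F_i(C_i Z_i)\ge 1/\epsilon\}$. Applying Markov's inequality to the non-negative random variable $T_n\coloneqq\mu_e(\mathbf{C}|\mathbf{Z})^\beta\prod_{i=1}^n F_i(C_iZ_i)$ reduces the problem to proving the single bound
\begin{equation*}
\mathbb{E}_{\mu_e}[T_n]\;\le\;1.
\end{equation*}

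The second and central stage is to establish this inequality via induction on the trial index. The plan is to first use the chain rule to write $\mu_e(\mathbf{C}|\mathbf{Z})=\prod_{i=1}^n \mu_e(C_i|\mathbf{C}_{\le i-1},\mathbf{Z})$, and then invoke the second assumption in~\eqref{e:Expt_model_assumptions}---that the future settings are independent of past outcomes and settings conditional on $E=e$---to show that $\mu_e(C_i|\mathbf{C}_{\le i-1},\mathbf{Z})=\mu_e(C_i|\mathbf{C}_{\le i-1},\mathbf{Z}_{\le i})$. This ``no-signalling-in-time'' step is the main technical obstacle: it requires carefully iterating the independence condition to conclude that the future settings $\mathbf{Z}_{>i}$ carry no information about $\mathbf{C}_{\le i}$ given $(\mathbf{Z}_{\le i},e)$.

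With $T_n$ rewritten as $\prod_{i=1}^n X_i$ where $X_i\coloneqq F_i(C_iZ_i)\,\mu_e(C_i|\mathbf{C}_{\le i-1},\mathbf{Z}_{\le i})^\beta$, I would then compute $\mathbb{E}_{\mu_e}[T_n]$ by conditioning on $(\mathbf{C}_{\le n-1},\mathbf{Z}_{\le n-1})$ and peeling off the last factor using the tower property:
\begin{equation*}
\mathbb{E}_{\mu_e}[T_n]\;=\;\mathbb{E}_{\mu_e}\!\left[\,\prod_{i=1}^{n-1}X_i\cdot\mathbb{E}_{\mu_e}\!\left[X_n\,\middle|\,\mathbf{C}_{\le n-1},\mathbf{Z}_{\le n-1}\right]\right].
\end{equation*}
The inner conditional expectation is taken with respect to the trial distribution $\mu(C_nZ_n|\mathbf{c}_{\le n-1},\mathbf{z}_{\le n-1},e)$, which belongs to $\Pi$ by the first assumption in~\eqref{e:Expt_model_assumptions}. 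Writing out this conditional expectation and using that $\mu_e(C_n|\mathbf{C}_{\le n-1},\mathbf{Z}_{\le n})$ coincides with the $Z_n$-conditional of this trial distribution, the PEF defining inequality in Definition~\ref{PEF_def} applied to $F_n$ directly yields $\mathbb{E}_{\mu_e}[X_n|\mathbf{C}_{\le n-1},\mathbf{Z}_{\le n-1}]\le 1$ almost surely.

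Combining these observations gives $\mathbb{E}_{\mu_e}[T_n]\le\mathbb{E}_{\mu_e}[T_{n-1}]$, and iterating (with base case $T_0=1$) produces $\mathbb{E}_{\mu_e}[T_n]\le 1$. Plugging this back into the Markov bound from the first stage delivers~\eqref{eq_PEF_def_1}. I anticipate the only delicate point is the invocation of the independence assumption to strip future settings from the conditioning sets; everything else is a clean martingale-style peel-off argument driven by the per-trial PEF condition.
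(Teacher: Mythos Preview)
Your proposal is correct and follows essentially the same approach as the paper's proof: both define the same nonnegative random variable $T_n=\mu_e(\mathbf{C}|\mathbf{Z})^\beta\prod_i F_i$ (the paper calls it $Q_n$), use the independence assumption in~\eqref{e:Expt_model_assumptions} to obtain the factorisation $\mu_e(\mathbf{C}_{\le i+1}|\mathbf{Z}_{\le i+1})=\mu_e(C_{i+1}|\mathbf{C}_{\le i}\mathbf{Z}_{\le i+1})\mu_e(\mathbf{C}_{\le i}|\mathbf{Z}_{\le i})$, show the supermartingale bound $\mathbb{E}_{\mu_e}[T_{i+1}|\mathbf{C}_{\le i}\mathbf{Z}_{\le i}]\le T_i$ via the PEF condition applied to the trial distribution in $\Pi$, iterate to $\mathbb{E}_{\mu_e}[T_n]\le 1$, and finish with Markov's inequality. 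The only cosmetic difference is that the paper establishes the single-step factorisation and builds $Q_i$ recursively, whereas you first expand the full chain rule and then peel off one factor at a time; these are equivalent organisations of the same argument.
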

\begin{proof}
The sequence of random variables $\mathbf{C,Z}$ represent the time-ordered sequence of $n$ trial results. For the remainder of the proof we omit conditioning on $E=e$ since the result holds for each realisation. Hence, $\mu(\cdots)$, $\mathbb{P}_{\mu}(\cdots)$ and $\mathbb{E}_{\mu}[\cdots]$ must be understood to mean $\mu_{e}(\cdots)$, $\mathbb{P}_{\mu_{e}}(\cdots)$ and $\mathbb{E}_{\mu_{e}}[\cdots]$.

Observe that for any $i\in \{1, ..., n-1\}$ we have
\begin{equation}
\label{e:PEF_thm1_1}
\mu(\mathbf{C}_{\le i+1}\lvert\mathbf{Z}_{\le i+1}) = \mu(C_{i+1}\lvert\mathbf{C}_{\le i}\mathbf{Z}_{\le i+1})\mu(\mathbf{C}_{\le i}\lvert Z_{i+1}\mathbf{Z}_{\le i})=\mu(C_{i+1}\lvert\mathbf{C}_{\le i}\mathbf{Z}_{\le i+1})\mu(\mathbf{C}_{\le i}\lvert\mathbf{Z}_{\le i})
\end{equation}
where the first equality is an elementary manipulation of conditional probabilities and the second equality follows from  \begin{align*}\mu(\mathbf{C}_{\le i}\lvert Z_{i+1}\mathbf{Z}_{\le i})&=\mu(\mathbf{C}_{\le i} Z_{i+1}\mathbf{Z}_{\le i})/\mu( Z_{i+1}\mathbf{Z}_{\le i})\\
&=\mu(\mathbf{C}_{\le i} \mathbf{Z}_{\le i})\mu(Z_{i+1})/\mu( Z_{i+1})\mu(\mathbf{Z}_{\le i})\\
&=\mu(\mathbf{C}_{\le i}\lvert\mathbf{Z}_{\le i}),
\end{align*}
with the second step above following from from the second condition in  \eqref{e:Expt_model_assumptions}, applied directly in the numerator and in the denominator via 
$$
\mu(z_{j+1}\mathbf{z}_{\le j})=\sum_{\mathbf{c}_{\le j}}\mu(z_{j+1}\mathbf{c}_{\le j}\mathbf{z}_{\le j})=\mu(z_{j+1})\sum_{\mathbf{c}_{\le j}}\mu(\mathbf{c}_{\le j}\mathbf{z}_{\le j})=\mu(z_{j+1})\mu(\mathbf{z}_{\le j}).
$$
Now consider the sequence $Q_{i} = \mu(\mathbf{C}_{\le i}\lvert\mathbf{Z}_{\le i})^{\beta}\prod_{j=1}^{i}F_{j}$, for $i\ge 1$, where we note $Q_i$ is a random variable that is determined by $\mathbf{C}_{\le i},\mathbf{Z}_{\le i}$. 
We begin by showing that conditioned on $\mathbf{C}_{\le i},\mathbf{Z}_{\le i}$ the expectation of $Q_{i+1}$ is at most $Q_{i}$ for all $i \in \{1, ..., n-1\}$. Applying \eqref{e:PEF_thm1_1}, we can write
\begin{align}\label{e:PEF_thm1_7}
Q_{i+1} &= F_{i+1}\mu(C_{i+1}\lvert Z_{i+1}\mathbf{C}_{\le i}\mathbf{Z}_{\le i})^{\beta}\mu(\mathbf{C}_{\le i}\lvert\mathbf{Z}_{\le i})^{\beta}\prod_{j=1}^{i}F_{j} \nonumber  \\
&= F_{i+1}\mu(C_{i+1}\lvert Z_{i+1}\mathbf{C}_{\le i}\mathbf{Z}_{\le i})^{\beta}Q_{i}\nonumber,\\
\Rightarrow \mathbb{E}_{\mu}[Q_{i+1}\lvert\mathbf{C}_{\le i}\mathbf{Z}_{\le i}] &= Q_{i}\mathbb{E}_{\mu}\left[F_{i+1}\mu(C_{i+1}\lvert Z_{i+1}\mathbf{C}_{\le i}\mathbf{Z}_{\le i})^{\beta}\bigg\lvert\mathbf{C}_{\le i}\mathbf{Z}_{\le i}\right]\le Q_{i}
\end{align}
where the fact that $Q_{i}$ is determined by $\mathbf{C}_{\le i},\mathbf{Z}_{\le i}$ allows us to pull it out of the conditional expectation, and the inequality follows from the fact that $\mathbb{E}_{\mu}[F_{i+1}\mu(C_{i+1}\lvert Z_{i+1}\mathbf{c}_{\le i}\mathbf{z}_{\le i})^{\beta}]\le 1$ for all realisations $\mathbf{c}_{\le i},\mathbf{z}_{\le i}$ of $\mathbf{C}_{\le i},\mathbf{Z}_{\le i}$, as ensured by Definition \ref{PEF_def}. We remark that $Q_{i}$ is a super-martingale as indicated by the inequality in \eqref{e:PEF_thm1_7}.\footnote[2]{ The term $F_{i}\mu(C_{i}\lvert Z_{i}\mathbf{Z}_{\le i-1}\mathbf{C}_{\le i-1})^{\beta}$ is non-negative, is determined by $\mathbf{C_{\le i},\mathbf{Z}_{\le i}}$ and satisfies $\mathbb{E}_{\mu}[F_{i}\mu(C_{i}\lvert Z_{i}\mathbf{C}_{\le i-1}\mathbf{Z}_{\le i-1})^{\beta}\lvert \mathbf{C}_{\le i-1}\mathbf{Z}_{\le i-1}]\le 1$. }  Now, using the law of iterated expectation we obtain:
\begin{equation}\label{e:PEF_thm1_8}
\mathbb{E}_{\mu}[Q_{i+1}]=\mathbb{E}_{\mu}\left[\mathbb{E}_{\mu}[Q_{i+1}\lvert\mathbf{C}_{\le i}\mathbf{Z}_{\le i}]\right] \le \mathbb{E}_{\mu}[Q_{i}]
\end{equation}
Since $Q_{1}$ equals $\mu(C_1|Z_1)^\beta F(C_1Z_1)$, it satisfies $\mathbb{E}_{\mu}[Q_{1}]\le 1$ directly from Definition \ref{PEF_def}, and so repeated applications of \eqref{e:PEF_thm1_8} yield $\mathbb{E}_{\mu}[Q_{n}]\le \mathbb{E}_{\mu}[Q_{n-1}]\le\cdots\le \mathbb{E}_{\mu}[Q_{1}]\le 1$. Since $Q_{n}=\mu(\mathbf{C}\lvert\mathbf{Z})^{\beta}\prod_{i=1}^{n}F_{i}$ is non-negative, we can use Markov's inequality and obtain the required result as shown below. 
\begin{align*}
\mathbb{P}_{\mu}\left(\mu(\mathbf{C}\lvert\mathbf{Z})^{\beta}\prod_{i=1}^{n}F_{i}\ge 1/\epsilon\right) \le \epsilon\mathbb{E}_{\mu}\left[\mu(\mathbf{C}\lvert\mathbf{Z})^{\beta}\prod_{i=1}^{n}F_{i}\right]&\le\epsilon \nonumber \\
    \Rightarrow \mathbb{P}_{\mu}\left(\mu(\mathbf{C}\lvert\mathbf{Z}) \ge \left(\epsilon\prod_{i=1}^{n}F_{i}\right)^{-1/\beta}\right) &\le \epsilon.
\end{align*}
\end{proof}

Next, we present the proof for Theorem~\ref{PEF_thm2}.
\begin{theorem*}
Let $\mu$ be a distribution $\mu\colon\mathcal{C}^{n}\times\mathcal{Z}^{n}\times\mathcal{E}\to[0,1]$ of $\mathbf{CZ}E$ such that for each $e\in\mathcal{E}$,
the following holds for every $\epsilon \in (0,1)$:
\begin{equation}\label{eq_PEF_def2_1}
\mathbb{P}_{\mu_{e}}\left(\mu_{e}(\mathbf{C}\lvert\mathbf{Z})\le\left(\epsilon\prod_{i=1}^{n}F_{i}\right)^{-1/\beta}\right)\ge 1-\epsilon,
\end{equation}
where $F_{i}$ is a PEF with power $\beta$ for the $i$'th trial. For a fixed choice of $\epsilon \in (0,1)$ and $p\ge \abs{\mathcal{C}}^{-n}$, define the event $\mathsf{S}\coloneqq \left\{\left(\epsilon\prod_{i=1}^{n}F_{i}\right)^{-1/\beta}\le p\right\}$. Then if $\kappa$ is a positive number for which $\mathbb{P}_{\mu}(\mathsf{S})\ge \kappa$, the following holds:
\begin{equation}\label{eq_lb_min_ent_1}
\mathbb H_{\infty,\mu}^{\mathsf{avg},\epsilon/\kappa}(\mathbf{C}\lvert\mathbf{Z}E;\mathsf{S})\ge \log_{2}(\kappa) - \log_{2}(p)
\end{equation}
\end{theorem*}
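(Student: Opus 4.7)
The plan is to exhibit a sub-distribution $\sigma$ within total-variation distance $\epsilon/\kappa$ of $\mu(\cdot\lvert\mathsf{S})$ such that $\sum_{\mathbf{z},e}\max_{\mathbf{c}}\sigma(\mathbf{c},\mathbf{z},e)\le p/\kappa$; the bound~\eqref{eq_lb_min_ent_1} then follows by taking $-\log_{2}$. The natural candidate is the restriction of $\mu(\cdot\lvert\mathsf{S})$ to the ``good set'' $G=\{(\mathbf{c},\mathbf{z},e):\mu_{e}(\mathbf{c}\lvert\mathbf{z})\le p\}$, namely
\[
\sigma(\mathbf{c},\mathbf{z},e) \;\coloneqq\; \mu(\mathbf{c},\mathbf{z},e\lvert\mathsf{S})\,[\![\mu_{e}(\mathbf{c}\lvert\mathbf{z})\le p]\!].
\]
By construction this delivers the desired per-trial max-probability control (up to the $1/\kappa$ factor from conditioning on $\mathsf{S}$), while the hypothesis~\eqref{eq_PEF_def2_1} -- i.e., the conclusion of Theorem~\ref{PEF_firstthm} -- will ensure the truncation does not cost too much in TV-distance.

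\emph{Step 1 (TV bound).} For each $e\in\mathcal{E}$, the event $\mathsf{S}$ asserts that $(\epsilon\prod_{i}F_{i})^{-1/\beta}\le p$, so $\{\mathsf{S},\,\mu_{e}(\mathbf{C}\lvert\mathbf{Z})>p\}$ is contained in $\{\mu_{e}(\mathbf{C}\lvert\mathbf{Z})>(\epsilon\prod_{i}F_{i})^{-1/\beta}\}$, whose $\mu_{e}$-probability is at most $\epsilon$ by~\eqref{eq_PEF_def2_1}. Averaging over $e$ with weights $\mu(e)$ yields $\mathbb{P}_{\mu}(\mathsf{S},\,\mu_{E}(\mathbf{C}\lvert\mathbf{Z})>p)\le\epsilon$, and dividing by $\mathbb{P}_{\mu}(\mathsf{S})\ge\kappa$ gives $\mathbb{P}_{\mu(\cdot\lvert\mathsf{S})}(G^{c})\le\epsilon/\kappa$. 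Because $\sigma\le\mu(\cdot\lvert\mathsf{S})$ pointwise, the TV distance equals half of the discarded mass, bounded by $\epsilon/(2\kappa)\le\epsilon/\kappa$, so $\sigma\in\mathcal{B}^{\epsilon/\kappa}(\mu(\cdot\lvert\mathsf{S}))$.

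\emph{Step 2 (max-probability bound).} On $G$ the chain rule gives $\mu(\mathbf{c},\mathbf{z},e)=\mu_{e}(\mathbf{c}\lvert\mathbf{z})\mu_{e}(\mathbf{z})\mu(e)\le p\,\mu_{e}(\mathbf{z})\mu(e)$, so dividing by $\mathbb{P}_{\mu}(\mathsf{S})\ge\kappa$ yields $\sigma(\mathbf{c},\mathbf{z},e)\le p\,\mu_{e}(\mathbf{z})\mu(e)/\kappa$ for every $(\mathbf{c},\mathbf{z},e)$. Since this bound is independent of $\mathbf{c}$, the same bound applies to $\max_{\mathbf{c}}\sigma(\mathbf{c},\mathbf{z},e)$, and summing over $\mathbf{z}$ and $e$ produces $\sum_{\mathbf{z},e}\max_{\mathbf{c}}\sigma(\mathbf{c},\mathbf{z},e)\le p/\kappa$. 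Applying $-\log_{2}$ recovers~\eqref{eq_lb_min_ent_1}.

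The main technical care is the bookkeeping around the two conditionings: the hypothesis~\eqref{eq_PEF_def2_1} is stated per realized $e$, whereas the target $\mu(\cdot\lvert\mathsf{S})$ marginalizes over $e$. Once the argument is organized around the intersection $\{\mathsf{S}\}\cap\{\mu_{e}(\mathbf{C}\lvert\mathbf{Z})\le p\}$, and one uses that $\sigma$ need only be a sub-distribution (per Definition~\ref{def:smooth_avg_cond_minentropy}), both estimates drop out of elementary inequalities.
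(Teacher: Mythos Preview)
Your argument follows the same core idea as the paper's proof---truncate $\mu(\cdot\lvert\mathsf S)$ to the ``good'' set $\{\mu_e(\mathbf c\lvert\mathbf z)\le p\}$ and use~\eqref{eq_PEF_def2_1} to control the discarded mass---but there is a genuine gap at the very last step. You assert that ``$\sigma$ need only be a sub-distribution (per Definition~\ref{def:smooth_avg_cond_minentropy})''; this is not what the paper's definition says. Definition~\ref{def:smooth_avg_cond_minentropy} explicitly calls $\mathcal B^{\epsilon}(\mu)$ a set of \emph{distributions}, so your truncated $\sigma$, which has total mass $1-\mathbb P_{\mu(\cdot\lvert\mathsf S)}(G^{c})<1$, is not an admissible witness for the maximum in~\eqref{eq_scme2}.

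The fix is not entirely trivial: one must complete $\sigma$ to a genuine distribution while preserving the pointwise bound $\max_{\mathbf c}\sigma'(\mathbf c\lvert\mathbf z,e)\le p\,\mu_e(\mathbf z)/\mu_e(\mathbf z,\mathsf S)$. The paper does exactly this via Lemma~\ref{lemma2_appendix}, which shows that any sub-probability on $\mathcal C^{n}$ bounded pointwise by some $q\ge\lvert\mathcal C\rvert^{-n}$ can be extended to a full distribution still bounded by $q$. This is precisely where the hypothesis $p\ge\lvert\mathcal C\rvert^{-n}$---which your argument never invokes---enters the proof, and its absence from your write-up is a tell that the completion step is missing. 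Once you carry out this filling (for each $(\mathbf z,e)$ separately, as the paper does), your Steps~1 and~2 go through and the argument is complete; the TV bound is preserved because both the completed distribution and $\mu(\cdot\lvert\mathsf S)$ dominate the sub-distribution (Lemma~\ref{lemma1_appendix}).
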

\begin{proof}
The goal is to construct a distribution $\omega$ of $\mathbf{CZ}E$
such that it is within $\epsilon/\kappa$ TV-distance from $\mu(\mathbf{CZ}E\lvert\mathsf S)$, and such that the average conditional maximum probability of $\mathbf{C}$ conditioned on (and averaged over) $\mathbf{Z}E$ is bounded below by $p/\kappa$. We will construct $\omega$ to satisfy $\omega(\mathbf{Cz}e|\mathsf S)=0$ for all values of $\mathbf{z}$ and $e$ for which $\mu(\mathbf{z}e|\mathsf S)=0$. Hence for the rest of the construction, we will restrict attention to cases where $\mu(\mathbf{z}e|\mathsf S)>0$. 
We will use expressions such as $\mathbb{P}_{\mu_{e}}(\mathsf S)$ and $\mu_{e}(\mathsf S)$ interchangeably.

We start by defining the event $$
\mathsf R\coloneqq\left\{\mu_{e}(\mathbf{C}\lvert\mathbf{Z})\le\left(\epsilon\prod_{i=1}^{n}F_{i}\right)^{-1/\beta}\right\},
$$
whose occurrence or non-occurrence is determined by the particular realisation of $e$, $\mathbf{c}$, and $\mathbf{z}$. The event $\mathsf R$ corresponds to the desired probability bound holding; \eqref{eq_PEF_def2_1} ensures that this event occurs with high probability, and we will construct our distribution $\omega$ to, in an intuitive sense, extend this desirable behaviour from $\mathsf R\cap \mathsf S$ to all of $\mathsf S$.

We begin the construction by defining, for each fixed $e$ satisfying $\mu_e(\mathsf S)>0$, a non-negative function $f\colon\mathcal{C}^{n}\times\mathcal{Z}^{n}\to\mathbb{R}^{+}$ as shown below.
\begin{align}\label{e:PEF_thm2_1}
    f(\mathbf{cz}) = \begin{cases}
    \mu_{e}(\mathbf{cz})/\mathbb{P}_{\mu_{e}}(\mathsf S), & \mathsf{S}\cap\mathsf{R}\text{ holds}\\
    0, & \text{otherwise}
    \end{cases}
\end{align}
The weight $w$ of $f$, defined as $w(f)=\sum_{\mathbf{c,z}}f(\mathbf{cz})$, satisfies $w(f)\le 1$ as shown below:
\begin{equation*}
w(f)=\sum_{\mathbf{c,z}}f(\mathbf{cz})=\sum_{\mathbf{c,z}}\mu_{e}(\mathbf{cz})[\![\mathsf{S}\cap\mathsf{R}]\!]/\mathbb{P}_{\mu_{e}}(\mathsf S)\le\sum_{\mathbf{c,z}}\mu_{e}(\mathbf{cz})[\![\mathsf S]\!]/\mathbb{P}_{\mu_{e}}(\mathsf S)=\mathbb{P}_{\mu_{e}}(\mathsf S)/\mathbb{P}_{\mu_{e}}(\mathsf S)=1,
\end{equation*}
where $[\![\cdot]\!]$ is equal to $1$, if the condition or expression within holds, $0$ otherwise. (Note that $f$ is a \textit{sub-probability distribution} on $\mathbf{cz}$: a set of non-negative numbers whose sum is less than or equal to 1. Defining a sub-probability distribution is a standard trick to construct a distribution by invoking certain lemmas.) 

Below we show that $w$ satisfies $w(f)\ge 1-\epsilon/\mathbb{P}_{\mu_{e}}(\mathsf S)$.
\begin{align}\label{e:lwr_bnd_wf}
    w(f) &= 1-1+\sum_{\mathbf{c,z}}\mu_{e}(\mathbf{cz})[\![\mathsf S]\!][\![\mathsf R]\!]/\mathbb{P}_{\mu_{e}}(\mathsf S) = 1 - \sum_{\mathbf{c,z}}\mu_{e}(\mathbf{cz})[\![\mathsf S]\!]/\mathbb{P}_{\mu_{e}}(\mathsf S) + \sum_{\mathbf{c,z}}\mu_{e}(\mathbf{cz})[\![\mathsf S]\!][\![\mathsf R]\!]/\mathbb{P}_{\mu_{e}}(\mathsf S) \nonumber \\
    &= 1 - \sum_{\mathbf{c,z}}\left(\mu_{e}(\mathbf{cz}) - \mu_{e}(\mathbf{cz})[\![\mathsf R]\!]\right)[\![\mathsf S]\!]/\mathbb{P}_{\mu_{e}}(\mathsf S)\ge 1 - \sum_{\mathbf{c,z}}\left(\mu_{e}(\mathbf{cz})-\mu_{e}(\mathbf{cz})[\![\mathsf R]\!]\right)/\mathbb{P}_{\mu_{e}}(\mathsf S) \nonumber \\ 
    &= 1-\left(1 - \mathbb{P}_{\mu_{e}}(\mathsf R)\right)/\mathbb{P}_{\mu_{e}}(\mathsf S)\ge 1-\epsilon/\mathbb{P}_{\mu_{e}}(\mathsf S)
\end{align}
where in \eqref{e:lwr_bnd_wf} we have used the fact that $\mathbb{P}_{\mu_{e}}(\mathsf R)\ge 1-\epsilon$ holds for each $e\in\mathcal{E}$, as follows from \eqref{eq_PEF_def2_1}. Next, we define a non-negative function $\tilde{f}_{\mathbf{z}}\colon\mathcal{C}^{n}\to\mathbb{R}^{+}$ for each $\mathbf{z}\in\mathcal{Z}^{n}$ for which $\mu_e(\mathbf{z}\lvert \mathsf S)>0$:
\begin{equation}\label{e:PEF_thm2_2}
\tilde{f}_{\mathbf{z}}(\mathbf{c}) =  f(\mathbf{cz})/\mu_{e}(\mathbf{z}\lvert\mathsf S)
\end{equation}

We show below that for each such $\mathbf{z}$, $\tilde{f}_{\mathbf{z}}(\mathbf{c})$ is bounded by $\mu_{e}(\mathbf{c}\lvert\mathbf{z},\mathsf S),\,\forall\mathbf{c}\in\mathcal{C}^{n}$. We have:
\begin{align}\label{e:bnd_ftilde}
    \tilde{f}_{\mathbf{z}}(\mathbf{c}) &= \mu_{e}(\mathbf{cz})[\![\mathsf S]\!][\![\mathsf R]\!]/(\mathbb{P}_{\mu_{e}}(\mathsf S)\mu_{e}(\mathbf{z}\lvert\mathsf S)) 
    = \mu_{e}(\mathbf{cz},\mathsf S)[\![\mathsf R]\!]/\mu_{e}(\mathbf{z}, \mathsf S)\notag\\
    &\le \mu_{e}(\mathbf{cz},\mathsf S)/\mu_{e}(\mathbf{z},\mathsf S)=\mu_{e}(\mathbf{c}\lvert\mathbf{z},\mathsf S), 
\end{align}
where the equality $\mu_{e}(\mathbf{cz})[\![\mathsf S]\!]=\mu_{e}(\mathbf{cz},\mathsf S)$ makes sense because whether or not $\mathsf S$ holds is determined by $\mathbf{cz}$. Since \eqref{e:bnd_ftilde} holds for all $\mathbf{c}$, we conclude $\tilde{f}_{\mathbf{z}}(\mathbf{C})\le\mu_{e}(\mathbf{C}\lvert\mathbf{z},\mathsf S)$. This proves that $\tilde{f}_{\mathbf{z}}(\mathbf{C})$ is dominated by $\mu_{e}(\mathbf{C}\lvert\mathbf{z},\mathsf{S})$. From the definition of $\tilde{f}_{\mathbf{z}}$ we also have another upper bound for all $\mathbf{c}$: 
$$
\tilde{f}_{\mathbf{z}}(\mathbf{c})=\mu_{e}(\mathbf{c}\lvert\mathbf{z})\mu_{e}(\mathbf{z})[\![\mathsf S \cap \mathsf R]\!]/\mu_{e}(\mathbf{z},\mathsf S) \le p\mu_{e}(\mathbf{z})/\mu_{e}(\mathbf{z,\mathsf S}).
$$
Above, we have used the fact that the event $\mathsf S \cap \mathsf R$ implies $\mu_{e}(\mathbf{C}\lvert\mathbf{Z})\le\left(\epsilon\prod_{i=1}^{n}F_{i}\right)^{-1/\beta}\le p$. The bound $p\mu_{e}(\mathbf{z})/\mu_{e}(\mathbf{z},\mathsf S)\ge p\ge\abs{\mathcal{C}}^{-n}$ also holds, since $\mu_{e}(\mathbf{z})/\mu_{e}(\mathbf{z},\mathsf S)\ge 1$. Hence, using the lemmas in Section \ref{s:Lemmas} we can construct, for each $\mathbf z$ under consideration, a distribution $\mu^{\prime}_{\mathbf{z}}(\mathbf{C})$ such that $\mu^{\prime}_{\mathbf{z}}(\mathbf{C})\ge \tilde{f}_{\mathbf{z}}(\mathbf{C})$, $\mu^{\prime}_{\mathbf{z}}(\mathbf{C})\le p\mu_{e}(\mathbf{z})/\mu_{e}(\mathbf{z},\mathsf S)$ and $\mathrm{TV}(\mu^{\prime}_{\mathbf{z}}(\mathbf{C}),\mu_{e}(\mathbf{C}\lvert\mathbf{z},\mathsf S))\le 1-w(\tilde{f}_{\mathbf{z}})$, where $w(\tilde{f}_{\mathbf{z}})\le 1$ is the weight of $\tilde{f}_{\mathbf{z}}(\mathbf{C})$. Now we are ready to define the distribution $\omega(\mathbf{CZ}E)$ as 
\begin{equation*}
\omega(\mathbf{cz}e)=\begin{cases}\mu^{\prime}_{\mathbf{z}}(\mathbf{c})\mu_{e}(\mathbf{z}\lvert\mathsf S)\mu(e\lvert\mathsf S) & \text{ if }\mu(\mathbf{z}e|\mathsf S) >0\\
0 &\text{ if }\mu(\mathbf{z}e|\mathsf S) =0
\end{cases}
\end{equation*}
We show that the total variation distance between $\omega$ and $\mu(\mathbf{CZ}E\lvert\mathsf S)$ is bounded by $\epsilon/\kappa$ and that the average $\mathbf{ze}$-conditional maximum probability of $\mathbf C$ is bounded by $p/\kappa$. First,
\begin{align}
    &d_\mathrm{TV}(\omega(\mathbf{CZ}E),\mu(\mathbf{CZ}E\lvert\mathsf S))= \frac{1}{2}\sum_{\mathbf{cz}e}\abs{\omega(\mathbf{cz}e)-\mu(\mathbf{cz}e\lvert\mathsf S)}\nonumber \\
    &= \frac{1}{2}\sum_{e:\mu(e\lvert \mathsf S)>0}\,\sum_{\mathbf{z}:\mu_e(\mathbf{z}\lvert \mathsf S)>0}\sum_{\mathbf{c}}\abs{\mu_{\mathbf{z}}^{\prime}(\mathbf{c})-\mu(\mathbf{c}\lvert\mathbf{z}e,\mathsf S)}\mu_{e}(\mathbf{z}\lvert\mathsf S)\mu(e\lvert \mathsf S)\label{e:tvdist.5} \\
    &\le \frac{1}{2}\sum_{e:\mu(e\lvert \mathsf S)>0}\,\sum_{\mathbf{z}:\mu_e(\mathbf{z}\lvert \mathsf S)>0}\sum_{\mathbf{c}}\left(\mu_{\mathbf{z}}^{\prime}(\mathbf{c})-\tilde{f}(\mathbf{c}) + \mu(\mathbf{c}\lvert\mathbf{z}e,\mathsf S)-\tilde{f}(\mathbf{c})\right)\mu_{e}(\mathbf{z}\lvert \mathsf S)\mu(e\lvert \mathsf S)\label{e:TVdist_1} \\
    &= \frac{1}{2}\sum_{e:\mu(e\lvert \mathsf S)>0}\,\sum_{\mathbf{z}:\mu_e(\mathbf{z}\lvert \mathsf S)>0}\mu_{e}(\mathbf{z}\lvert \mathsf S)\mu(e\lvert \mathsf S)\left(1-\sum_{\mathbf{c}}\tilde{f}(\mathbf{c}) + 1 -\sum_{\mathbf{c}}\tilde{f}(\mathbf{c}) \right)\label{e:TVdist_1.5} \\
    &=\frac{1}{2}\sum_{e:\mu(e\lvert \mathsf S)>0}2\Big(1-\sum_{\mathbf{cz}}f(\mathbf{cz})\Big)\mu(e\lvert\mathsf S)\label{e:Tvdist_2}\\
    &=\sum_{e:\mu(e\lvert \mathsf S)>0}(1-w(f))\mu(e\lvert\mathsf S) \le \sum_{e:\mu(e\lvert \mathsf S)>0}\frac{\epsilon}{\mathbb{P}_{\mu_{e}}(\mathsf S)}\mu(e\lvert\mathsf S) =\sum_{e:\mu(e\lvert \mathsf S)>0}\frac{\epsilon\mu(e)}{\mu(\mathsf S)}\le \epsilon/\mathbb{P}_{\mu}(\mathsf S)\le\epsilon/\kappa \label{e:TVdist_3}
\end{align}
The equality in \eqref{e:tvdist.5} follows because $\omega(\mathbf{cz}e)=\mu(\mathbf{cz}e|\mathsf S)=0$ for values of $e, \mathbf z$ removed from the sums, and $\mu'_\mathbf c (\mathbf c)$ is defined for the remaining values of $e, \mathbf z$. In \eqref{e:TVdist_1} we add and subtract with $\tilde{f}(\mathbf{c})$ inside the absolute value expression in the previous step and use the triangle inequality, following which we use the facts established above that both $\mu_{\mathbf{z}}^{\prime}(\mathbf{C})$ and $\mu(\mathbf{C}\lvert\mathbf{z}e,\mathsf S)=\mu_e(\mathbf{C}\lvert\mathbf{z},\mathsf S)$ dominate $\tilde{f}_{\mathbf{z}}(\mathbf{C})$. \eqref{e:TVdist_1.5} follows from the fact that $\mu_{\mathbf{z}}^{\prime}(\mathbf{c})\mu_{e}(\mathbf{z}\lvert\mathsf A)$ and $\mu(\mathbf{c}\lvert\mathbf{z}e,\mathsf S)$ sum to $1$ over $\mathbf{c}$ (being distributions), and \eqref{e:Tvdist_2} follows from $\tilde{f}_{\mathbf{z}}(\mathbf{c})=f(\mathbf{cz})/\mu_{e}(\mathbf{z}\lvert\mathsf S)$ and the fact that $f(\mathbf{cz})=0$ in cases where $\mu_e(\mathbf{z}\lvert\mathsf{S})=0$. Finally, the first inequality in \eqref{e:TVdist_3} follows from \eqref{e:lwr_bnd_wf} and the last inequality follows from $\mathbb{P}_{\mu}(\mathsf S)\ge\kappa$. Next, we show the upper bound on the average conditional maximum probability.
\begin{align}
\sum_{\mathbf{z}e}\left(\max_{\mathbf{c}}\omega(\mathbf{c}\lvert\mathbf{z}e)\right)\omega(\mathbf{z}e) &= \sum_{\mathbf{z}e}\left(\max_{\mathbf{c}}\mu_{\mathbf{z}}^{\prime}(\mathbf{c})\right)\mu_{e}(\mathbf{z}\lvert\mathsf S)\mu(e\lvert \mathsf S)\nonumber\\
    &\le \sum_{\mathbf{z}e}\frac{\mu_{e}(\mathbf{z})p}{\mu_{e}(\mathbf{z},\mathsf S)}\mu_{e}(\mathbf{z}\lvert\mathsf S)\mu(e\lvert \mathsf S) =  \sum_{\mathbf{z}e}\frac{\mu_{e}(\mathbf{z})p}{\mu_{e}(\mathsf S)}\mu(e\lvert\mathsf S)\nonumber \\
    &=\sum_{e}\frac{p}{\mathbb{P}_{\mu_{e}}(\mathsf S)}\mu(e\lvert\mathsf S) = \frac{p}{\mathbb{P}_{\mu}(\mathsf S)}\le\frac{p}{\kappa}\label{e:maxprob_bnd}\\
\Rightarrow -\log_{2}\left[\sum_{\mathbf{z}e}\left(\max_{\mathbf{c}}\omega(\mathbf{c}\lvert\mathbf{z}e)\right)\omega(\mathbf{z}e)\right] &\ge \log_{2}(\kappa) - \log_{2}(p) \label{e:neglog_maxprob_bnd}
\end{align}

Hence, we have obtained an upper bound on the average conditional maximum probability in \eqref{e:maxprob_bnd}. Since by definition the $\epsilon/\kappa$-smooth average conditional min-entropy involves a maximum (over the set $\mathcal{B}^{\epsilon/\kappa}(\mu)$) of the quantity on the left hand side of \eqref{e:neglog_maxprob_bnd}, the final result follows.
\end{proof}

\section{Proofs using Convex Geometry}\label{a:convex}
Here we prove Theorems~\ref{thm_hmin_achieved} and~\ref{t:est} using arguments from convex geometry.

\begin{theorem*} Suppose $\Pi$ is closed and equal to the convex hull of its extreme points. Then there is a distribution $\omega(CZE)\in\Sigma_{E}$ with $\abs{\mathcal{E}}=1+\dim \Pi$ such that 
$H_{\mu}(C\lvert ZE)=h_{\min}(\rho(CZ))$.
\end{theorem*}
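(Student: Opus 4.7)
The plan is to recast $h_{\min}(\rho)$ as a geometric optimization over a convex hull in a finite-dimensional space, obtain attainment of the infimum from compactness, and derive the cardinality bound via a Carathéodory-type argument refined by a boundary observation.

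Let $d = \dim \Pi$. Since $\Pi$ is closed inside the probability simplex on $\mathcal{C}\times\mathcal{Z}$ and hence compact, its affine hull can be identified with $\mathbb{R}^{d}$. Define the continuous map $\phi\colon\Pi\to\mathbb{R}^{d+1}$ by $\phi(\sigma)=(\sigma,\mathbb{H}_{\sigma}(C\lvert Z))$, using the standard continuity of conditional Shannon entropy on the simplex via the convention $0\log 0 = 0$. The image $A:=\phi(\Pi)$ is compact. The central observation is that any $\omega(CZE)\in\Sigma_{\mathrm{E}}^{\rho}$ with weights $p_{e}=\omega(e)$ and conditionals $\omega_{e}=\omega(CZ\lvert e)\in\Pi$ yields a convex decomposition
\[
(\rho,\mathbb{H}_{\omega}(C\lvert ZE)) = \sum_{e}p_{e}\,\phi(\omega_{e})\in\mathrm{conv}(A),
\]
using the linearity $\sum_{e}p_{e}\omega_{e}=\rho$ and the identity $\mathbb{H}_{\omega}(C\lvert ZE)=\sum_{e}p_{e}\mathbb{H}_{\omega_{e}}(C\lvert Z)$; conversely, any convex representation $\sum_{i}p_{i}\phi(\sigma_{i})=(\rho,h)$ produces a member of $\Sigma_{\mathrm{E}}^{\rho}$ achieving $\mathbb{H}_{\omega}(C\lvert ZE)=h$ with $\lvert\mathcal{E}\rvert$ equal to the number of atoms. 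Hence $h_{\min}(\rho)=\inf\{h\in\mathbb{R}:(\rho,h)\in\mathrm{conv}(A)\}$.

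Since $A\subset\mathbb{R}^{d+1}$ is compact, so is $\mathrm{conv}(A)$ (convex hull of a compact set in finite dimensions), and the vertical slice at $\rho$ is closed; the infimum is therefore attained at some $q:=(\rho,h_{\min}(\rho))\in\mathrm{conv}(A)$. By Carathéodory's theorem applied in $\mathbb{R}^{d+1}$, one can write $q=\sum_{i=1}^{d+2}\lambda_{i}\phi(\sigma_{i})$ with $\lambda_{i}\ge 0$ summing to $1$. The needed refinement is that $q$ lies on the boundary of $\mathrm{conv}(A)$: by definition of $h_{\min}(\rho)$, no point of $\mathrm{conv}(A)$ has first coordinate $\rho$ and second coordinate strictly less than $h_{\min}(\rho)$. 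If every $\lambda_{i}$ were strictly positive and the $\phi(\sigma_{i})$ were affinely independent, then $q$ would lie in the relative interior of a full-dimensional simplex inside $\mathrm{conv}(A)$, contradicting boundary membership. Hence either some $\lambda_{i}=0$ or the $\phi(\sigma_{i})$ span an affine subspace of dimension at most $d$, and in either case a second application of Carathéodory reduces the representation to at most $d+1$ atoms. Reassembling these atoms as $(\omega_{e},p_{e})$ produces the desired $\omega\in\Sigma_{\mathrm{E}}^{\rho}$ with $\lvert\mathcal{E}\rvert\le 1+\dim\Pi$; zero-weight atoms can be appended to attain equality.

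The main obstacle is this last tightening of Carathéodory from the naive $\dim\Pi+2$ down to $\dim\Pi+1$: it is precisely the improvement over Theorem~43 in~\cite{PhysRevResearch.2.013016} mentioned in the introduction, and it depends essentially on the boundary membership of $q$ in $\mathrm{conv}(A)$ rather than any special structure of $\Pi$ or $\rho$. The remaining ingredients---continuity of $\mathbb{H}_{\sigma}(C\lvert Z)$ on $\Pi$, compactness of the convex hull of a compact set in finite dimensions, and the equivalence between decompositions in $\mathrm{conv}(A)$ and elements of $\Sigma_{\mathrm{E}}^{\rho}$---are routine once the geometric framework above is established, and the subsequent Section~\ref{s:robustandoptimal} application to $\Pi_{\mathrm{NS}}$ in the $(2,2,2)$ scenario (where $\dim\Pi_{\mathrm{NS}}=8$ and $9$ atoms can indeed be needed) witnesses the tightness of the bound.
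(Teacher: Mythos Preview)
Your proof is correct and follows essentially the same strategy as the paper: lift $\Pi$ to the graph of conditional entropy in $\mathbb{R}^{d+1}$, obtain attainment by compactness of the convex hull, and then use that the optimal point $(\rho,h_{\min}(\rho))$ lies on the boundary to sharpen Carath\'eodory from $d+2$ to $d+1$ atoms. The only notable difference is that you work with $\phi(\Pi)$ directly rather than with $\phi(\Pi_{\mathrm{extr}})$ as the paper does---this sidesteps the paper's closure arguments (passing through $\overline{\Xi_{\mathrm{extr}}}$ and $h(\overline{\Pi_{\mathrm{extr}}})$)---and you invoke the boundary refinement of Carath\'eodory directly instead of via an explicit supporting hyperplane; both are streamlinings of the same argument rather than a different route.
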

\begin{proof}
We will be analysing $h_{\text{min}}(\cdot)$ as a function with domain $\Pi$. It is useful to re-write $h_{\text{min}}(\cdot)$ in the form
$$
h_{\text{min}}(\rho) = \inf_{\{\sigma_i\}_{i\in I}} \sum_i p_iH_{\sigma_i}(C|Z),
$$
where the infimum is taken over all finite subsets $\{\sigma_i\}_{i\in I}\subseteq\Pi$ for which $\sum_{i\in I} p_i \sigma_i = \rho$ for some collection of non-negative $p_i$ summing to 1.\footnote[2]{This is equivalent to the earlier definition if we set $\omega(CZ|e_i)=\sigma_i(CZ)$ and $\omega(e_i)=p_i$, yielding $H_\omega (C|ZE) = \sum_{i,j} H_{\omega(C|z_j,e_i)}(C)\omega(z_j,e_i)=\sum_i \left(\sum_j H_{\omega(C|z_j,e_i)}(C)\omega(z_j|e_i) \right)\omega(e_i)=\sum_i \left(\sum_j H_{\sigma_i(C|z_j)}(C)\sigma_i(z_j) \right)p_i=\sum_i p_iH_{\sigma_i}(C|Z).$}
We first observe that the scope of the infimum can be reduced to consider only sets of $\sigma_i$ belonging to $\Pi_{\mathrm{extr}}$, the set of extreme points of $\Pi$. This follows from the fact that conditional Shannon entropy is concave.\footnote[2]{The proof of theorem 43 in~\cite{PhysRevResearch.2.033465} correctly notes that the concavity of conditional Shannon entropy can be obtained as a specialisation of the concavity of the quantum conditional entropy. It is worth noting, however, that the classical (only) result can be obtained much more quickly and directly as shown in Appendix~\ref{s:concavity}} Hence any expression in the scope of the infimum defining $h_{\text{min}}$ can always be decreased (or at least unchanged) by replacing each $\sigma_i$ in the expression with a convex combination of extremal behaviours replicating $\sigma_i$.

$\Pi$ is a subset of $\mathbb R^N$ where $N=|Z|\times|C|$ is the number of conditional probabilities appearing in the behaviour. In general, $N$ is strictly larger than $\dim \Pi$: the constraint that certain elements of $\Pi$ need to form valid probability distributions reduces the dimension, and no-signalling equalities can reduce the dimension further. So we seek to re-parametrise the elements of $\Pi$ using only the number of coordinates necessary based on its dimension. The (affine) dimension of $\Pi$ is by definition the dimension of the smallest affine space containing it -- that is, the intersection of all affine subspaces of $\mathbb R^N$ containing $\Pi$, which is itself affine space. Let us call this smallest affine space $\mathcal A$. If $\dim \mathcal A=m$, then there is a set of $m$ linearly independent vectors $\vec v_i$ and a displacement/base vector $\vec b$ such that any $\sigma \in  \mathcal A$ has a unique representation as 
\begin{equation}\label{e:uniquerep}
\sigma = \vec b + \sum_{i=1}^n c_i \vec v_i.
\end{equation}
For any $\sigma \in \Pi \subseteq \mathcal A$, then, we can uniquely represent $\sigma$ as a vector of these coefficients, $(c_1, c_2, ..., c_m)$. 

We would like to construct an affine-linear map $g:\mathbb R^N \to \mathbb R^m$ whose restriction to $\mathcal A$ maps the $N$-coordinate vector $\sigma$ to its $m$-coordinate representation $(c_1, c_2, ..., c_m)$.\footnote[2]{Our approach here makes explicit the arguments only alluded to in the proof of Theorem 43 in~\cite{PhysRevResearch.2.033465} through general referral to existence and extension theorems in convex analysis, and takes full advantage of the fact that we are always working in a large ambient $\mathbb R^n$, allowing us to harness the strength of linear algebra.} Our affine-linear map will be represented by a matrix $M$ and a vector $\vec k$ such that $g(\sigma) = M \sigma + \vec k = (c_1, ... , c_m)$. To construct $M$ and $\vec k$, let $A$ be the $N\times m$ matrix whose $m$ columns are the vectors $\vec v_i $ appearing in \eqref{e:uniquerep}. 
Since the columns of $A$ are linearly independent, $A^TA$ is invertible as its kernel consists only of the zero vector:
\begin{equation*}
A^TA \vec v = \vec 0 \,\Rightarrow \,\vec v^T A^TA \vec v = 0 \, \Rightarrow \,(A\vec v)^T A \vec v = 0\, \Rightarrow \,\norm{A\vec v} = 0 \,\Rightarrow \, A\vec v = \vec 0 \,\Rightarrow \,\vec v = \vec 0
\end{equation*}
We can thus define $M = (A^TA)^{-1}A^T$ which will satisfy $MA = I$ ($M$ is a \textit{pseudo-inverse} of $A$), and so $M$ maps the vectors $\vec v_i$ to the standard basis vectors in $\mathbb R^m$. Setting $\vec k = -M\vec b$ yields the desired $g(\cdot)$.

We  point out a couple of properties of $g$ that we will use in our arguments. First, it commutes with convex combinations: For a set of non-negative $p_i$ satisfying $\sum_i p_i = 1$ and a collection of elements $\sigma_i$ of $\mathcal A$,  
\begin{equation}\label{e:convcomm}
\sum_i p_i g(\sigma_i)=g\left(\sum_i p_i \sigma_i  \right),
\end{equation}
 which follows directly from expressing $g$ as $M(\cdot) + \vec k$ and noticing that $\sum_i p_i \vec k = \vec k$. Second, $M$ is injective when restricted to $\mathcal A$, so consequently $g$ is a bijection between $\mathcal A$ and $R^m$ and in particular
\begin{equation}\label{e:comt}
g\left(\sum p_i \sigma_i\right ) = g(\sigma) \quad \Leftrightarrow \quad \sum p_i \sigma_i =\sigma.
\end{equation}

Now, let us consider the following subset of $\mathbb R^{m+1}$,\footnote[2]{The development here is inspired by the arguments in the appendix of \cite{uhlmann:1998}, though the assumptions and conclusions differ somewhat}
$$
\Xi_{\mathrm{extr}}=\{(g(\sigma), H_\sigma(C|Z)):\sigma \in \Pi_{\text{extr}}\},
$$
where the first $m$ coordinates of an element of $\Xi_{\mathrm{extr}}$ are the coordinates of $g(\sigma)$ and the $m+1$ coordinate is $H_\sigma (C|Z)$. Define
$$
\Xi=\text{conv}\left(\Xi_{\mathrm{extr}}\right),
$$
where `conv' denotes the convex hull. $\Xi_{\mathrm{extr}}$ and $\Xi$ are artificial constructions, but by studying their geometry we can prove the existence of a convex combination achieving the infimum defining $h_{\min}(\rho)$.

We first confirm that $\Xi_{\mathrm{extr}}$ is indeed the set of extremal points $\Xi$ (as suggested by our choice in names), i.e., we confirm that  $\Xi_{\mathrm{extr}}$ contains only trivial convex combinations of its elements. To see this, note if $
\sum_i p_i (g(\sigma_i), H_{\sigma_i}(C,Z)) = (g(\sigma), H_{\sigma}(C,Z))
$ holds for some $\sigma_i, \sigma \in \Pi_{\mathrm{extr}}$ and non-negative $p_i$ satisfying $\sum_i p_i = 1$, then we must have $\sum_i p_i g(\sigma_i) = g(\sigma)$ and so $\sum_i p_i \sigma_i = \sigma$ by~\eqref{e:convcomm} and~\eqref{e:comt}. This can only be a trivial convex combination (i.e., all $\sigma_i$ with nonzero $p_i$ coefficient must equal $\sigma$) as the $\sigma_i$ and $\sigma$ are assumed to be in $\Pi_{\mathrm{extr}}$. 

Second, we show that the point $(g(\rho), h_{\min} (\rho))$ is on the boundary of $\Xi$; i.e., that $(g(\rho), h_{\min} (\rho))$ is a limit point of $\Xi$ and also a limit point of $\Xi^C$. To see that we can converge to this point from \textit{within} the set, note that for any set of $\sigma_i \in  \Pi_{\mathrm{extr}}$ satisfying $\sum_i p_i \sigma_i = \rho$, we have by definition $\sum_i p_i (g(\sigma_i), H_{\sigma_i}(C|Z))\in \Xi$ which can be re-expressed as 
$\left(g(\rho), \sum_i p_i H_{\sigma_i}(C|Z)\right)\in \Xi$
by invoking~\eqref{e:convcomm}. By the nature of the infimum defining $h_{\min} (\rho)$, there must be a sequence of such elements of $\Xi$ whose last component forms a non-increasing sequence converging to $h_{\min}(\rho)$; since the first $m$ components are identically $g(\rho)$, this sequence converges to $(g(\rho), h_{\min}(\rho))$ as desired. Similarly, one can also converge to $(g(\rho), h_{\text{min}}(\rho))$ from \textit{outside} the set $\Xi$: $(g(\rho), h_{\text{min}}(\rho)-\epsilon) \notin \Xi$ for all $\epsilon>0$; this is because all elements of $\Xi$ take the form
\begin{eqnarray*}
\sum_i p_i(g(\sigma_i), H_{\sigma_i}(C|Z))=\left(\sum_i p_ig(\sigma_i), \sum_i p_iH_{\sigma_i}(C|Z)\right),
\end{eqnarray*}
for some collection $\sigma_i \in \Xi_{\mathrm{extr}}$ and if the first $m$ coordinates are equal to $g(\rho)$, then by~\eqref{e:convcomm} and~\eqref{e:comt} we must have $\sum_i p_i \sigma_i=\rho$ and so the $m+1$ coordinate is a term contributing to the infimum defining $h_{\text{min}}(\rho)$; it cannot be less than $h_{\text{min}}(\rho)$.

We now would like to demonstrate that $(g(\rho), h_{\text{min}}(\rho))$ is contained in $\Xi$. As a first step, we show that 
\begin{equation}\label{e:inconv}
(g(\rho), h_{\text{min}}(\rho))\in \text{conv}(\overline{ \Xi_{\mathrm{extr}}}),
\end{equation}
where $\text{conv}(\overline{ \Xi_{\mathrm{extr}}})$ denotes the convex hull of the closure of $\Xi_{\mathrm{extr}}$. To see this, first note that $\Xi_{\mathrm{extr}}$ is bounded -- for the $m+1$ coordinate, Shannon entropy is non-negative with a maximum value set by the cardinality of the value space of $C$, and for the first $m$ coordinates, these are contained in the image of the set $\Pi_{\mathrm{extr}}$ through the continuous map $g$ -- and since  $\Pi_{\mathrm{extr}}$ is contained in the compact set $\mathcal P = [0,1]^n$ ($\mathcal P$ contains all probability distributions), its image must be contained in the compact (and thus bounded) set $g(\mathcal P)$. As $\Xi_{\mathrm{extr}}$ is bounded, its closure, denoted $\overline {\Xi_{\mathrm{extr}}}$, must be bounded as well and so is compact. It is a known fact that the convex hull of a compact set in $\mathbb R^n$ is compact, 
so $\text{conv}(\overline{ \Xi_{\mathrm{extr}}})$ is compact -- and so in particular, closed. Finally $\text{conv}(\overline{\Xi_{\mathrm{extr}}})$ clearly contains $\Xi=\text{conv}(\Xi_{\mathrm{extr}})$, the convex hull of a smaller set; as a closed set containing $\Xi$, it will contain the $\Xi$-boundary point $(g(\rho), h_{\text{min}}(\rho))$.

Now we show that this implies containment in $\Xi$ proper. Since the map
$$
h(\rho) := (g(\rho),H_{\rho}(C|Z))
$$
with image in $\mathbb R^{m+1}$ is continuous on the domain of $n$-dimensional probability distributions and  $\Pi_{\mathrm{extr}}$ is bounded, we have $\overline{h(\Pi_{\mathrm{extr}})}\subseteq h(\overline{\Pi_{\mathrm{extr}}})$\footnote[2]{For any bounded subset $S$ in $\mathbb R^n$ (like $\Pi_{\mathrm{extr}}$) and continuous $h$, we have $\overline{h(S)} \subseteq h(\overline {S})$. Proof: Any $x\in \overline{h(S)}$ must be the limit of a sequence in $h(S)$; let $\{s_i\}_{i=1}^\infty \subseteq S$ satisfy $h(s_i) \to x$. Since $\overline S$ is compact, $\{s_i\}_{i=1}^\infty\subseteq S \subseteq \overline S$ has a convergent sub-sequence $\{s_j\}_{j=1}^\infty$ with limit in $\overline S$; let $s\in \overline S$ be this limit. By continuity, $h(s_j) \to h(s)$; considered as a sub-sequence of $\{h(s_i)\}_{i=1}^\infty$, we also have $h(s_j) \to x$ and so uniqueness of limits implies $x=h(s) \in h(\overline S)$.} and since by definition $\Xi_{\mathrm{extr}}= h(\Pi_{\mathrm{extr}})$, we write \begin{equation}\label{e:closcont}
\overline{\Xi_{\mathrm{extr}}}\subseteq h(\overline{\Pi_{\mathrm{extr}}}).
\end{equation}
Now using \eqref{e:inconv}, \eqref{e:closcont}, the definition of $h(\cdot)$, and finally \eqref{e:convcomm}, we can write
\begin{eqnarray}
(g(\rho), h_{\text{min}}(\rho)) &=& \sum_i p_i \vec w_i \text{ for some }\{\vec w_i\}_{i\in I} \subseteq \overline{ \Xi_{\mathrm{extr}}}\notag\\
&=& \sum_i p_i h(\tau_i) \text{ for some }\{\tau_i\}_{i\in I} \subseteq \overline{ \Pi_{\mathrm{extr}}}\notag\\
&=& \sum_i p_i (g(\tau_i), H_{\tau_i}(C|Z)) \text{ for some }\{\tau_i\}_{i\in I} \subseteq \overline{ \Pi_{\mathrm{extr}}}\notag\\
&=& \left(g\left(\sum_i p_i \tau_i\right),\sum_i p_i H_{\tau_i}(C|Z)\right) \text{ for some }\{\tau_i\}_{i\in I} \subseteq \overline{ \Pi_{\mathrm{extr}}}. \label{e:recday}
\end{eqnarray}
Comparing the first expression in the above sequence to the last and applying \eqref{e:comt} implies that $\sum_i p_i \tau_i=\rho$. Now since by assumption $\Pi$ is closed, $ \Pi_{\mathrm{extr}} \subseteq \Pi$ implies $\overline{\Pi_{\mathrm{extr}}} \subseteq \Pi$, so $\Pi  = \text{conv}(\Pi_{\mathrm{extr}})$ implies that elements of $\overline{\Pi_{\mathrm{extr}}} $ can be expressed as convex combinations of elements of $\Pi_{\mathrm{extr}}$. Thus in the expression $\sum_i p_i \tau_i$, if there are any non-extremal $\tau_i$ elements they can be replaced with convex combinations of elements of $\Pi_{\mathrm{extr}}$ to yield a convex combination $\sum_j q_j \sigma_j$ equalling $\rho$ where the concavity of conditional Shannon entropy implies that $\sum_j q_j H_{\sigma_j}(C|Z)$ is not larger than $\sum_i p_i H_{\tau_i}(C|Z)$. However, by~\eqref{e:recday}, $\sum_i p_i H_{\tau_i}(C|Z)=h_{\text{min}}(\rho)$ and since $\sum_j q_j H_{\sigma_j}(C|Z)$ cannot be smaller than $h_{\min}(\rho)$, it must equal $h_{\min}(\rho)$. As $\sum_j q_j \sigma_j=\rho$ and $\sum_j q_j H_{\sigma_j}(C|Z)=h_{\min}(\rho)$, one more application of \eqref{e:convcomm} yields
\begin{eqnarray*}
(g(\rho), h_{\text{min}}(\rho)) &=&  \left(g\left(\sum_j q_j \sigma_j\right),\sum_j q_j H_{\sigma_j}(C|Z)\right) \text{ for some }\{\sigma_j\}_{j\in J} \subseteq \Pi_{\mathrm{extr}}\\
&=& \sum_j q_j \left( g(\sigma_j), H_{\sigma_j}(C|Z)\right) \text{ for some }\{\sigma_j\}_{j\in J} \subseteq \Pi_{\mathrm{extr}},
\end{eqnarray*}
which is in $\Xi$.

The argument thus far demonstrates the existence of a convex combination of $\Pi_{\mathrm{extr}}$ elements explicitly achieving the infimum in the definition of $h_{\min}(\rho)$. We continue with our argument to further demonstrate that the number of required $ \Pi_{\mathrm{extr}}$ elements in such an optimal decomposition is not greater than $m+1$.

We first note that since $(g(\rho), h_{\text{min}}(\rho))$ is on the boundary of the convex set $\Xi$, the supporting hyperplane theorem says there is a supporting hyperplane $H_\rho$ with $(g(\rho), h_{\text{min}}(\rho)) \in H_\rho$ and $\Xi$ entirely on one side of $H_\rho$. Now, notice that if we decompose $(g(\rho), h_{\text{min}}(\rho))$ as a convex combination of $\Xi_{\mathrm{extr}}$ elements, these elements must all lie in the hyperplane $H_\rho$: this is because any elements strictly on one side of $H_\rho$ would have to be counterbalanced by elements strictly on the other side of $H_\rho$ -- but since one side of $H_\rho$ is disjoint from $\Xi$, this is not possible. Applying the same observation to any other element of $H_\rho \cap \Xi$, it follows that $H_\rho \cap \Xi$ is contained in the convex hull of $H_\rho\cap\Xi_{\mathrm{extr}}$. As the reverse inclusion follows from the convexity of $H_\rho$ and the fact that $\Xi=\mathrm{conv}(\Xi_{\mathrm{extr}})$, we can write $\mathrm{conv}(H_\rho\cap\Xi_{\mathrm{extr}})=H_\rho\cap\Xi$. Now since $H_\rho \cap \Xi$ is at most $m$ dimensional ($H_\rho$, as a hyperplane, has one fewer dimension than the ambient $(m+1)$-dimensional space), we can invoke Carath\'eodory's theorem to see that at most $m+1$ points of $H_\rho\cap\Xi_{\mathrm{extr}}$ are required to replicate $(g(\rho), h_{\text{min}}(\rho))$ as a convex combination. Thus we have 
\begin{equation*}
(g(\rho), h_{\text{min}}(\rho) )= \sum_i p_i \vec w_i \text{ for some }\{\vec w_i\}_{i\in I} \subseteq \Xi_{\mathrm{extr}}, |I| \le m+1\\
\end{equation*}
and so recalling the definition of $\Xi_{\mathrm{extr}}$ and invoking~\eqref{e:convcomm} one last time, we can write that for some integer $m^*$ satisfying $1\le m^*\le m+1$,
\begin{eqnarray}
(g(\rho), h_{\text{min}}(\rho)) 
&=& \sum_{i=1}^{m^*}p_i (g(\sigma_i), H_{\sigma_i}(C|Z)) \text{ for some }\{\sigma_i\}_{i=1}^{m^*} \subseteq  \Pi_{\mathrm{extr}}\notag\\
&=& \left(g\left(\sum_{i=1}^{m^*}p_i \sigma_i\right),\sum_{i=1}^{m^*} p_i H_{\sigma_i}(C|Z)\right) \text{ for some }\{\sigma_i\}_{i=1}^{m^*} \subseteq \Pi_{\mathrm{extr}}. \notag
\end{eqnarray}
By~\eqref{e:comt}, $\sum_{i=1}^{m^*}p_i \sigma_i=\rho$ and so $\{\sigma_i\}_{i=1}^{m^*}$ induces the desired distribution $\omega(CZE)$ by setting $\omega(CZ|e_i)=\sigma_i(CZ)$ and $\omega(e_i)=p_i$.
\end{proof}

\begin{theorem*}
Suppose $\Pi$ satisfies the conditions of Theorem~\ref{thm_hmin_achieved} and $\rho$ is in the interior of $\Pi$. Then there exists an entropy estimator whose entropy estimate at $\rho$ is equal to $h_{\min}(\rho)$.
\end{theorem*}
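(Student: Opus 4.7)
The plan is to reuse the geometric machinery from the proof of Theorem~\ref{thm_hmin_achieved}. In that proof we built the affine-linear coordinate map $g\colon\mathbb{R}^N\to\mathbb{R}^m$ on the affine hull of $\Pi$ (with $m=\dim\Pi$), formed the set $\Xi=\mathrm{conv}(\Xi_{\mathrm{extr}})\subseteq\mathbb{R}^{m+1}$, established that $(g(\rho),h_{\min}(\rho))$ lies on the boundary of $\Xi$, and invoked the supporting hyperplane theorem at this point to obtain a closed half-space $\{(y,h):\langle\vec{a},y\rangle+bh\ge c\}$ containing $\Xi$ with equality at $(g(\rho),h_{\min}(\rho))$. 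The strategy now is to turn this supporting hyperplane into an entropy estimator by showing that (i)~it is non-vertical (so $b\neq 0$), (ii)~after dividing through by $b$ it yields an affine-linear lower bound for the conditional Shannon entropy on $\Pi$, tight at $\rho$, and (iii)~any affine-linear functional of $\sigma$ can be written as $\mathbb{E}_\sigma[K(CZ)]$ for an appropriate function $K$.

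First I would argue non-verticality: if $b=0$ we would have $\langle\vec{a},g(\sigma)\rangle\ge c$ for every $\sigma\in\Pi$ with equality at $\sigma=\rho$, so $g(\rho)$ would be a boundary point of $g(\Pi)$ in $\mathbb{R}^m$. But $g$ restricted to the affine hull of $\Pi$ is a bijection onto $\mathbb{R}^m$, and a bijective affine map sends interior points to interior points; the hypothesis that $\rho$ lies in the interior of $\Pi$ therefore forces $g(\rho)$ into the interior of $g(\Pi)$, contradiction. Hence $b\neq 0$, and since $(g(\rho),h_{\min}(\rho))$ is on the \emph{lower} boundary of $\Xi$ with respect to the $h$-coordinate (entropy is bounded, and $\Xi$ contains points arbitrarily close to this boundary point with strictly larger $h$), we must in fact have $b>0$. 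Dividing the supporting inequality by $b$ yields constants $\vec\alpha\in\mathbb{R}^m$ and $\gamma\in\mathbb{R}$ such that
\begin{equation*}
H_{\sigma}(C\lvert Z)\ge \langle\vec\alpha,g(\sigma)\rangle+\gamma\quad\text{for all }\sigma\in\Pi_{\mathrm{extr}},
\end{equation*}
with equality at each $\sigma_i$ in an optimal decomposition of $\rho$. By concavity of conditional Shannon entropy (invoked already in the proof of Theorem~\ref{thm_hmin_achieved}) the same inequality extends to every $\sigma\in\Pi$, and averaging the equality case over the optimal decomposition yields $\langle\vec\alpha,g(\rho)\rangle+\gamma=h_{\min}(\rho)$.

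Finally I would convert this affine-linear bound into an entropy estimator. Since $g(\sigma)=M\sigma+\vec{k}$, the map $\sigma\mapsto\langle\vec\alpha,g(\sigma)\rangle+\gamma$ is affine-linear on $\mathbb{R}^N$ and can be written as $\langle\vec v,\sigma\rangle+\gamma'$ with $\vec v=M^{T}\vec\alpha$ and $\gamma'=\langle\vec\alpha,\vec k\rangle+\gamma$. Because any $\sigma\in\Pi$ is a probability vector with $\sum_{cz}\sigma(cz)=1$, the constant $\gamma'$ can be absorbed into the vector by defining $K(cz)\coloneqq v(cz)+\gamma'$; then $\mathbb{E}_{\sigma}[K(CZ)]=\langle\vec v,\sigma\rangle+\gamma'=\langle\vec\alpha,g(\sigma)\rangle+\gamma\le H_{\sigma}(C\lvert Z)=\mathbb{E}_\sigma[-\log_2\sigma(C\lvert Z)]$ for all $\sigma\in\Pi$, so $K$ satisfies Definition~\ref{def:EntropyEstimator}, and at $\sigma=\rho$ the inequality is tight, giving $\mathbb{E}_{\rho}[K(CZ)]=h_{\min}(\rho)$ as required. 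I expect the main obstacle to be step (i): carefully showing that the interiority of $\rho$ in $\Pi$ transfers through $g$ to interiority of $g(\rho)$ in $g(\Pi)$, and that this does rule out a vertical supporting hyperplane. If interiority is dropped the hyperplane may become vertical and the bound $\mathbb{E}_\rho[K(CZ)]=h_{\min}(\rho)$ degrades to $\mathbb{E}_\rho[K(CZ)]\le h_{\min}(\rho)$, which is the weaker version alluded to in the theorem statement.
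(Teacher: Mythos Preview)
Your proposal is correct and follows essentially the same approach as the paper: both proofs use the supporting hyperplane $H_\rho$ from Theorem~\ref{thm_hmin_achieved}, argue non-verticality from the interiority of $\rho$ (transferred through the affine bijection $g$), and then rewrite the resulting affine functional as $\mathbb{E}_\sigma[K(CZ)]$ by absorbing the constant term into the coefficient vector. The only minor differences are that you explicitly argue the sign of the last coefficient (which the paper leaves implicit) and extend the entropy-estimator inequality from $\Pi_{\mathrm{extr}}$ to $\Pi$ via concavity, whereas the paper goes through the intermediate bound $f_\rho\circ g(\phi)\le h_{\min}(\phi)$ using $(g(\phi),h_{\min}(\phi))\in\Xi$; both routes are valid and equally short.
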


\begin{proof}
We continue from where we left off in the proof of Theorem~\ref{thm_hmin_achieved}, and show that the supporting hyperplane $H_\rho$ discussed in that proof can be used to construct an affine function that is the desired entropy estimator.  Recall that the dimension of $\Pi$, which is embedded in a higher dimensional vector space $\mathbb{R}^{N}$, is defined as the affine dimension of $\mathcal A$, the smallest affine subspace containing $\Pi$. Given this context, the assumption that $\rho$ is in the interior of $\Pi$ means that there exists an open $\epsilon$-ball $U$ in $\mathbb{R}^{N}$ such that $U \cap \mathcal A $, which is open in the subspace topology, is contained in $\Pi$.\footnote[2]{If this assumption is removed, a weaker form of the theorem demonstrating the existence of entropy estimators with estimate $\epsilon$-close to $h_{\min}(\rho)$ can be proved with a similar argument to that of the current proof by invoking Exercise 3.28 of \cite{BV}}

 First, we note that $g(\rho)$ is in the interior of $g(\Pi)$. To see this, consider the restriction $g\!\!\restriction_A$ of $g$ to $\mathcal A$, which is a bijection with affine-linear inverse map $(g\!\!\restriction_{A} )^{-1}:\mathbb R^m \to \mathcal A$ given by $A(\cdot) + \vec b$ (recalling the construction following~\eqref{e:uniquerep} in the proof of Theorem~\ref{thm_hmin_achieved}). This ensures that the set $g\!\!\restriction_A(U\cap \mathcal A)$ must be open, as it is equal to the inverse image of $U\cap \mathcal A$ under the map $(g\!\!\restriction_A)^{-1}$ which is equal to the inverse image of the open set $U$ under the (continuous) map $A(\cdot) + \vec b :\mathbb R^m \to \mathbb R^N$. Hence $g(\rho)$ is contained in the \textit{open} set $g(U\cap \mathcal A)$ which is a subset of $g(\Pi)$ as $U \cap \mathcal A \subseteq \Pi$.
 
Now we take a closer look at $H_\rho$, the supporting hyperplane touching $\Xi$ at $(g(\rho), h_{\text{min}}(\rho))$. As a hyperplane, $H_\rho$ will be equal to the set of $\vec x$ satisfying an equation of the form 
$
\vec a \cdot \vec x = b
$
for some fixed $\vec a \in \mathbb R^{m+1}$ and $b \in \mathbb R$, where $\cdot$ denotes the dot product, and the condition
\begin{equation}\label{e:oneside}
\xi \in \Xi \Rightarrow \vec a \cdot \xi \ge b
\end{equation}
expresses algebraically the notion that $\Xi$ is on one side of $H_\rho$.
We argue that the fact that $g(\rho)$ is in the interior of $g(\Pi)$ implies the $m+1$ component of $\vec a$, denoted $\vec a_{m+1}$, must be nonzero. Assume $\vec a_{m+1}=0$ for a proof by contradiction: since $(g(\rho), h_{\min} (\rho))$ is the point of contact of the supporting hyperplane $H_\rho$, we have $\vec a \cdot (g(\rho), h_{\min} (\rho)) = b$, which implies $\vec a_{[m]}\cdot g(\rho) = b$
where $\vec a_{[m]}\in \mathbb R^m$ denotes the vector consisting of the first $m$ coordinates of $\vec a$. Since the previous paragraph demonstrated there is an open subset of $g(\Pi)$ containing $g(\rho)$, this means $g(\rho)-c\vec a_{[m]}$ for a sufficiently small positive $c$ is equal to $g(\phi)$ for some $\phi$ in $\Pi$. By construction $\phi$ will satisfy $\vec a_{[m]}\cdot g(\phi) < b$, but since $\vec a_{m+1}=0$ this requires $\vec a \cdot (g(\phi), h_{\min} (\phi))<b$ as well. This would imply $(g(\phi), h_{\min} (\phi))\notin \Xi$; however this is a contradiction as the arguments of Theorem~\ref{thm_hmin_achieved} show that for any $\phi \in \Pi$, $(g(\phi), h_{\min} (\phi))$ belongs to $\Xi$ (the arguments of Theorem~\ref{thm_hmin_achieved} demonstrated this for $\rho$ but they apply to any element of $\Pi$).

Having demonstrated $\vec a_{m+1}\ne 0$, we can define a function $f_\rho:\mathbb R^m \to \mathbb R$ as follows:
\begin{equation}\label{e:fdef}
f_\rho(\vec x) = \frac{b-\vec a_{[m]} \cdot \vec x}{\vec a_{m+1}}.
\end{equation}
Composing this function with $g$, we find that $f_\rho \circ g(\rho) = h_{\min}(\rho)$. Furthermore, for general $\phi \in \Pi$ the fact that $(g(\phi), h_{\min} (\phi))\in\Xi$ ensures $f_\rho \circ g(\phi) \le h_{\min}(\phi)$, and $h_{\min}(\phi)\le H_\phi(C|Z)$ by concavity of conditional Shannon entropy, so the map $f_\rho \circ g$, applied to any $\phi\in \Pi$, satisfies 
$$
f_\rho \circ g (\phi) \le H_\phi (C|Z) =  E_\phi(-\log_2 [\phi(C|Z)]).
$$
We now use $f_\rho \circ g$ to construct the desired entropy estimator as follows. We have
\begin{equation*}
f_\rho \circ g (\phi) = \frac{b-\vec a_{[m]} \cdot (M\phi + \vec k)}{\vec a_{m+1}}
=\vec n \cdot \phi + d
\end{equation*}
where $d=(b-\vec a_{[m]}\cdot \vec k)/\vec a_{m+1}$ is a constant and $\vec n = -(1/\vec a_{m+1})M^T\vec a_{[m]}$ is an $N$-dimensional vector; that is, it has one component for each possible distinct outcome pair $c,z$ for the random variable pair $C,Z$. Now we can define $K(c,z) := \vec n_{cz} + d$ to obtain a function of $C,Z$ satisfying 
$$
\mathbb E_\phi[K(CZ)]=\vec n \cdot \phi + d=f_\rho \circ g (\phi)\le E_\phi(-\log_2 [\phi(C|Z)]),
$$ and thus $K$ is an entropy estimator satisfying the conditions of the theorem.
\end{proof}

\section{Concavity of Conditional Shannon Entropy}\label{s:concavity}

It is known that conditional Shannon entropy is concave. For completeness, we provide a brief proof of how this follows from the concavity of (unconditional) Shannon entropy.  Let $\nu$ be a convex combination of $\nu_1$ and $\nu_2$, so that for all $(c,z)\in\mathcal{C}\times\mathcal{Z}$ we have $\nu(c,z) = \lambda\nu_1(c,z) + (1-\lambda)\nu_2(c,z)$ for some $\lambda\in[0,1]$. Then it follows that $\nu(C|z)$ is a convex mixture of $\nu_1(C|z)$ and $\nu_2(C|z)$ for each fixed $z$ for which $\nu(z)>0$:
\begin{equation}\label{e:nu12}
\nu(C|z) = \lambda\frac{\nu_1(z)}{\nu(z)}\nu_1(C|z)+(1-\lambda)\frac{\nu_2(z)}{\nu(z)}\nu_2(C|z),
\end{equation}
where it is straightforward to check that the coefficients of $\nu_1(C|z)$ and $\nu_2(C|z)$ are non-negative numbers summing to one. Using~\eqref{e:nu12} and the concavity of (unconditional) Shannon entropy we have the following.
\begin{align}
H_{\nu}(C\lvert Z) &= \sum_{z}H_{\nu}(C\lvert z)\nu(z)
                   \ge \sum_{z}\left(\lambda\frac{\nu_{1}(z)}{\nu(z)}H_{\nu_{1}}(C\lvert z) + (1-\lambda)\frac{\nu_{2}(z)}{\nu(z)}H_{\nu_{2}}(C\lvert z)\right)\nu(z)\nonumber\\
                   &=\sum_{z}\left(\lambda H_{\nu_{1}}(C\lvert z)\nu_{1}(z)+(1-\lambda)H_{\nu_{2}}(C\lvert z)\nu_{2}(z)\right)
                   = \lambda H_{\nu_{1}}(C\lvert Z) + (1-\lambda)H_{\nu_{2}}(C\lvert Z)
\end{align}

\section{Useful Lemmas}\label{s:Lemmas}

The following lemmas are used for arguments in Appendices~\ref{a:PEFproof_lbscme} and~\ref{a:scme}.

\begin{lemma}\label{lemma1_appendix}
If the distributions $\mu(X)$ and $\mu^{\prime}(X)$ dominate the non-negative function $f\colon\mathcal{X}\to\mathbb{R}^{+}$ with weight $w(f)=\sum_{x\in\mathcal{X}}f(x)=1-\epsilon$ for $\epsilon\in[0,1]$, i.e., $\mu(x)\ge f(x),\,\mu^{\prime}(x)\ge f(x),$ for all $x\in\mathcal{X}$, then $d_\mathrm{TV}(\mu,\mu^{\prime})\le\epsilon$. 
\end{lemma}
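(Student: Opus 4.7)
The plan is to exploit the non-negativity of $\mu - f$ and $\mu' - f$ to control the pointwise differences $|\mu(x) - \mu'(x)|$ by a sum rather than a difference, and then to use the mass accounting: both $\mu$ and $\mu'$ have total mass $1$ while $f$ has total mass $1 - \epsilon$, so each of $\mu - f$ and $\mu' - f$ has total mass exactly $\epsilon$.

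Concretely, the first step will be the pointwise rewrite
\[
|\mu(x) - \mu'(x)| \;=\; |(\mu(x) - f(x)) - (\mu'(x) - f(x))|.
\]
Because $\mu(x) - f(x) \ge 0$ and $\mu'(x) - f(x) \ge 0$ by the domination hypothesis, the triangle inequality yields the upper bound
\[
|\mu(x) - \mu'(x)| \;\le\; (\mu(x) - f(x)) + (\mu'(x) - f(x)).
\]
This is the key estimate: passing from a signed difference to a sum of non-negative quantities is what lets the weight deficit $\epsilon$ enter twice.

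Next I would sum over $x \in \mathcal{X}$. Since $\sum_x \mu(x) = \sum_x \mu'(x) = 1$ and $\sum_x f(x) = 1 - \epsilon$, the right-hand side telescopes to $\epsilon + \epsilon = 2\epsilon$. Dividing by $2$ gives $d_{\mathrm{TV}}(\mu,\mu') \le \epsilon$ by the definition in \eqref{e:d_TV}.

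There is no real obstacle here; the argument is purely a two-line manipulation, and the only subtlety worth flagging in the write-up is the justification of the triangle-inequality step via non-negativity (so that $|a - b| \le a + b$ when $a, b \ge 0$, rather than the usual $|a - b| \le |a| + |b|$ which would give the same bound but obscures where the hypothesis is used).
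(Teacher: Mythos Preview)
Your proposal is correct and follows essentially the same approach as the paper's proof: add and subtract $f(x)$ inside the absolute value, apply the triangle inequality, use the domination hypothesis to drop the absolute values, and sum to obtain $1 - w(f) = \epsilon$. The only cosmetic difference is that the paper writes $|\mu - f| + |\mu' - f|$ and then removes the absolute values via non-negativity, whereas you invoke $|a - b| \le a + b$ for $a,b \ge 0$ directly; these are the same step.
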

\begin{proof}
Using the definition of TV distance we have the required result as shown below.
\begin{align*}
    d_\mathrm{TV}(\mu,\mu^{\prime}) &=\frac{1}{2}\sum_{x\in\mathcal{X}}\abs{\mu(x)-f(x)+f(x)-\mu^{\prime}(x)}\\
    &\le \frac{1}{2}\sum_{x\in\mathcal{X}}\left(\abs{\mu(x)-f(x)} + \abs{\mu^{\prime}(x)-f(x)}\right)\\
    &=\frac{1}{2}\sum_{x\in\mathcal{X}}\left(\mu(x)-f(x) + \mu^{\prime}(x)-f(x)\right)=1-w(f)=\epsilon
\end{align*}
\end{proof}
\begin{lemma}\label{lemma2_appendix}
Suppose the function $f\colon\mathcal{X}\to\mathbb{R}^{+}$ has weight $w(f)=\sum_{x\in\mathcal{X}}f(x)=1-\epsilon$ where $\epsilon\in[0,1]$, and satisfies $f(x)\le p,\,\forall x\in\mathcal{X}$ for some fixed $p\ge 1/\abs{\mathcal{X}}$. Then there exists a distribution $\mu^{\prime}(X)$ such that $f(x)\le\mu^{\prime}(x)\le p$ holds for all $x\in\mathcal{X}$.
\end{lemma}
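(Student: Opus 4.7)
The plan is to construct $\mu'$ explicitly as $\mu'(x) = f(x) + \delta(x)$, where the nonnegative correction $\delta\colon\mathcal X\to\mathbb{R}^{+}$ has two properties: pointwise $\delta(x) \le p - f(x)$ (ensuring the upper bound on $\mu'$), and total mass $\sum_x \delta(x) = \epsilon$ (ensuring $\mu'$ sums to $1$). The hypothesis $f(x)\le p$ makes the slack $s(x):=p-f(x)$ nonnegative, so the problem reduces to showing that one can distribute an extra mass $\epsilon$ among the points of $\mathcal X$ while respecting the per-point cap $s(x)$.

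The main quantitative step is to verify that the total available slack is at least $\epsilon$, i.e.
\begin{equation*}
\sum_{x\in\mathcal X} s(x) = p\lvert\mathcal X\rvert - \sum_{x\in\mathcal X} f(x) = p\lvert\mathcal X\rvert - (1-\epsilon) \ge 1 - (1-\epsilon) = \epsilon,
\end{equation*}
where the inequality uses the hypothesis $p\ge 1/\lvert\mathcal X\rvert$. This is the only place the assumption on $p$ enters, and it is essentially the full content of the lemma; the rest is bookkeeping.

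Once the total slack bound is in hand, I would define $\delta$ by proportional allocation:
\begin{equation*}
\delta(x) = \frac{\epsilon}{\sum_{y\in\mathcal X} s(y)}\, s(x),
\end{equation*}
provided the denominator is positive. By construction $\delta(x)\ge 0$, $\sum_x\delta(x)=\epsilon$, and $\delta(x) \le s(x)$ because the prefactor $\epsilon/\sum_y s(y)$ lies in $[0,1]$. Setting $\mu'(x) = f(x)+\delta(x)$ then gives $f(x)\le \mu'(x)\le f(x)+s(x) = p$ and $\sum_x \mu'(x) = (1-\epsilon)+\epsilon = 1$, as required.

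The only obstacle is the degenerate case $\sum_y s(y)=0$, where the proportional formula is ill-defined. This case forces $f(x)=p$ for every $x\in\mathcal X$, whence $p\lvert\mathcal X\rvert = 1-\epsilon\le 1$, so combined with $p\ge 1/\lvert\mathcal X\rvert$ we get $p = 1/\lvert\mathcal X\rvert$ and $\epsilon = 0$. In this situation $f$ is already the uniform distribution, and setting $\mu'=f$ trivially satisfies the conclusion. Handling this corner case separately (or equivalently adopting the convention $\delta\equiv 0$ when $\epsilon=0$) completes the argument.
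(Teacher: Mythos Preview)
Your proof is correct and is essentially the same as the paper's: your proportional-allocation formula $\mu'(x)=f(x)+\frac{\epsilon}{\sum_y s(y)}\,s(x)$ is algebraically identical to the paper's convex-combination construction $\mu_\lambda(x)=(1-\lambda)f(x)+\lambda p$ with $\lambda=\epsilon/(p\lvert\mathcal X\rvert+\epsilon-1)$, and your degenerate case $\sum_y s(y)=0$ corresponds exactly to the paper's separate handling of $\epsilon=0$.
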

\begin{proof} If $\epsilon=0$, it suffices to take $\mu'(x)=f(x)$. If $\epsilon>0$,
we construct a distribution satisfying the two properties as follows. Define a function $\mu_{\lambda}$ with domain $\mathcal{X}$ as
\begin{equation}\label{lemma2_apdx_1}
    \mu_{\lambda}(x) = (1-\lambda)f(x) + \lambda p
\end{equation}
Then for any fixed $\lambda \in [0,1]$, $\mu_{\lambda}(x)$ is a convex combination of non-negative numbers and thus non-negative for any choice of $x$. We show that there exists a $\lambda\in[0,1]$ for which $\sum_x\mu_{\lambda}(x)=1$, making $\mu_\lambda$ a distribution. It is easy to verify that for $\lambda'=\epsilon/(p\abs{\mathcal{X}} + \epsilon - 1)$ the above function adds up to unity when summed over $x\in\mathcal{X}$. We just need to ensure that $\epsilon/(p\abs{\mathcal{X}} + \epsilon - 1)\in[0,1]$ holds. To see this, note that $p\abs{\mathcal{X}}\ge 1$ so we have $p\abs{\mathcal{X}} +\epsilon -1\ge \epsilon$, and since $\epsilon >0$ the quotient must indeed lie in $[0,1]$. Finally, $\mu_{\lambda'}(X)$ satisfies the bounds in the Lemma: since $f(x)\le p$ for all $x\in\mathcal{X}$, for any $\lambda \in [0,1]$ we have
\begin{align}\label{lemma2_apdx_2}
    p \ge p + (1-\lambda)(f(x)-p) = (1-\lambda)f(x) + \lambda p = f(x) + \lambda(p - f(x)) \ge f(x),\,\forall x\in\mathcal{X}
\end{align}
and the middle term above is $\mu_{\lambda'}(X)$ for $\lambda = \lambda'$.
\end{proof}

\section{Inequalities relating smooth average conditional min-entropy and smooth worst-case conditional min-entropy}\label{a:scme}
Here we state and prove a known inequality that relates two notions of smooth conditional min-entropy. We present this result without structuring random variables as stochastic sequences, i.e., instead of considering distributions of $\mathbf{C,Z},E$ we consider distributions of $X,Y$. The result and its proof can be adapted to the more general case involving sequence of random variables.

For a distribution $\mu\colon\mathcal{X}\times\mathcal{Y}\to[0,1]$ of $X,Y$ and the set $\mathcal{B}^{\epsilon}(\mu)$ of distributions of $X,Y$ defined as $\mathcal{B}^{\epsilon}(\mu)\coloneqq\{\sigma\colon\mathcal{X}\times\mathcal{Y}\to[0,1]\mid d_{\mathrm{TV}}(\mu,\sigma)\le\epsilon\}$, the $\epsilon$-smooth average conditional min-entropy is:
\begin{equation*}
\mathbb{H}_{\infty,\mu}^{\mathsf{avg},\epsilon}(X\lvert Y)\coloneqq\max_{\sigma\in\mathcal{B}^{\epsilon}(\mu)}\bigg[-\log_{2}\Big(\sum_{y\in\mathcal{Y}}\max_{x\in\mathcal{X}}\sigma(x\lvert y)\sigma(y)\Big)\bigg].
\end{equation*}
A stricter definition of smooth conditional min-entropy than the one stated above is the $\epsilon$-smooth ``worst-case'' conditional min-entropy, introduced in~\cite{10.1007/11593447_11}. It reads as follows:
\begin{equation}\label{e:smoothwst}
\mathbb{H}_{\infty,\mu}^{\mathsf{wst},\epsilon}(X\lvert Y) = \max_{\sigma\in\mathcal{B}^{\epsilon}(\mu)}\left[-\log_{2}\Big(\max_{x\in\mathcal{X},y\in\mathcal{Y}}\sigma(x\lvert y)\Big)\right].
\end{equation}
For purposes of randomness extraction or scenarios involving predictability of an adversary, the smooth average conditional min-entropy suffices. One can show that the notions of \emph{average-case} and \emph{worst-case} are equivalent up to an additive factor~\cite{10.1007/978-3-540-24676-3_31}. This is formalised in Proposition~\ref{l:avg_wst_scme}. 

\begin{proposition}\label{l:avg_wst_scme}
For a distribution $\mu\colon\mathcal{X}\times\mathcal{Y}\to[0,1]$ of $X,Y$ and $1\ge\epsilon\ge0,\,1>\epsilon^{\prime}>0$, the smooth \emph{worst case} conditional min-entropy and smooth \emph{average case} conditional min-entropy are related by the following inequalities:
\begin{equation}\label{e:avgMinEntropy_wstMinEntropy}
    \mathbb{H}_{\infty,\mu}^{\mathsf{wst},\epsilon}(X\lvert Y) \le \mathbb{H}_{\infty,\mu}^{\mathsf{avg},\epsilon}(X\lvert Y) \le \mathbb{H}_{\infty,\mu}^{\mathsf{wst},\epsilon + \epsilon^{\prime}}(X\lvert Y) + \log_{2}(1/\epsilon^{\prime})
\end{equation}
\end{proposition}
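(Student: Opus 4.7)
The proposition asserts two inequalities; I will handle them separately, since they require very different approaches.

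For the first (easier) inequality $\mathbb{H}_{\infty,\mu}^{\mathsf{wst},\epsilon}(X\lvert Y) \le \mathbb{H}_{\infty,\mu}^{\mathsf{avg},\epsilon}(X\lvert Y)$, my plan is to simply observe that for any distribution $\sigma$ of $X,Y$, the weighted average $\sum_{y}\bigl(\max_{x}\sigma(x\lvert y)\bigr)\sigma(y)$ is bounded above by $\max_{x,y}\sigma(x\lvert y)\cdot\sum_{y}\sigma(y)=\max_{x,y}\sigma(x\lvert y)$. Taking $-\log_{2}(\cdot)$ reverses the inequality, and maximising both sides over $\sigma\in\mathcal{B}^{\epsilon}(\mu)$ yields the bound.

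For the second inequality, the plan is to take a distribution $\sigma^{\ast}\in\mathcal{B}^{\epsilon}(\mu)$ achieving the maximum in the definition of $\mathbb{H}_{\infty,\mu}^{\mathsf{avg},\epsilon}(X\lvert Y)$, with average-conditional max probability $M=2^{-\mathbb{H}_{\infty,\mu}^{\mathsf{avg},\epsilon}(X\lvert Y)}$, and use it to construct a $\sigma'\in\mathcal{B}^{\epsilon+\epsilon'}(\mu)$ satisfying $\max_{x,y}\sigma'(x\lvert y)\le M/\epsilon'$. The construction is by truncation: defining $p(y)\coloneqq\max_{x}\sigma^{\ast}(x\lvert y)$, Markov's inequality applied to $p(Y)$ under $\sigma^{\ast}(Y)$ gives $\sigma^{\ast}(Y_{\mathrm{bad}})\le\epsilon'$, where $Y_{\mathrm{bad}}\coloneqq\{y:p(y)> M/\epsilon'\}$. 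I will then define $\sigma'$ by zeroing out $\sigma^{\ast}$ on $Y_{\mathrm{bad}}$ and rescaling by $Z\coloneqq\sigma^{\ast}(Y\setminus Y_{\mathrm{bad}})\ge 1-\epsilon'$, so that $\sigma'(x,y)=\sigma^{\ast}(x,y)/Z$ for $y\notin Y_{\mathrm{bad}}$ and $0$ otherwise.

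Verifying the two required properties of $\sigma'$ will be routine: a direct calculation of $d_{\mathrm{TV}}(\sigma',\sigma^{\ast})$ splits into the $Y_{\mathrm{bad}}$ and $Y\setminus Y_{\mathrm{bad}}$ contributions which each evaluate to $(1-Z)/2$, giving $d_{\mathrm{TV}}(\sigma',\sigma^{\ast})=1-Z\le\epsilon'$, and then the triangle inequality delivers $\sigma'\in\mathcal{B}^{\epsilon+\epsilon'}(\mu)$. Rescaling preserves conditional probabilities on $Y\setminus Y_{\mathrm{bad}}$, so $\sigma'(x\lvert y)=\sigma^{\ast}(x\lvert y)\le M/\epsilon'$ there, while on $Y_{\mathrm{bad}}$ we have $\sigma'(y)=0$ and thus $\sigma'(x\lvert y)=0$ by convention. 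This yields $\max_{x,y}\sigma'(x\lvert y)\le M/\epsilon'$, and taking $-\log_{2}(\cdot)$ gives $\mathbb{H}_{\infty,\mu}^{\mathsf{wst},\epsilon+\epsilon'}(X\lvert Y)\ge -\log_{2}(M)+\log_{2}(\epsilon')=\mathbb{H}_{\infty,\mu}^{\mathsf{avg},\epsilon}(X\lvert Y)-\log_{2}(1/\epsilon')$, which is the required bound after rearrangement.

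The main obstacle is the truncation argument: one must correctly choose the Markov threshold $M/\epsilon'$ so that the bad set has $\sigma^{\ast}$-mass at most $\epsilon'$, while simultaneously ensuring the rescaling does not inflate $\max_{x,y}\sigma'(x\lvert y)$ beyond $M/\epsilon'$. The balance works because rescaling preserves the conditional distributions $\sigma'(\,\cdot\,\lvert y)$ exactly. Care must also be taken that $\sigma'$ is a genuine distribution (not a sub-distribution) since the definition of $\mathcal{B}^{\epsilon}(\mu)$ implicitly requires this; the rescaling by $Z$ accomplishes this while keeping the total-variation perturbation controlled.
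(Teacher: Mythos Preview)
Your proposal is correct and essentially identical to the paper's proof: both handle the first inequality by the trivial pointwise bound, and both prove the second by taking a witness $\sigma^{\ast}$ of the smooth average min-entropy, applying Markov's inequality to $\max_{x}\sigma^{\ast}(x\lvert y)$ with threshold $M/\epsilon'$, truncating away the bad $y$'s, rescaling to a genuine distribution, and invoking the triangle inequality. The only cosmetic difference is that the paper bounds $d_{\mathrm{TV}}(\sigma',\sigma^{\ast})$ via a sub-probability domination lemma (Lemma~\ref{lemma1_appendix}) rather than your direct computation; both yield the same bound $1-Z\le\epsilon'$.
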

\begin{proof}
The first inequality holds immediately, since for every $\sigma(X,Y)\in\mathcal{B}^{\epsilon}(\mu)$ we have
\begin{equation}\label{e:max_ge_avg}
\max_{x\in\mathcal{X},y\in\mathcal{Y}}\sigma(x\lvert y) \ge \sum_{y\in\mathcal{Y}}\big(\max_{x\in\mathcal{X}}\sigma(x\lvert y)\big)\sigma(y).
\end{equation}
Taking $-\log_{2}$ of both sides we obtain $\mathbb{H}_{\infty,\sigma}^{\mathsf{wst}}(X\lvert Y)\le\mathbb{H}_{\infty,\sigma}^{\mathsf{avg}}(X\lvert Y)$, where $\mathbb{H}_{\infty,\sigma}^{\mathsf{wst}}(X\lvert Y)$ is the bracketed quantity in~\eqref{e:smoothwst}, and since this inequality holds for every $\sigma\in\mathcal{B}^{\epsilon}(\mu)$, we have $\mathbb{H}_{\infty,\mu}^{\mathsf{wst},\epsilon}(X\lvert Y)\le \mathbb{H}_{\infty,\mu}^{\mathsf{avg},\epsilon}(X\lvert Y)$. For the second inequality of~\eqref{e:avgMinEntropy_wstMinEntropy}, we want to show that $\mathbb{H}_{\infty,\mu}^{\mathsf{wst},\epsilon + \epsilon^{\prime}}(X\lvert Y)\ge \mathbb{H}_{\infty,\mu}^{\mathsf{avg},\epsilon}(X\lvert Y)-\log(1/\epsilon^{\prime})$ holds. Suppose the distribution $\nu(X,Y)\in\mathcal{B}^{\epsilon}(\mu)$ witnesses\footnote[2]{This is permissible due to the compactness of $\mathcal{B}^{\epsilon}(\mu)$ and the continuity of $\mathbb{H}_{\infty,\mu}^{\mathsf{avg}}(X\lvert Y)$.} $\mathbb{H}_{\infty,\mu}^{\mathsf{avg},\epsilon}(X\lvert Y)$, i.e., $\mathbb{H}_{\infty,\mu}^{\mathsf{avg},\epsilon}(X\lvert Y)=\mathbb{H}_{\infty,\nu}^{\mathsf{avg}}(X\lvert Y)$. It suffices to construct a distribution $\sigma(X,Y)\in\mathcal{B}^{\epsilon^{\prime}}(\nu)$ such that $\allowbreak\max_{x,y}\sigma(x\lvert y)\le p/\epsilon^{\prime}$ holds, where $p=\sum_{y}\max_{x}\nu(x\lvert y)\nu(y)=\mathbb{E}_{\nu(Y)}[\max_{x\in\mathcal{X}}\nu(x\lvert Y)]$. We begin by defining the sub-probability distribution $\tilde{\nu}(X,Y)$ as shown below.
\begin{equation}\label{e:subprob}
    \tilde{\nu}(x,y) = \nu(x,y)[\![\max_{x\in\mathcal{X}}\nu(x\lvert y) \le p/\epsilon^{\prime}]\!],
\end{equation}
where the notation $[\![\cdots ]\!]$ represents the function that evaluates to 1 if the enclosed condition holds, and zero if it does not. Basically, the definition of $\tilde{\nu}$ in~\eqref{e:subprob} involves discarding $\nu$ corresponding to those $y\in\mathcal{Y}$ for which $\max_{x}\nu(x\lvert y) > p/\epsilon^{\prime}$ holds. An application of Markov's inequality then shows that the weight $w(\tilde{\nu})$ of $\tilde{\nu}$ is at least $1-\epsilon'$:
\begin{align}\label{e:lb_subprob}
    w(\tilde{\nu}) = \sum_{x,y}\tilde{\nu}(x,y) &= \sum_{y\in\mathcal{Y}\colon \max_{x}\nu(x\lvert y)\le p/\epsilon^{\prime}}\sum_{x}\nu(x,y)\nonumber \\
    &= \mathbb{P}_{\nu(Y)}\left(\max_{x\in\mathcal{X}}\nu(x\lvert Y)\le p/\epsilon^{\prime}\right) \ge 1-\epsilon^{\prime}
\end{align}
Since $p = \mathbb{E}_{\nu(Y)}[\max_{x}\nu(x\lvert Y)]$,~\eqref{e:lb_subprob} follows. One way to now construct a distribution $\sigma\in\mathcal{B}^{\epsilon^{\prime}}(\mu)$ satisfying $\max_{x,y}\sigma(x\lvert y)\le p/\epsilon^{\prime}$ is to scale $\tilde{\nu}$, i.e., we define $\sigma(X,Y)$ as $\sigma(x,y)=\tilde{\nu}(x,y)/w(\tilde{\nu})$. Note that since $w(\tilde{\nu})\ge 1-\epsilon^{\prime}$ and $\epsilon^{\prime}\in(0,1)$, $w(\tilde{\nu})$ is positive; also, $\sigma\ge\tilde{\nu}$ holds since $w(\tilde{\nu})\le 1$. Together with the fact that $\nu\ge\tilde{\nu}$, we can use Lemma~\ref{lemma1_appendix} to show that $d_\mathrm{TV}(\sigma,\nu)\le 1-w(\tilde{\nu})\le\epsilon^{\prime}$. By definition of $\sigma$ we have $\sigma(x,y)\le p\sigma(y)/\epsilon^{\prime}$ for all choices of $x,y$, where $\sigma(y) = \frac{\nu(y)}{w(\tilde{\nu})}[\![\max_{x}\nu(x\lvert y)\le p/\epsilon^{\prime}]\!]$ for each $y$. With the convention that $\sigma(x\lvert y)$ is assigned the value $0$ when $\sigma(y)=0$, we then have $\sigma(x\lvert y)\le p/\epsilon^{\prime}$ for all choices of $x,y$. Membership of $\sigma$ in the set $\mathcal{B}^{\epsilon + \epsilon^{\prime}}(\mu)$ follows from the triangle inequality $d_\mathrm{TV}(\mu,\sigma)\le d_\mathrm{TV}(\mu,\nu)+d_\mathrm{TV}(\nu,\sigma)\le\epsilon + \epsilon^{\prime}$. And so we have constructed a distribution in $\mathcal{B}^{\epsilon+\epsilon^{\prime}}(\mu)$ such that $\max_{x,y}\sigma(x\lvert y)\le p/\epsilon^{\prime}$. Taking $-\log_{2}$ on both sides, we get $\mathbb{H}_{\infty,\sigma}^{\mathsf{wst}}(X\lvert Y)\ge \mathbb{H}_{\infty,\nu}^{\mathsf{avg}}(X\lvert Y)-\log_{2}(1/\epsilon^{\prime})$. As mentioned earlier, $\nu\in\mathcal{B}^{\epsilon}(\mu)$ witnesses $\mathbb{H}_{\infty,\mu}^{\mathsf{avg},\epsilon}(X\lvert Y)$, hence we have shown:
\begin{equation}\label{e:wstEntsigma_lb}
    \mathbb{H}_{\infty,\sigma}^{\mathsf{wst}}(X\lvert Y)\ge \mathbb{H}_{\infty,\mu}^{\mathsf{avg},\epsilon}(X\lvert Y)-\log_{2}(1/\epsilon^{\prime})
\end{equation}
Since, by definition, the smooth \emph{worst case} conditional min-entropy involves a maximum of the left hand side of~\eqref{e:wstEntsigma_lb} over the set $\mathcal{B}^{\epsilon + \epsilon^{\prime}}(\mu)$, this shows that $\mathbb{H}_{\infty,\mu}^{\mathsf{wst},\epsilon+\epsilon^{\prime}}(X\lvert Y)\ge\mathbb{H}_{\infty,\mu}^{\mathsf{avg},\epsilon}(X\lvert Y)-\log_{2}(1/\epsilon^{\prime})$ holds, from which the second inequality in~\eqref{e:avgMinEntropy_wstMinEntropy} follows. 
\end{proof}
In the asymptotic limit of a large number $n$ of trials, constant factors vanish in measuring per-trial min entropy, and since $\epsilon'$ can be made arbitrarily small, \eqref{e:avgMinEntropy_wstMinEntropy} enables us to consider either definition when considering asymptotic performance.

\section{Proof of Proposition \ref{beta_threshold}}\label{s:Proof_beta_threshold}

\noindent\textbf{Proposition.} 
For the set of behaviours $\Pi_{\mathrm{NS}}$, the PEF optimisation in~\eqref{eq_PEF_opt} is independent of the power $\beta$ for $\beta\ge\log_{2}(4/3)$.

\begin{proof}
For a fixed value of $n$ and $\epsilon$ the optimisation problem in~\eqref{eq_PEF_opt} is equivalent to the following:
\begin{align}\label{e:PEFopt_fixed_nb}
\text{Maximise: }& \mathbb{E}_{\rho}[\log_{2}(F(ABXY))]\nonumber\\
\text{Subject to: }& \mathbb{E}_{\mu_{\mathrm{PR}}^i}[F(ABXY)\mu^i_{\mathrm{PR}}(AB\lvert XY)^{\beta}]\le 1, \text{ for all } i \in \{0,1\}^3,\nonumber\\
& \mathbb{E}_{\mu_{\mathrm{LD}}^j}[F(ABXY)\mu^j_{\mathrm{LD}}(AB\lvert XY)^{\beta}]\le 1,\text{ for all }j \in \{0,1\}^4,\nonumber\\
& F(abxy)\ge 0, \, \forall a,b,x,y\in\{0,1\},
\end{align} 
where the constraints range over the extremal points of $\Pi_{\mathrm{NS}}$ as given in \eqref{eq_PRbox} and \eqref{eq_LDbox}. We show that for $\beta\ge \log_2(4/3)$, the above constraints are equivalent to 
\begin{align}\label{e:equivconstraints}
&\mathbb{E}_{\mu_{\mathrm{LD}}^j}[F(ABXY)\mu^j_{\mathrm{LD}}(AB\lvert XY)]\le 1,\text{ for all }j \in \{0,1\}^4,\nonumber\\
& F(abxy)\ge 0, \, \forall a,b,x,y\in\{0,1\},
\end{align}
noticing that $\beta $ does not appear in \eqref{e:equivconstraints}.

It is immediate to see that the constraints of \eqref{e:PEFopt_fixed_nb} imply \eqref{e:equivconstraints}: since $\mu(AB|XY)$ is always zero or one for local deterministic distributions, in this case we have $\mu(AB|XY)^\beta= \mu(AB|XY)$ and thus for each choice of $j$ we have $\mathbb{E}_{\mu_{\mathrm{LD}}^j}[F(ABXY)\mu^j_{\mathrm{LD}}(AB\lvert XY)^{\beta}]\le 1$ implying the non-$\beta$ counterpart ${\mathbb{E}_{\mu_{\mathrm{LD}}^j}[F(ABXY)\mu^j_{\mathrm{LD}}(AB\lvert XY)]\le 1}$ in \eqref{e:equivconstraints}. Now we demonstrate the reverse implication. First, the argument just given also works in the opposite direction to show that the the non-$\beta$ constraints of \eqref{e:equivconstraints} imply the corresponding constraints (with $\beta$) in \eqref{e:PEFopt_fixed_nb}. We thus need only to show that the $\mathbb{E}_{\mu_{\mathrm{PR}}^i}[\cdots]\le 1$ in \eqref{e:PEFopt_fixed_nb} are implied as well. We give a specific argument for the PR box given in Table \ref{tab:Simplex_PR1}; symmetric arguments apply for the other PR boxes.  Since any distribution $\mu(ABXY)$ is the behaviour $\mu(AB\lvert XY)$ times a fixed settings distribution $\sigma_{\mathrm{s}}(XY)$, we can express the product $F(abxy)\sigma_{\mathrm{s}}(xy)$ as $F^{\prime}(abxy)$ for all choices of $(a,b,x,y)$ when the expectation functional $\mathbb{E}[\cdot]$ is written out in full. The constraints \eqref{e:equivconstraints} then imply, by summing over the eight of them corresponding to the eight local deterministic distributions appearing in Table \ref{tab:Simplex_PR1} (a set we denote $\mathsf{LD}_{1}$), that 
\begin{align}\label{e:PEFopt_re1}
\sum_{a,b,x,y}F^{\prime}(abxy)\sum_{\mu_{\mathrm{LD}}\in\mathsf{LD}_{1}}\mu_{\mathrm{LD}}(ab\lvert xy)^{2} &\le 8.
\end{align}
Noticing that the inner sum above is always 3 or 1 (this corresponds to the number of 1s appearing in each column of Table \ref{tab:Simplex_PR1}, with the result given in Table \ref{t:ldtab}), we can now rewrite~\eqref{e:PEFopt_re1} as $ 3M+N\le 8$, where $M = F'(0000)+F'(0001)+F'(0010)+F'(0111)+F'(1011)+F'(1100)+F'(1101)+F'(1110)$ and $N=F'(0011)+F'(1111)+F'(1001)+F'(0101)+F'(0110)+F'(1010)+F'(0100)+F'(1000)$.
\begin{table}[H]
\begin{minipage}{0.45\linewidth}
\centering
{\renewcommand{\arraystretch}{1.24}%
\begin{tabular}{|r|r|c|c|c|c|}
\cline{3-6}
\multicolumn{1}{r}{} & \multicolumn{1}{r}{} & \multicolumn{4}{|c|}{$ab$}\\
\cline{3-6}
\multicolumn{1}{r}{} & \multicolumn{1}{r}{} & \multicolumn{1}{|c}{$00$} & \multicolumn{1}{c}{$01$} & \multicolumn{1}{c}{$10$} & \multicolumn{1}{c|}{$11$}\\
\hline
\multirow{4}{*}{$xy$} & $00$ & $2^{-(1+\beta)}$ & $0$ & $0$ & $2^{-(1+\beta)}$ \\
\cline{3-6}
 & $01$ & $2^{-(1+\beta)}$ & $0$ & $0$ & $2^{-(1+\beta)}$ \\
\cline{3-6}
 & $10$ & $2^{-(1+\beta)}$ & $0$ & $0$ & $2^{-(1+\beta)}$ \\
\cline{3-6}
 & $11$ & $0$ & $2^{-(1+\beta)}$ & $2^{-(1+\beta)}$ & $0$ \\
\hline
\end{tabular}}
\caption{Tabular representation for $\displaystyle\mu_{\mathrm{PR},1}(AB|XY)^{1+\beta}$.}\label{t:prtab}
\end{minipage}
\hfill%
\begin{minipage}{0.45\linewidth}
\centering
{\renewcommand{\arraystretch}{1.24}%
\begin{tabular}{|r|r|c|c|c|c|}
\cline{3-6}
\multicolumn{1}{r}{} & \multicolumn{1}{r}{} & \multicolumn{4}{|c|}{$ab$}\\
\cline{3-6}
\multicolumn{1}{r}{} & \multicolumn{1}{r}{} & \multicolumn{1}{|c}{$00$} & \multicolumn{1}{c}{$01$} & \multicolumn{1}{c}{$10$} & \multicolumn{1}{c|}{$11$}\\
\hline
\multirow{4}{*}{$xy$} & $00$ & $3$ & $1$ & $1$ & $3$ \\
\cline{3-6}
 & $01$ & $3$ & $1$ & $1$ & $3$ \\
\cline{3-6}
 & $10$ & $3$ & $1$ & $1$ & $3$ \\
\cline{3-6}
 & $11$ & $1$ & $3$ & $3$ & $1$ \\
\hline
\end{tabular}}
\caption{Tabular representation of the values of the sum $\displaystyle\sum_{\mu_{\mathrm{LD}}\in\mathsf{LD}_{1}}\mu_{\mathrm{LD}}(ab\lvert xy)^{2}$.}\label{t:ldtab}
\end{minipage}
\end{table}
Since $M,N$ are both non-negative, we can drop $N$ to find that $3M+N\le 8$ implies $M\le8/3=2^{1+\log_2(4/3)}$ which in turn implies $M\le 2^{1+\beta}$ whenever $\beta\ge\log_{2}(4/3)$. Since $\mathbb{E}_{\mu_{\mathrm{PR},1}}[F(ABXY)\mu_{\mathrm{PR},1}(AB\lvert XY)^{\beta}]$ is equal to $M(1/2)^{1+\beta}$ (see Table \ref{t:prtab}) the constraint $\mathbb{E}_{\mu_{\mathrm{PR},1}}[\cdots]\le 1$ follows. \end{proof}

We remark that this inequality condition $\beta\ge\log_{2}(4/3)$ is tight in the following sense: there exists a non-negative function $F(abxy)$ violating the PR box constraint of $\mu_{\mathrm{PR},1}$ appearing in \eqref{e:PEFopt_fixed_nb} for any $\beta<\log_{2}(4/3)$, while satisfying \eqref{e:equivconstraints} for any positive $\beta$ -- and consequently satisfying all the constraints of \eqref{e:PEFopt_fixed_nb} for $\beta\ge\log_{2}(4/3)$ per the argument in the above proof. Thus the feasible set of \eqref{e:PEFopt_fixed_nb} always excludes this particular choice of $F$ for $\beta<\log_{2}(4/3)$ and includes it for $\beta\ge\log_{2}(4/3)$. This function is $F(abxy)=(1/3)[\![a\oplus b = xy]\!]\sigma_s(xy)^{-1}$; fixing $\beta=\log_{2}(4/3)-\epsilon$ for some choice of $\epsilon$ in the interval $(0,\log_{2}(4/3)$, we can check that all the LD boxes satisfy the inequality $\sum_{a,b,x,y}F^{\prime}(abxy)\mu_{\mathrm{LD}}(ab\lvert xy)^{1+\beta}\le 1$; the value of the expression is always either 1/3 or 1. However, for the PR box $\mu_{\mathrm{PR},1}$ in Table~\ref{tab:1PR_8LD} we obtain $\sum_{a,b,x,y}F^{\prime}(abxy)\mu_{\mathrm{PR},1}(ab\lvert xy)^{1+\beta}=2^{\epsilon}>1$, which is a violation.  

\newpage
\printbibliography

\end{document}